\def\doi{8 (2:03) 2012}
\title[Intuitionistic implication makes model checking hard]{Intuitionistic implication makes model checking hard\rsuper*}
\author[M.~Mundhenk]{Martin~Mundhenk}
\author[F~.Wei\ss{}]{Felix Wei\ss{}}
\address{Universit\"at Jena, Institut f\"ur Informatik, Jena, Germany}
\email{\{martin.mundhenk,felix.weiss\}@uni-jena.de}
\keywords{complexity, intuitionistic logic, model checking, \p-completeness}
\subjclass{F.2, F.4}
\newcommand{\classfont}[1]{\ensuremath{\mathsf{#1}}\xspace}
\newcommand{\NCi}{\classfont{NC^1}}
\newcommand{\ACi}{\classfont{AC^1}}
\newcommand{\Log}{\classfont{L}}
\newcommand{\NLog}{\classfont{NL}}
\newcommand{\LOGCFL}{\classfont{LOGCFL}}
\newcommand{\p}{\classfont{P}}
\newcommand{\NP}{\classfont{NP}}
\newcommand{\PSPACE}{\classfont{PSPACE}}
\newcommand{\langfont}[1]{\mbox{\textsc{#1}}\xspace}
\newcommand{\AGAP}{\langfont{Agap}}
\newcommand{\ASAGAP}{\langfont{AsAgap}}
\newcommand{\EQVF}{\langfont{EqVformula}}
\newcommand{\Lklasse}[2][]{\ensuremath{\mathrm{#2}#1}\xspace}
\newcommand{\Siv}{\Lklasse[4]{S}}     % die Modallogik S4
\newcommand{\Sivii}{\Lklasse[4.2]{S}}     % die Modallogik S4.2
\newcommand{\KK}{\Lklasse{K}}
\newcommand{\IPC}{\Lklasse{IPC}}
\newcommand{\BPL}{\Lklasse{BPL}}
\newcommand{\FPL}{\Lklasse{FPL}}
\newcommand{\KC}{\Lklasse{KC}}
\newcommand{\Kiv}{\Lklasse[4]{K}}
\newcommand{\PrL}{\Lklasse{PrL}}
\newcommand{\imodels}{\models_{\mathsf{i}}}
\newcommand{\mmodels}{\models_{\mathsf{m}}}
\newcommand{\Model}[1]{\ensuremath{\mathcal{#1}}}
\newcommand{\Klasse}[1]{\ensuremath{\mathcal{#1}}}         \newcommand{\mKlasse}[1]{\mathcal{#1}}
\newcommand{\PROP}{\operatorname{PROP}}
\newcommand{\mlogred}{\leq_{\mathrm{m}}^{\mathrm{log}}}
\newcommand{\PM}{\mathfrak{P}}
\newcommand{\topnode}{\mathit{top}}
\newlength\problemlength
\newcommand\problemdef[3]{%
\begin{list}{}{\labelwidth\problemlength \labelsep.7em \rightmargin1.5em
\leftmargin\problemlength \advance\leftmargin by3em%2em
%\leftmargin - \labelwidth - labelsep ergibt den linken Abstand zum Textrand
\parsep0ex \itemsep.2ex plus.1ex}
\item[{\sl Problem:\hfill}] #1 \item[{\sl Input:  \hfill}] #2
\item[{\sl Output: \hfill}] #3
\end{list}
}
\newcommand\dproblem[3]{%
\vspace{1ex}
\begin{list}{}{\labelwidth\problemlength \labelsep.7em \rightmargin1.5em
\leftmargin\problemlength \advance\leftmargin by3em
\parsep0ex \itemsep.2ex plus.1ex}
\item[{\sl Problem:\hfill}] #1 \item[{\sl Input:  \hfill}] #2
\item[{\sl Question: \hfill}] #3
\end{list}
\vspace{1ex}
}
\newcommand{\fe}[1]{\ensuremath{\mbox{$#1$}\mbox{-\sc{\small KMc}}}}
\newcommand{\apath}{\mathit{apath}}
\newcommand{\lp}{\ensuremath{\emph{\textit{lp}}}}
\newcommand{\Top}{\ensuremath{\emph{\textit{Top}}}}
\newcommand{\eps}{\varepsilon}
\newcommand{\Klauf}{\Big{(}\hspace{3px}}
\newcommand{\Klzu}{\hspace{3px}\Big{)}}
\newcommand{\klauf}{\big{(}\hspace{2px}}
\newcommand{\klzu}{\hspace{2px}\big{)}}
\newcommand{\gklauf}{\big{\{}\hspace{2px}}
\newcommand{\gklzu}{\hspace{2px}\big{\}}}
\newtheorem{claim}{Claim}{\itshape}{\rmfamily}
\newcommand{\qedclaim}{\hspace{2mm}\rule{2mm}{2mm}}
{\itshape}{\rmfamily}
\newenvironment{mathe}{\renewcommand{\arraystretch}{1.5}\vspace{1.2ex}\begin{tabular}{>{$}l<{$}>{$}l<{$}>{$}l<{$}}}{\end{tabular}\vspace{1.2ex}\newline\renewcommand{\arraystretch}{0}}
\newcommand{\auslassen}[1]{}
\newcommand{\GT}{\mathit{gt}}
\newcommand{\iImpl}{\rightarrowtriangle}
\definecolor{mod}{HTML}{0000AA}%{33FF66}
\newcommand{\markit}[1]{{\color{mod}#1}}
\renewcommand{\markit}[1]{#1} %farbmarkierungen deaktivieren -> entferne % 
\tikzstyle{vertex}=[circle,draw=black,minimum size=20pt,inner sep=0pt]
\tikzstyle{vertex split}=[circle split,draw=black,minimum size=20pt,inner sep=0pt]
\tikzstyle{selected vertex} = [vertex, fill=black!24]
\tikzstyle{edge} = [draw,thick,->]
\tikzstyle{weight} = [font=\small]
\tikzstyle{selected edge} = [draw,line width=1.6pt,->]
\tikzstyle{dashed edge} = [draw,densely dashed,line width=1.5pt,->,black!40]
\tikzstyle{selecteddashed edge} = [draw,densely dashed,line width=1.2pt,->,black!70]
\tikzstyle{dotted edge} = [draw,densely dotted,line width=1.5pt,->,black!40]
\tikzstyle{selecteddotted edge} = [draw,densely dotted,line width=1.6pt,->,black!70]
\newcommand{\ptgraph}{%

    % First we draw the vertices without names
    \foreach \pos/ \name in {{(4,0)/a2}, {(0,2)/b1}, {(6,2)/b2},
                             {(0,4)/c1}, {(6,4)/c2}, {(2,6)/d1}}
        \node[vertex] (\name) at \pos {$ $};
    % and now the vertices with names 
    \foreach \pos/ \name / \printname in {{(2,0)/a1/s}, {(4,6)/d2/t}} 
        \node[vertex] (\name) at \pos {$\printname$};
        
    % Connect vertices with edges
    \foreach \source/ \dest in {a1/b2, b2/c2, c2/d2, 
                                a2/b1, b1/c1, c1/d1}
         \path[edge] (\source) -- node[weight] {$ $} (\dest);

}
\newcommand{\ptclosuregraph}{%

    \ptgraph
        
    % draw pseudo-transitive edges
    \foreach \source/ \dest in {a1/c1, a1/c2, a1/d1, a1/d2,
                                a2/c1, a2/c2, a2/d1, a2/d2,
                                b1/d1, b1/d2, b2/d1, b1/d2}
         \path[edge] (\source) -- node[weight] {$ $} (\dest);

}
\newcommand{\bspslicegraph}{%
    % Draw a network
    % First we draw the vertices without names
    \foreach \pos/ \name in {{(0,2)/b1}, {(4,2)/b2},
                             {(0,4)/c1}, {(2,4)/c2}, {(4,4)/c3}, {(6,4)/c4},
                             {(0,6)/d1}, {(2,6)/d2},             {(6,6)/d4}}
        \node[vertex] (\name) at \pos {$ $};
    % and now the vertices with names 
    \foreach \pos/ \name / \printname in {{(0,0)/a1/s}, {(4,6)/d3/t}} 
        \node[vertex] (\name) at \pos {$\printname$};
        
    % Connect vertices with edges
    \foreach \source/ \dest in {a1/b1, b1/c1, b1/c2, b1/c3, 
                                    c1/d1, c1/d3, c2/d1, c2/d3, c3/d2, c3/d4, c4/d3}
         \path[edge] (\source) -- node[weight] {$ $} (\dest);
    % fat edges
    \foreach \source/ \dest in {a1/b2, a1/b2, b2/c2, b2/c4,
                                c2/d3, c4/d3}
         \path[selected edge] (\source) -- node[weight] {$ $} (\dest);    
    
}
\newcommand{\slicegraphexample}{%
% \tikzstyle{selected edge} = [draw,thick,->]
\begin{tikzpicture}[scale=0.6]
\small
    \bspslicegraph
    % Knoten-Typen
    \node[xshift=-2ex](V1) at (a1.west) {$\exists$} ;
    \node[xshift=-2ex](V2) at (b1.west) {$\forall$} ;
    \node[xshift=-2ex](V3) at (c1.west) {$\exists$} ;
    \node[xshift=-2ex](V4) at (d1.west) {$\forall$} ;
    
    \node (V5) at (3,0) {Graph $G$} ;

\end{tikzpicture}
}
\newcommand{\siviiexample}{%
\begin{tikzpicture}[scale=0.6]
\small
    \bspslicegraph

    % die zusätzlichen Knoten u, t_1, t_2 und topnode 
    \foreach \pos/ \name / \printname in {{(0,8)/u/u}, {(3,8)/t1/t_1}, {(5,8)/t2/t_2},{(3,10)/topnode/\topnode}} 
        \node[vertex] (\name) at \pos {$\printname$};
        
    % die Kanten zwischen fast allen neuen Knoten
    \foreach \source/ \dest in {d1/u, d2/u, d3/u, d4/u, d3/t1, d3/t2,
                                u/topnode, t1/topnode, t2/topnode} 
          \path[edge] (\source) -- node[weight] {$ $} (\dest);
    % und nun noch die Kanten zwischen t1 und t2
     \path[edge] (t1) edge[bend left=30] (t2) ;
     \path[edge] (t2) edge[bend right=-30] (t1) ;

    % zum Abschluss noch die Belegungsfunktion
    \node[xshift=1.5ex] (ttt) at (t2.north east) {$a$} ;
    \node[xshift=1.5ex] (tttt) at (topnode.north east) {$a$} ;
    \node[right]  (V4) at (7,6) {all nodes in $V_4$ satisfy $a$} ;
    \node[right]  (V2) at (7,2) {all nodes in $V_2$ satisfy $a$} ;
 
\end{tikzpicture}
}
\newcommand{\kcproofexample}{%
\begin{tikzpicture}[scale=0.6]
\small
    \bspslicegraph
    % der zusätzliche Knoten topnode 
    \foreach \pos/ \name / \printname in {{(3,8)/topnode/\topnode}} 
        \node[vertex] (\name) at \pos {$\printname$};
        
    % die Kanten zwischen der obersten Schicht und dem topnode
    \foreach \source/ \dest in {d1/topnode, d2/topnode, d3/topnode, d4/topnode} 
          \path[edge] (\source) -- node[weight] {$ $} (\dest);
                   
    % and the pseudo-transitive edges bend left
        \foreach \source/ \dest/ \bogen in {a1/c1/40, a1/c2/0, a1/c3/0, a1/c4/5, 
                                            a1/d1/40, a1/d2/0, a1/d3/-15, a1/d4/-15,
                                            b1/d1/40, b1/d2/0, b1/d3/-18, b1/d4/-40,
                                            b2/d1/28, b2/d2/0, b2/d3/-28, b2/d4/0}
           \path[dashed edge] (\source) edge[bend left=\bogen] (\dest);
    
   % Belegungsfunktion
        \node[right]  (V5) at (6.5,8) {slice $V_5$: $a_1, a_2, a_3, a_4, a_5$} ;
        \node[right]  (V4) at (6.5,6) {slice $V_4$: $a_1, a_2, a_3$} ;
        \node[right]  (V3) at (6.5,4) {slice $V_3$: $a_1, a_2$} ;
        \node[right]  (V2) at (6.5,2) {slice $V_2$: $a_1$} ;
        \node[right]  (V1) at (6.5,0) {slice $V_1$: no variable is satisfied} ;
      
        \node[xshift=1.5ex] (ttt) at (d3.north east) {$a_4$} ;

        \node (V5) at (3.2,0) {Kripke model $\Model{M}_G$} ;
    
\end{tikzpicture}
}
\newcommand{\fpleinsexample}{%
\begin{tikzpicture}[scale=0.6]

\small
%    \bspslicegraphduenn{vertex} ;
    \bspslicegraph ;
    % Markierung des Zielnotens d3 mit p
    \node[xshift=1.5ex] () at (d3.north east) {$p$} ;

    \node[right,xshift=4ex]  (V4) at (d4.east) {$\begin{array}{r@{~}c@{~}ll}
                                                     t&\imodels& p \\
                                                     u&\not\imodels& p & \text{ for $u\not=t$}
                                                \end{array}$} ;       
    \node[right,xshift=4ex]  (V3) at (c4.east) {$\begin{array}{r@{~}c@{~}ll}
                                                    g & \not\imodels & p\iImpl \bot & \text{for $g$ with $\apath_G(g,t)$}\\
                                                    u & \imodels & p \iImpl \bot & \text{for $u$ with $\mathit{not}~\apath_G(u,t)$}
                                                 \end{array}$};       
                                                 
     \node[right,xshift=4ex]  (V2) at (b2.east) {$\begin{array}{r@{~}c@{~}ll}
                                                    g & \imodels & (p\iImpl\bot)\iImpl(\top\iImpl \bot) & \text{for $g$ with $\apath_G(g,t)$}\\
                                                    u & \not\imodels & (p\iImpl\bot)\iImpl(\top\iImpl \bot) & 
                                                                                                   \text{for $u$ with $\mathit{not}~\apath_G(u,t)$}
                                                 \end{array}$};       
                                                 
     \node[right,xshift=20ex]  (V1) at (a1.east) {$\begin{array}{r@{~}c@{~}ll}
                                                    s & \not\imodels & ((p\iImpl\bot)\iImpl(\top\iImpl \bot)) 
                                                                    \iImpl (\top \iImpl (\top \iImpl \bot)) 
                                                 \end{array}$};

\end{tikzpicture}
}
\begin{document}

%%%%%%%%%%%%%%%%%%%%%%%%%%%%%%%%%%%%%%%%%%%%%%%%%%%%%%%%%%%%%%%%%%%%%%%%%%%%%%%%%%%%%%%%%%%%%%%%%%%%%%%%%%%%
% Abstract
%%%%%%%%%%%%%%%%%%%%%%%%%%%%%%%%%%%%%%%%%%%%%%%%%%%%%%%%%%%%%%%%%%%%%%%%%%%%%%%%%%%%%%%%%%%%%%%%%%%%%%%%%%%%

\begin{abstract}
We investigate the complexity of the model checking problem for
intuitionistic and modal propositional logics \markit{over transitive Kripke models}.
More specific, we consider 
intuitionistic logic \IPC, basic propositional logic \BPL, formal propositional logic \FPL,
and Jankov's logic  \KC.
We show that the model checking problem is \p-complete for the implicational fragments of all these intuitionistic logics.
For \BPL and \FPL we reach \p-hardness even on the implicational fragment with only one variable. 
The same hardness results are obtained for the strictly implicational fragments of their modal companions.
Moreover, we investigate whether formulas with less variables and additional connectives
make model checking easier.
Whereas for variable free formulas outside of the implicational fragment, 
\FPL model checking is shown to be in \LOGCFL,
the problem remains \p-complete for \BPL.
\end{abstract}

%%%%%%%%%%%%%%%%%%%%%%%%%%%%%%%%%%%%%%%%%%%%%%%%%%%%%%%%%%%%%%%%%%%%%%%%%%%%%%%%%%%%%%%%%%%%%%%%%%%%%%%%%%%%
% Titel
%%%%%%%%%%%%%%%%%%%%%%%%%%%%%%%%%%%%%%%%%%%%%%%%%%%%%%%%%%%%%%%%%%%%%%%%%%%%%%%%%%%%%%%%%%%%%%%%%%%%%%%%%%%%

\maketitle

%%%%%%%%%%%%%%%%%%%%%%%%%%%%%%%%%%%%%%%%%%%%%%%%%%%%%%%%%%%%%%%%%%%%%%%%%%%%%%%%%%%%%%%%%%%%%%%%%%%%%%%%%%%%
% Die Abschnitte
%%%%%%%%%%%%%%%%%%%%%%%%%%%%%%%%%%%%%%%%%%%%%%%%%%%%%%%%%%%%%%%%%%%%%%%%%%%%%%%%%%%%%%%%%%%%%%%%%%%%%%%%%%%%

%%%%%%%%%%%%%%%%%%%%%%%%%%%%%%%%%%%%%%%%%%%%%%%%%%%%%%%%%%%%%%%%%%%%%%%%%%%%%%%%%%%%%%%%%%%%%%%%%%%%%%%%%%%%
% Introduction
%%%%%%%%%%%%%%%%%%%%%%%%%%%%%%%%%%%%%%%%%%%%%%%%%%%%%%%%%%%%%%%%%%%%%%%%%%%%%%%%%%%%%%%%%%%%%%%%%%%%%%%%%%%%

\section{Introduction}

%% the general question
Intuitionistic propositional logic \IPC (see e.g.~\cite{dal04}) goes back to Heyting
and bases on Brouwer's idea of constructivism from the beginning of the 20th century.
It can be seen as the part of classical propositional logic
that goes without the use of the excluded middle $a\vee \neg a$.

While it was originally conceived and is primarily of interest from a proof-theoretic point of view, \IPC admits many sound and complete semantics, such as the algebraic semantics \cite{Tar38}, the topological semantics \cite{McKinseyTarski44}, and the arithmetical semantics \cite{DeJonghJSL1970}. 
The most well known semantics for IPC is Kripke's possible world semantics \cite{Kripke65}. 
As a matter of fact, already in the 1930s it was observed by G\"odel %\cite{Goedel33} 
that \IPC can be mapped to a fragment of the modal logic \Siv, 
which was later shown to be the  modal logic of the class of transitive and reflexive Kripke frames \cite{Kripke63ML}.
In this paper, we explore this Kripke semantics further. 

%\markit{The most common semantics for intuitionistic logic are Heyting semantics \cite{Heyt71} 
%and Kripke semantics \cite{kri63a,kri63b}---see also \cite[Chap. 2]{SOUR06}.
%The Heyting semantics bases on Heyting algebras,
%and Kripke semantics bases on directed graphs that can straightforwardly be adapted
%to model state-transition systems.
%Therefore it is used as the standard semantics for modal and hybrid logics.
%The model checking problem 
%is the question whether a formula is satisfied in a state of a Kripke model.
%Model checking as we do with Kripke models is not suited for Heyting algebras.
%Therefore we also use Kripke semantics for intuitionistic logic.
%All our and all mentioned complexity results below refer to Kripke semantics.}

Whereas the complexity of the validity problem for \IPC is deeply studied~\cite{stat79,svejdar03,Rybakov06,JH93},
the exact complexity of its model checking problem is open.
\markit{Research on the complexity of model checking on Kripke models
goes back to \cite{filad79,sicl85} (where it is called determination of truth) and has been done for a variety of logics
like dynamic logic and many temporal} \markit{logics.}
%
%\textbf{There are especially for model checking in temporal logics many results \cite{schnoeb02,AM11diss} known. 
%For \CTL see \cite{lamasc01,CTL-MC11} and for \LTL see \cite{bamuscscscvo07entcs}.
%Also for some intuitionistic logics the model checking problem was investigated.} %\cite{MW-STACS11,MW10b,MW-RP10}.}
%
It was recently shown that the model checking problem for \IPC formulas with one variable
is \ACi-complete \cite{MW-STACS11}.
We investigate the complexity of model checking for different intuitionistic logics
and for related modal logics---their modal companions.
Our central question is which ingredients (i.e.~logical connectives, number of variables)
are needed in order to obtain maximal hardness of the model checking problem.

%% more details on the intuitionistic logics
We consider the intuitionistic logics \BPL (basic propositional logic \cite{visser80}), 
\FPL (formal propositional logic \cite{visser80}),
\IPC and % the superintuitionistic logic 
\KC (Jankov's logic, see~\cite{DL59}).
All have semantics that is defined over Kripke models with 
a monotone valuation function and a transitive frame\markit{\footnote{\markit{Unless otherwise stated we expect in the following every Kripke model to be transitive.}}} (as for \BPL) that distinguish 
on whether the frame is additionally irreflexive (\FPL), reflexive (\IPC), or a directed preorder (\KC).
The validity problem for all these logics is \PSPACE-complete~\cite{stat79,Chagrov85,svejdar03},
%% even for their implicational fragments~\cite{stat79,Chagrov85,svejdar03}, %% kommt spaeter
and the satisfiability problem is \NP-complete for \IPC and for \KC,
but in \NCi for \BPL and for \FPL.
These intuitionistic logics can be embedded into the modal logics
\Kiv, \PrL (provability logic \markit{\cite{bo93,sve00}}), \Siv, and \Sivii,
that are called the modal companions of the respective intuitionistic logic.
The validity problem and the satisfiability problem 
is \PSPACE-complete for all these modal logics~\cite{lad77,spaan93}.
\markit{The \PSPACE-completeness results mentioned 
also hold for the implicational fragment of intuitionistic logics~\cite{stat79,Chagrov85,svejdar03}
resp.~the strictly implicational fragment for the considered modal logics~\cite{Bou04}.}
Also, the complexity of the validity problem for
fragments of the considered logics with a bounded number of variables was investigated~\cite{svejdar03a,ChagrovR02,Rybakov06}.
Roughly speaking, the number of variables that is needed to obtain
a \PSPACE-hard validity problem
depends on whether the semantics restricts the transitive frames (of the Kripke models)
to be reflexive, irreflexive, or none of both.
For intuitionistic logics, it is shown in~\cite{Rybakov06} that
on transitive and reflexive frames (\IPC) one needs two variables to reach \PSPACE-hardness for the validity problem,
on transitive and irreflexive frames (\FPL) one variable is necessary,
and on arbitrary transitive frames (\BPL) one comes out without variables at all.
For their modal companions,
the same bounds apply for transitive and irreflexive frames (\PrL)~\cite{svejdar03a} 
and for arbitrary transitive frames (\Kiv)~\cite{ChagrovR02},
but for transitive and reflexive frames (\Siv) already one variable suffices~\cite{ChagrovR02}.
Notice that no \PSPACE-hardness results are known for the implicational fragment
with a bounded number of variables.

%% model checking
The model checking problem is the following decision problem.
Given a formula, a Kripke model, and a state in this model, 
decide whether the formula is satisfied in that state.
For classical propositional logic, the model checking problem 
(also called the formula evaluation problem) can be solved
% in logarithmic space~\cite{Lynch77} and even better 
in alternating logarithmic time~\cite{bus87}.
Since the models for classical propositional logic can be seen as a special case of Kripke models 
that consist of only one state,
we cannot expect such a low complexity for intuitionistic logics,
where the models may consist of many states.
For the considered logics, the upper bound \p follows from~\cite{filad79}.
In fact, this upper bound turns out to be the lower bound too---%
we show that the model checking problem for \KC, \IPC, \BPL, and \FPL is \p-complete,
even on the implicational fragments.
We obtain the same bounds on the number of variables
for the \p-hardness of the model checking problem as for the \PSPACE-hardness
of the validity problem (see above) for the considered intuitionistic logics
and their modal companions.
Other than for the validity problem, we obtain \p-hardness
even on the implicational fragments of \FPL and \BPL with one variable.
The \PSPACE-hardness of the validity problem on these fragments is open.
Since the implicational fragments of \IPC and \KC
for any bounded number of variables have only a finite number
of equivalence classes (see~\cite{Urq74}),
we cannot expect to get \p-hardness of model checking on these fragments.
%
%Our hardness results translate to the strictly implicational
%fragments of the modal companions---i.e.~the fragment of formulas
%that have the strict implication $\Box(\cdot \rightarrow \cdot)$ as only connective. 
We also consider optimality of the \p-hardness results in the sense
whether model checking with less variables has complexity below \p.
We show that model checking for the variable free fragment of \FPL drops to \LOGCFL,
whereas for \BPL one can trade the variable in an $\vee$ and keeps \p-hardness.

Our results base on a technique we use to show
that the model checking problem for the implicational fragment
of \IPC is \p-hard.
The variables we use in our construction are essentially needed
to measure distances in the model and to mark a certain state.
In order to restrict the use of variables, it suffices 
to express these in a different way.
This takes different numbers of variables in the different logics
according to their frame properties.
  
%\enlargethispage*{5mm}

This paper is organized as follows.
In Section~\ref{sec:prelims} we introduce the notations
for the logics under consideration,
and we show \p-completeness of a 
graph accessibility problem 
for a special case of alternating graphs
that will be used for our \p-hardness proofs.
In Section~\ref{sec:intuitionistic lower bounds} 
we consider model checking for the intuitionistic logics \KC, \IPC, \FPL, and \BPL.
It starts with the \p-hardness results (Section~\ref{subsec:p-hard}),
and closes with the optimality of bounds on the number of variables needed to obtain \p-hardness (Section~\ref{subsec:optimality}).
In Section~\ref{sec:modal lower bounds}
the results for the modal companions \Sivii, \Siv, \PrL, and \Kiv follow.
The arising completeness results and conclusions are drawn in Section~\ref{sec:conclusion}.
An overview of the results is given in Figures~\ref{fig:IPC_results} and \ref{fig:ML_results}.

%%%%%%%%%%%%%%%%%%%%%%%%%%%%%%%%%%%%%%%%%%%%%%%%%%%%%%%%%%%%%%%%%%%%%%%%%%%%%%%%%%%%%%%%%%%%%%%%%%%%%%%%%%%%
% Preliminaries
%%%%%%%%%%%%%%%%%%%%%%%%%%%%%%%%%%%%%%%%%%%%%%%%%%%%%%%%%%%%%%%%%%%%%%%%%%%%%%%%%%%%%%%%%%%%%%%%%%%%%%%%%%%%

\section{Preliminaries}
\label{sec:prelims}

\subsection*{Kripke Models.} 
We will consider different propositional logics
whose formulas base on a countable set $\PROP$ of \textit{propositional variables}.
A \emph{Kripke model} is a triple $\Model{M} = (U,R,\xi)$,
where $U$ is a nonempty and finite set of \textit{states}, 
$R$ is a binary relation on $U$,
and $\xi: \PROP \to \PM(U)$ is a function\,---\,the \textit{valuation function}.
For any variable it assigns the set of states
in which this variable is satisfied.
$(U,R)$ can also be seen as a directed graph---it is called a \emph{frame} in this context.
A frame $(U,R)$ is \emph{reflexive}, if $(x,x) \in R$ for all $x\in U$,
it is \emph{irreflexive}, if $(x,x)\notin R$ for all $x\in U$,
and it is \emph{transitive}, if for all $a,b,c\in U$, it follows from $(a,b) \in R$ and $(b,c) \in R$ that $(a,c) \in R$.
A reflexive and transitive frame is called a \emph{preorder}.
If a preorder $(U,R)$ has the additional property that
for all $a,b\in U$ there exists a $c\in U$ with $(a,c)\in R$ and $(b,c) \in R$,
then $(U,R)$ is called a \emph{directed preorder}.

\subsection*{Modal Propositional Logic.} 
The language \Klasse{ML} of modal logic is the set of all formulas of the form

\begin{mathe}
	\varphi & ::= & \bot ~\mid~ p ~\mid~ \varphi\rightarrow\varphi ~\mid~ \Box \varphi\,,
\end{mathe}
%
%{\centering
%
%$\varphi ::= \bot ~\mid~ p ~\mid~ \varphi\rightarrow\varphi ~\mid~ \Box \varphi\,,$
%
%}
%
%\noindent
where $p \in \PROP$.
As usual, we use the abbreviations    $\neg\varphi := \varphi\rightarrow\bot$, $\top := \neg \bot$,
$\varphi\lor\psi := (\neg\varphi)\rightarrow\psi$, $\varphi\land\psi := \neg (\varphi \rightarrow\neg\psi)$, 
and $\Diamond\varphi := \neg \Box \neg \varphi$.

The semantics is defined via Kripke models. 
Given a Kripke model $\Model{M} = (U,R,\xi)$ and a state $s \in U$, the
\textit{satisfaction relation for modal logics} $\mmodels$ is defined as follows.

\begin{mathe}
	\Model{M},s \not\mmodels \bot & & \\
	\Model{M},s \mmodels p & \text{~~iff~~} & s \in \xi(p),~ p\in\PROP, \\
	\Model{M},s \mmodels \varphi\rightarrow\psi & \text{~~iff~~} & \Model{M},s \not\mmodels \varphi \text{~or~} \Model{M},s \mmodels \psi, \\
	\Model{M},s \mmodels \Box\varphi & \text{~~iff~~} & \forall t\in U \text{ with } (s,t) \in R : \Model{M},t \mmodels \varphi.
\end{mathe}
%
%    \begin{align*}
%      \Model{M},s & \not\mmodels \bot            \\
%      \Model{M},s & \mmodels p               & \text{if and only if} & & &s \in \xi(p),~ p\in\PROP,      \\
%      \Model{M},s & \mmodels \varphi\rightarrow\psi & \text{if and only if}  & & & \Model{M},s \not\mmodels \varphi
%                                                \text{~or~} \Model{M},s \mmodels \psi,                    \\
%      \Model{M},s & \mmodels \Box\varphi & \text{if and only if} & & &
%                                      \forall t\in U: \text{ if } sRt \text{~then~} \Model{M},t \mmodels \varphi .
%    \end{align*}
%    
For $\mKlasse{M},s \mmodels \varphi$
we say that
formula $\varphi$ is \textit{satisfied} by model $\mKlasse{M}$ in state $s$.
%% If it is satisfied by $\mKlasse{M}$ in every state $s$ of $\mKlasse{M}$, then we write $\mKlasse{M}\mmodels \varphi$.

The modal logic defined in this way is called \Lklasse{K} % (after Saul Kripke) % (or Knowledge) 
and it is the weakest normal modal logic.
We will consider the stronger modal logics  \Kiv, \Siv, \Sivii, and \PrL.
The formulas in all these logics are the same as for \Klasse{ML}.
Since we are interested in model checking,
we use the semantics defined by Kripke models.
They will be defined by properties of the frame $(U,R)$ that is part of the model.
The semantics of \Kiv is defined by transitive frames.
This means, that a formula $\alpha$ is a theorem of \Kiv if and only if
$\mKlasse{M},w \mmodels \alpha$ for all Kripke models $\mKlasse{M}$ whose frame is transitive and all states $w$ of $\Model{M}$.
The semantics of \Siv is defined by preorders,
of \Sivii by directed preorders,
and of \PrL by transitive and irreflexive frames.

\subsection*{Intuitionistic Propositional Logic.}
The language \Klasse{IL} of intuitionistic logic 
is essentially the same as that of classical propositional logic, i.e. it is the set of all formulas of the form

\begin{mathe}
	\varphi & ::= & \bot ~\mid~ p ~\mid~ \varphi\land\varphi ~\mid~  \varphi\lor\varphi ~\mid~ \varphi\iImpl\varphi \,,
\end{mathe}
where $p \in \PROP$.
As usual, we use the abbreviations  $\neg\varphi := \varphi\iImpl\bot$ and $\top := \neg \bot$.
Because of the semantics of intuitionistic logic, 
one cannot express $\land$ or $\lor$ using implication and $\bot$.
\markit{Therefore we use $\iImpl$ instead of $\rightarrow$.}
%With $\Klasse{IL}_i$ we denote the subset of \Klasse{IL} that constists of all formulas with at most $i$ variables.

The semantics is defined via {Kripke models} $\Model{M} = (U,\triangleleft\,,\xi)$ that fulfil certain restrictions.
Firstly, $\triangleleft$ is transitive,
and secondly, the valuation function $\xi: \PROP \to \PM(U)$ is monotone in the sense that
for every $p\in \PROP$, $a,b\in U$: if $a\in \xi(p)$ and $a\triangleleft b$, then $b\in \xi(p)$.
We will call models that fulfil both these properties \emph{intuitionistic} or \emph{model for \BPL}.
An intuitionistic model $\Model{M}=(U,\triangleleft,\xi)$ where $\triangleleft$ is additionally reflexive (i.e.~$\triangleleft$ is a preorder)
is called a \emph{model for \IPC}.
If $\triangleleft$ is a directed preorder, then $\Model{M}$ is called a \emph{model for \KC},
and if $\triangleleft$ is irreflexive, $\Model{M}$ is called a \emph{model for \FPL}.

Given an intuitionistic model $\Model{M} = (U,\triangleleft\,,\xi)$ and a state $s \in U$, the
\textit{satisfaction relation for intuitionistic logics} $\imodels$ is defined as follows.

\begin{mathe}
	\Model{M},s \not\imodels \bot & & \\
	\Model{M},s \imodels p & \text{~~iff~~} & s \in \xi(p),~ p\in\PROP,\\
	\Model{M},s \imodels \varphi\land\psi & \text{~~iff~~} & \mKlasse{M},s \imodels \varphi \text{~and~} \mKlasse{M},s \imodels \psi, \\
	\Model{M},s \imodels \varphi\lor\psi & \text{~~iff~~} & \mKlasse{M},s \imodels \varphi \text{~or~} \mKlasse{M},s \imodels \psi, \\
	\Model{M},s \imodels \varphi\iImpl\psi & \text{~~iff~~} & \forall n\in U \text{ with } s\triangleleft n:                             \text{~if~} \mKlasse{M},n \imodels \varphi \text{~then~} \mKlasse{M},n \imodels \psi
\end{mathe}
%
%    \begin{alignat*}{2}
%      \Model{M},s & \not\imodels \bot            \\
%      \Model{M},s & \imodels p               & & \text{~~iff~~} s \in \xi(p),~ p\in\PROP,            \\
%      \Model{M},s & \imodels \varphi\land\psi & & \text{~~iff~~} \mKlasse{M},s \imodels \varphi
%                                                \text{~and~} \mKlasse{M},s \imodels \psi,                    \\
%      \Model{M},s & \imodels \varphi\lor\psi & & \text{~~iff~~} \mKlasse{M},s \imodels \varphi
%                                                \text{~or~} \mKlasse{M},s \imodels \psi,                    \\
%      \Model{M},s & \imodels \varphi\rightarrow\psi & & \text{~~iff~~} 
%                                                \forall n\in U \text{ with } s\triangleleft n: 
%                              \text{~if~} \mKlasse{M},n \imodels \varphi \text{~then~} \mKlasse{M},n \imodels \psi
%    \end{alignat*}
% 
An important property of intuitionistic logic is that the monotonicity property of the valuation function
also holds for all formulas $\varphi$:
if $\Model{M},s \imodels \varphi$ then $\forall n$ with $s\triangleleft n$ holds $\Model{M},n \imodels \varphi$.

A formula $\varphi$ is \textit{satisfied} by an intuitionistic model $\Model{M}$ in state $s$ if and only if $\Model{M},s \imodels \varphi$. 
Basic propositional logic \BPL \cite{visser80} (resp. \IPC, \KC, \FPL \cite{visser80}) is the set of \Klasse{IL}-formulas
that are satisfied by every model for \BPL (resp. \IPC, \KC, \FPL) in every state. 

\subsection*{Modal Companions.} 
G\"odel-Tarski translations % (see e.g. \cite[p.96]{MLbook}) 
map intuitionistic formulas to modal formulas in a way
that preserves validity in the different logics.
We take the translation $1$ from \cite{visser80},
that we call $\GT$ and that is defined as follows.

\begin{mathe}
	\GT(\bot) & := & \bot \\
	\GT(p) & := & p \wedge \Box p ~~~~~~~(\text{for all } p\in\PROP) \\
	\GT(\alpha \wedge \beta) & := & \GT(\alpha)\wedge\GT(\beta) \\
	\GT(\alpha \vee \beta) & := & \GT(\alpha)\vee\GT(\beta) \\
	\GT(\alpha\iImpl\beta) & := & \Box(\GT(\alpha)\rightarrow\GT(\beta))
\end{mathe}
%
%\begin{itemize}
%\item $\GT(\bot) = \bot$
%\item $\GT(p)=p \wedge \Box p$ \ ~~~~~~~\ (for all $p\in\PROP$)
%\item $\GT(\alpha \wedge \beta) = \GT(\alpha)\wedge\GT(\beta)$
%\item $\GT(\alpha \vee \beta) = \GT(\alpha)\vee\GT(\beta)$
%\item $\GT(\alpha\rightarrow\beta) = \Box(\GT(\alpha)\rightarrow\GT(\beta))$
%\end{itemize}
Visser~\cite{visser80} showed that
$\alpha$ is valid in \FPL if and only if $\GT(\alpha)$ is valid in \PrL.
Therefore, \PrL is called \emph{modal companion} of \FPL.
It is straightforward to see that $\GT$ can also be used to show that
\Kiv (resp. \Siv, \Sivii) is a modal companion of \BPL (resp. \IPC, \KC).
Figure~\ref{fig:int-modal} gives an overview
of the intuitionistic logics and their modal companions used here.
% Moreover, the G\"odel-Tarski translation $\GT$ also preserves satisfaction in the different logics.

\begin{figure}[t]

\centering

\begin{tabular}{|ccc|}
\hline
\rule{0mm}{4mm}
 intuitionistic logic & modal companion &  frame properties \\ \hline
\rule{0mm}{4mm}%
\BPL	   &  \Kiv           & transitive \\
\IPC       &  \Siv           & transitive and reflexive (= preorder)\\
\KC        &  \Sivii         & directed preorder \\
\FPL       &  \PrL           & transitive and irreflexive \\ 
%\PC        &  \Sv            & equivalence relation \\ 
\hline
\end{tabular}
\caption{Intuitionistic logics, their modal companions, and the common frame properties.}
\label{fig:int-modal}
\end{figure}

\subsection*{Model Checking Problems.}
    This paper examines the model checking problems \fe{L} for logics $L$
    whose formulas are evaluated on Kripke models with different properties.
  \dproblem{\fe{L}}{$\langle \varphi,\Model{M},s \rangle$,  where \\ $\varphi$ is a formula for $L$, 
              $\Model{M} = (U, R, \xi)$ is a Kripke model for $L$,
      and $s \in U$}{Is $\varphi$ satisfied by $\Model{M}$ in state $s$?}

\noindent We assume that formulas and Kripke models are encoded
in a straightforward way.
This means, a formula is given as a text,
and the graph $(U,R)$ of a Kripke model
is given by its adjacency matrix that takes $|U|^2$ bits.
Therefore, only finite Kripke models can be considered
%
%Notice that all instances $\langle \varphi,\Model{M},s \rangle$ of \fe{\IPC}
%have a graph $(U,R)$ contained in $\Model{M}$ that is a preorder.
%Instances without this property can be assumed to be rejected.
%The same holds for \fe{\Siv} and $\fe{\Siv_1}$.
%Accordingly, \fe{\KC}, \fe{\Sivii}, and $\fe{\Sivii_1}$ (resp. \fe{\LC} and \fe{\Siviii})
%have instances only where the graph underlying the model is a directed preorder (resp. linear preorder).
%\marginpar{\rule{2mm}{20mm}}
%Since we only consider finite models,
%every directed preorder must have a maximal element.
and it can be easily decided whether the model has the order property for the logic under consideration.

\subsection*{Complexity.}
We assume familiarity with the standard notions of complexity theory as, e.\,g., defined in \cite{pap94}.
% In particular, we will show results for the classes \DLOGCFL  and \p.
The complexity classes we use in this paper are \p (polynomial time) and some of its subclasses.
\LOGCFL is the class of sets that are \markit{logspace many-one reducible} to context-free languages.
It is also characterized as sets decidable by a nondeterministic Turing machine in polynomial time and logarithmic space
with additional use of a stack.
%\LOGCFL (resp. \DLOGCFL) is the class of sets that are logspace many-one reducible to context-free languages
%(resp. deterministic context-free languages).
%It is also characterized as sets decidable by a nondeterministic (resp. deterministic) Turing machine in polynomial time and logarithmic space
%with additional use of a stack.
$\Log$ denotes logspace, and $\NLog$ nondeterministic logspace.
%The relation between $\NLog$ and $\DLOGCFL$ is unknown.
To round off the picture,
\NCi (= alternating logarithmic time) is the class for which
the model checking problem for classical propositional logic is complete \cite{bus87},
and the model checking problem for $\IPC_1$ is complete for
\ACi (= alternating logspace with logarithmically bounded number of alternations) \cite{MW-STACS11}.
The inclusion structure of the classes under consideration is as follows.

$$
\NCi ~~\subseteq~~ \Log ~~\subseteq~~ \NLog ~~\subseteq~~ \LOGCFL ~~\subseteq~~ \ACi ~~\subseteq~~ \p 
$$

%$$
%\NCi ~~\subseteq~~ \Log 
%         ~~\begin{array}[c]{ccc}
%               & \LOGDCFL & \begin{turn}{-20}$\subseteq$\end{turn} \\[-2ex]
%                  \begin{turn}{18}$\subseteq$\end{turn}& &  \\[1ex]
%               & \NLog & \begin{turn}{20}$\subseteq$\end{turn} \\[-3.7ex]             
%                 \begin{turn}{-18}$\subseteq$\end{turn} \\[2ex]       
%         \end{array}~~
%\LOGCFL ~~\subseteq~~ \ACi ~~\subseteq~~ \p 
%$$
Fisher and Ladner~\cite{filad79} showed that model checking for modal logic is in \p.

\begin{thm}{\cite{filad79}}\label{thm:FisherLadner}
$\fe{\KK}$ is in \p. \qed
\end{thm}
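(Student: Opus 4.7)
The plan is to proceed by a straightforward bottom-up labeling algorithm on the subformula structure of $\varphi$. For every subformula $\psi$ of $\varphi$, I would compute the set $S_\psi = \{u \in U : \Model{M}, u \mmodels \psi\}$, processing subformulas in order of increasing length so that when $\psi$ is treated, all of its immediate subformulas have already been labeled.

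The base cases are $S_\bot = \emptyset$ and $S_p = \xi(p)$, both available directly from the input. For the inductive cases, I would set $S_{\psi_1 \rightarrow \psi_2} = (U \setminus S_{\psi_1}) \cup S_{\psi_2}$, and $S_{\Box \psi} = \{u \in U : \text{for every } t \in U \text{ with } (u,t) \in R,\, t \in S_\psi\}$. The latter is computed by scanning, for each candidate $u$, its row in the adjacency matrix of $R$ and verifying that every successor belongs to the already-computed set $S_\psi$. After all subformulas are labeled, the algorithm accepts iff $s \in S_\varphi$; correctness follows by a routine induction on the construction of $\psi$ using the clauses in the definition of $\mmodels$.

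For the complexity bound, the number of subformulas of $\varphi$ is at most $|\varphi|$. Each implication step takes time $O(|U|)$ using bit vectors for the sets $S_\psi$, and each box step takes time $O(|U|^2)$ since for every state one inspects all its successors. Hence the total running time is $O(|\varphi| \cdot |U|^2)$, which is polynomial in the input size, giving $\fe{\KK} \in \p$. There is no real obstacle here; the only point worth emphasizing is that the monotone treatment of subformulas, together with the fact that $R$ is given explicitly as an adjacency matrix, makes the modal clause directly computable without any fixed-point iteration.
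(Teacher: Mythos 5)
Your proof is correct: the paper itself gives no proof but simply cites Fischer and Ladner, and your bottom-up labeling of all subformulas with their satisfying state sets, running in time $O(|\varphi|\cdot|U|^2)$, is exactly the standard argument behind that cited result (their algorithm for dynamic logic, restricted to the modal fragment). Nothing is missing.
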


The notion of reducibility we apply is the logspace many-one reduction $\mlogred$.
The G\"odel-Tarski translation $\GT$ can be seen as such a reduction between the model checking problems
for intuitionistic logics and their modal companions, namely 
 $\fe{\BPL}\mlogred \fe{\Kiv}$,
 $\fe{\IPC}\mlogred \fe{\Siv}$,
 $\fe{\KC}\mlogred \fe{\Sivii}$, and
 $\fe{\FPL}\mlogred \fe{\PrL}$.
Since $\GT$ does not introduce additional variables, the respective reducibilities also hold for the model checking problems
for formulas with any restricted number of variables.
It therefore follows from Theorem~\ref{thm:FisherLadner}
that \p is an upper bound for all model checking problems
for modal respectively intuitionistic logics considered in this paper.

\subsection*{Fragments of Logics.}
We consider fragments with bounded number of variables or $\iImpl$ as only connective.
The implicational formulas are the formulas with $\iImpl$ and $\bot$ as only connectives.
For an intuitionistic logic $L$, 
we use $L^{\iImpl}$ to denote the implicational formulas of $L$, i.e.~its \emph{implicational fragment}.  
$L_i$ denotes its \emph{fragment with $i$ variables}, i.e.~the formulas of $L$ with at most $i$ variables.
$L^{\iImpl}_i$ denotes the implicational fragment with $i$ variables.
For modal logics, the \emph{(strictly) implicational fragment} consists of formulas of the form

\begin{mathe}
	\varphi & ::= & \bot ~\mid~ p ~\mid~ \Box(\varphi\rightarrow\varphi)\,.
\end{mathe}
We use the same notation for implicational fragments of modal logics (resp. with bounded numbers of variables)
as for intuitionistic logics.

% The \emph{purely implicational fragment} is the implicational fragment without formulas that contain $\bot$. 

The G\"odel-Tarski translation $\GT$ does not translate formulas of the implicational fragment
of intuitionistic logics into the strictly implicational fragment of modal logics.
For the model checking problem,
we can use a different translation that preserves satisfaction but does not preserve validity.
Let $\GT'$ be the translation that is the same as $\GT$
but $\GT'(p)=p$ for every variable $p$. 

\begin{lem}\label{lem:Goedel-Tarski}
Let $\alpha$ be an \Klasse{IL}-formula,
and $\Model{M}$ be an intuitionistic model with state $s$.
Then $\Model{M},s\imodels \alpha$ if and only if $\Model{M},s\mmodels\GT'(\alpha)$.
If $\alpha$ is an implicational formula, then $\GT'(\alpha)$ is strictly implicational. \qed
\end{lem}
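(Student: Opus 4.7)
The plan is a straightforward structural induction on the intuitionistic formula $\alpha$, exploiting the fact that the model $\Model{M}=(U,\triangleleft,\xi)$ in the statement is already intuitionistic (hence $\triangleleft$ is transitive and $\xi$ is monotone), so the standard technical device in the G\"odel--Tarski translation (replacing $p$ by $p\wedge\Box p$ to force persistence) is unnecessary: persistence of atoms holds by assumption on $\xi$, and this is exactly what $\GT'(p)=p$ needs.

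First I would handle the atomic cases. For $\alpha=\bot$ both $\imodels$ and $\mmodels$ fail by definition, and $\GT'(\bot)=\bot$. For $\alpha=p\in\PROP$, both $\Model{M},s\imodels p$ and $\Model{M},s\mmodels\GT'(p)=\Model{M},s\mmodels p$ unfold to $s\in\xi(p)$, so the equivalence is immediate. Next I would treat $\wedge$ and $\vee$, where the intuitionistic and modal clauses are literally identical in shape, so the induction hypothesis delivers the equivalence term-by-term.

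The only case that requires slightly more care is $\alpha=\beta\iImpl\gamma$. Unfolding the intuitionistic clause:
\[
\Model{M},s\imodels\beta\iImpl\gamma \iff \forall n\in U:\;s\triangleleft n\text{ implies }\bigl(\Model{M},n\imodels\beta\Rightarrow\Model{M},n\imodels\gamma\bigr).
\]
Unfolding the modal side of $\GT'(\beta\iImpl\gamma)=\Box(\GT'(\beta)\rightarrow\GT'(\gamma))$:
\[
\Model{M},s\mmodels\Box(\GT'(\beta)\rightarrow\GT'(\gamma)) \iff \forall n\in U:\;s\triangleleft n\text{ implies }\bigl(\Model{M},n\mmodels\GT'(\beta)\Rightarrow\Model{M},n\mmodels\GT'(\gamma)\bigr).
\]
Since the accessibility relation is the same $\triangleleft$ in both semantics, the induction hypothesis applied at every successor $n$ yields that these two quantified statements coincide.

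For the second claim I would argue by induction on the implicational formula $\alpha$: the base cases $\bot$ and $p$ map to $\bot$ and $p$, which lie in the strictly implicational fragment of modal logic as defined above. For $\alpha=\beta\iImpl\gamma$ with $\beta,\gamma$ implicational, the induction hypothesis gives that $\GT'(\beta)$ and $\GT'(\gamma)$ are strictly implicational, and then $\GT'(\alpha)=\Box(\GT'(\beta)\rightarrow\GT'(\gamma))$ fits the production $\Box(\varphi\rightarrow\varphi)$. There is no serious obstacle here; the only conceptual point worth emphasising is that monotonicity of $\xi$ in the intuitionistic model makes the $\Box p$-conjunct in the usual $\GT$ redundant for satisfaction (though not for validity across all transitive Kripke models), which is precisely why the simpler $\GT'$ is allowed in this model-checking context.
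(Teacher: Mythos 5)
Your induction is correct and is exactly the routine argument the paper leaves implicit (the lemma is stated with the proof omitted): $\GT'$ mirrors the intuitionistic satisfaction clauses over the same relation $\triangleleft$, so the equivalence follows case by case, and the syntactic claim about the strictly implicational fragment is immediate from the grammar. One small remark: monotonicity of $\xi$ is not actually used anywhere in your induction---the atomic case already reduces to $s\in\xi(p)$ on both sides---it only explains why dropping the $\Box p$-conjunct of $\GT$ is harmless on intuitionistic models, as you correctly note at the end.
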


\vspace{-1.5ex}
\subsection*{\texorpdfstring{\p-complete}{P-complete} Problems.}
Chandra, Kozen, and Stockmeyer \cite{chkost81}
have shown that the Alternating Graph Accessibility Problem \AGAP
is \p-complete.
In \cite{grhoru95} it is mentioned that \p-completeness also
holds for a bipartite version.

An \emph{alternating graph} $G=(V,E)$ is a bipartite directed graph where $V=V_{\exists} \cup V_{\forall}$
are the partitions of $V$.
Nodes in $V_{\exists}$ are called \emph{existential} nodes,
and nodes in $V_{\forall}$ are called \emph{universal} nodes.
The property $\apath_G(x,y)$ for nodes $x,y\in V$ expresses that there
exists an alternating path through $G$ from node $x$ to node $y$,
and it is defined as follows.

\vspace{5px}
\begin{itemize}[itemsep=3pt]
	\item[\textnormal{1)}] $\apath_G(x,x)$ holds for all $x\in V$
	\item[\textnormal{2a)}] for $x\in V_{\exists}$: $\apath_G(x,y)$ ~if and only if~ $\exists z\in V_{\forall}: 
                                                      (x,z)\in E \text{ and } \apath_G(z,y)$
	\item[\textnormal{2b)}] for $x\in V_{\forall}$: $\apath_G(x,y)$ ~if and only if~ $\forall z\in V_{\exists}: 
                                          \text{ if } (x,z)\in E \text{ then } \apath_G(z,y)$\medskip
\end{itemize}

\noindent The problem \AGAP consists
of directed bipartite graphs $G$ and nodes $s,t$ that satisfy the property $\apath_G(s,t)$.
Notice that in bipartite graphs existential and universal nodes are strictly alternating.

\dproblem{\AGAP}{$\langle G,s,t \rangle$, where $G$ is a directed bipartite graph}{%
                 does $\apath_G(s,t)$ hold?}

\vspace{-1ex}
\begin{thm}\cite{chkost81,grhoru95}
\AGAP is \p-complete. \qed % under $\mlogred$-reductions.
\end{thm}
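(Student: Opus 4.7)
The plan is to establish the two directions separately: containment of \AGAP in \p and \p\nobreakdash-hardness of \AGAP.

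For the upper bound, I would compute the relation $\apath_G$ as a least fixed point. Start with the set $A_0 := \{(x,x) : x \in V\}$ and iteratively form $A_{i+1}$ from $A_i$ by adding every pair $(x,y)$ with $x\in V_\exists$ for which there is some $z$ with $(x,z)\in E$ and $(z,y)\in A_i$, and every pair $(x,y)$ with $x\in V_\forall$ for which $(z,y)\in A_i$ for every successor $z$ of $x$. The sequence stabilises after at most $|V|^2$ steps, each of which is computable in polynomial time; the limit equals $\apath_G$ by induction on the inductive definition, so deciding $\apath_G(s,t)$ is in \p.

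For the lower bound I would reduce from the Monotone Circuit Value Problem, which is well known to be \p\nobreakdash-complete. Given a monotone Boolean circuit $C$ with $\wedge$\nobreakdash- and $\vee$\nobreakdash-gates, constant inputs in $\{0,1\}$, and output gate $g_{\mathrm{out}}$, I build a directed graph $G_C$ with one node per gate: $\vee$\nobreakdash-gates and constant $1$\nobreakdash-inputs become existential nodes, $\wedge$\nobreakdash-gates and constant $0$\nobreakdash-inputs become universal nodes, and the edges run from each gate to its inputs. Choose $s := g_{\mathrm{out}}$ and let $t$ be a fresh existential sink with a single self-loop (or any convenient node reachable exactly from the $1$\nobreakdash-inputs). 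A straightforward induction on circuit depth shows $\apath_{G_C}(g,t)$ holds iff gate $g$ evaluates to $1$; in particular $\apath_{G_C}(s,t)$ holds iff $C$ outputs $1$. The reduction is clearly computable in logspace.

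The one technical subtlety, and the step I would be most careful about, is making $G_C$ \emph{bipartite} in the sense of the definition, i.e.\ ensuring that existential and universal nodes strictly alternate along every edge. A generic circuit may contain consecutive $\vee$\nobreakdash-gates or consecutive $\wedge$\nobreakdash-gates. I would resolve this by a standard normalisation of $C$: between any two gates of the same type (and on the direct edges from gates to constants of the same type) insert a dummy gate of the opposite type with the single successor as its only input. A single-input $\wedge$ or $\vee$ is the identity, so the circuit's value is unchanged; after this insertion the bipartiteness of $G_C$ is automatic and the correctness of the reduction is preserved. Combined with the upper bound, this yields \p\nobreakdash-completeness of \AGAP.
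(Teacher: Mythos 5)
The paper itself only cites this theorem, so your proposal is measured against the standard proof from \cite{chkost81,grhoru95}; your overall plan---a least-fixed-point computation for membership in \p, and a logspace reduction from the Monotone Circuit Value Problem with dummy identity gates inserted to restore strict alternation---is exactly that standard route, and the upper-bound part is fine. The lower bound, however, contains a step that fails under the paper's definition of $\apath_G$: by clause 2b a \emph{universal node with no outgoing edges} satisfies $\apath_G(x,y)$ vacuously, and by clause 2a an existential node may only use \emph{universal} successors as witnesses (the quantifier ranges over $z\in V_{\forall}$). You make the constant $0$-inputs universal sinks, so they satisfy $\apath_{G_C}(\cdot,t)$ vacuously and behave exactly like true inputs; concretely, for a circuit consisting of one $\wedge$-gate whose two inputs are the constant $0$, your graph (after inserting the single-input $\vee$-dummies) satisfies $\apath_{G_C}(s,t)$ although the circuit evaluates to $0$. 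Symmetrically, taking $t$ to be a \emph{fresh existential} sink reached from the existential $1$-inputs creates edges between two existential nodes, which clause 2a simply ignores, so $\apath_{G_C}(x,t)$ would in fact fail at the $1$-inputs. Hence the induction on circuit depth that you invoke breaks precisely at the constants; this vacuous-truth subtlety is the same reason the paper's definition of \ASAGAP insists that every node outside the last slice have outdegree $>0$.

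The repair is easy but needs to be said: swap the treatment of the constants, e.g.\ make the $1$-inputs universal sinks (they then satisfy $\apath_G$ to any target vacuously) and the $0$-inputs existential sinks distinct from $t$ (they then satisfy no $\apath_G$ at all), or keep your typing but route every $0$-input to a fresh existential dead end and make $t$ universal so that existential predecessors can actually use it as a witness. With either choice, together with your dummy-gate normalisation for bipartiteness (and dropping the self-loop on $t$, which both violates alternation and is unnecessary since $\apath_G(t,t)$ holds by the base clause), the induction goes through and the reduction is correct; the membership argument needs no change.
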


For our purposes, we need an even more restricted variant of \AGAP.
\markit{We require that} the graph is \emph{sliced}.
An \emph{alternating slice graph} $G=(V,E)$ is
a directed bipartite acyclic graph with a bipartitioning $V=V_{\exists} \cup V_{\forall}$,
and a further partitioning
$V=V_1 \cup V_2 \cup \cdots \cup V_m$ ($m$ \emph{slices}, $V_i\cap V_j=\emptyset$ if $i\not=j$)
where 

\begin{mathe}
	V_{\exists} & = & \bigcup\limits_{i\leq m, i \text{ odd}} V_i, \\
	V_{\forall} & = & \bigcup\limits_{i\leq m, i \text{ even}} V_i, \text{ and} \\
	E & \subseteq & \bigcup\limits_{i=1,2,\ldots,m-1} V_i \times V_{i+1}, ~~~\text{ i.e. all edges go from slice } V_i \text{ to slice }V_{i+1}.\\
%		& & ~~~\text{ i.e. all edges go from slice } V_i \text{ to slice }V_{i+1} ~~(\text{for } i=1,2,\ldots,m-1).
\end{mathe}	
%
%\begin{itemize}
%\item $V_{\exists}=\bigcup\limits_{i\leq m, i \text{ odd}} V_i$,
%\item $V_{\forall}=\bigcup\limits_{i\leq m, i \text{ even}} V_i$, and
%\item $E\subseteq \bigcup\limits_{i=1,2,\ldots,m-1} V_i \times V_{i+1}$, \\
%      i.e. all edges go from slice $V_i$ to slice $V_{i+1}$ (for $i=1,2,\ldots,m-1$).
%\end{itemize}
Finally, \markit{we require that} all nodes in a slice graph excepted those in the last slice $V_m$ 
have outdegree $>0$.

\dproblem{\ASAGAP}{$\langle G,s,t \rangle$, where $G=(V_{\exists}\cup V_{\forall},E)$ is a slice graph
                                                  with slices $V_1,\ldots,V_m$,
                                          and $s\in V_1\cap V_{\exists}$, $t\in V_m\cap V_{\forall}$}{%
                 does $\apath_G(s,t)$ hold?}

\noindent It is not hard to see that this version of the
alternating graph accessibility problem remains \p-complete.

\begin{lem}\label{lemma:ASAGAP-P-complete}
\ASAGAP is \p-complete. % under $\mlogred$-reductions.
\end{lem}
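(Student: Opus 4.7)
The upper bound $\ASAGAP \in \p$ is immediate: every slice graph is in particular a bipartite alternating graph, so $\ASAGAP$ is a restriction of $\AGAP$, which belongs to $\p$~\cite{chkost81}. All the work is in the $\p$-hardness, and I would establish it by a logspace many-one reduction from bipartite $\AGAP$.

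Given an instance $(G,s,t)$ with $G=(V_\exists\cup V_\forall,E)$ bipartite and $n=|V|$, I would first normalise so that $s\in V_\exists$, $t\in V_\forall$, and $t$ is a sink in $G$. Making $t$ a sink is harmless, because in the recursive evaluation of $\apath_G(s,t)$ the only invocation of $\apath$ rooted at $t$ that can arise is $\apath_G(t,t)$, which is true by the base clause; and the parity of $t$, if wrong, can be fixed by attaching a single fresh universal sink. I would then build a slice graph $G'$ with $m=2n$ slices in which slice $V'_i$ contains, for every $v$ whose quantifier type matches the parity of $V'_i$ (so $v\in V_\exists$ in odd slices and $v\in V_\forall$ in even slices), a fresh copy $(v,i)$; each edge $(u,v)\in E$ is lifted to $((u,i),(v,i+1))$ whenever both endpoints exist. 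The designated source is $(s,1)\in V'_1\cap V_\exists$ and the designated target is $(t,m)\in V'_m\cap V_\forall$.

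Because a winning $\exists$-strategy in $G$ may reach $t$ well before slice $m$, I would attach a \emph{target-propagation chain}: for each even $j$ with $2\le j\le m-2$, a fresh existential helper $h_j\in V'_{j+1}$ together with edges $(t,j)\to h_j$ and $h_j\to (t,j+2)$. Since $t$ is a sink in $G$, the only out-neighbour of $(t,j)$ in $G'$ is $h_j$, and a downward induction on $j$ then shows that $\apath_{G'}((t,j),(t,m))$ holds for every even $j\le m$. To meet the outdegree condition on every non-final slice I would prune from $G$ the $V_\exists$-sinks distinct from $t$ (they contribute no winning existential move anyway) and route any surviving $V_\forall$-sinks into a symmetric, $\apath$-false gadget that preserves the game value. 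Correctness then reduces to the equivalence $\apath_G(s,t)\iff\apath_{G'}((s,1),(t,m))$: a winning $\exists$-strategy in $G$ lifts to $G'$ by tracking the slice index and, as soon as some copy $(t,j)$ is reached, switching to the propagation chain; conversely, an $\exists$-strategy winning from $(s,1)$ in $G'$ projects, by forgetting slice indices, to a strategy in $G$ that eventually arrives at $t$.

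The main obstacle will be reconciling the rigid format of a slice graph---strict quantifier alternation between consecutive slices, mandatory positive outdegree on every non-final slice, and the fixed locations of $s$ and $t$---with the apath semantics, in particular the subtlety that a universal node without successors vacuously satisfies $\apath$. That subtlety is precisely why both the target-propagation chain and the careful handling of $V_\forall$-sinks are needed: a naive layer-by-layer duplication of $G$ would either violate the outdegree constraint or, once patched, introduce spurious winning moves through vacuously-true universal dead ends.
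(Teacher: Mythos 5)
Your overall layering idea---$m=2n$ slices, one copy of each matching-parity node per slice, plus a mechanism letting a play that reaches $t$ early ``wait'' until the last slice---is essentially the paper's construction (the paper simply puts a copy of \emph{every} node into every slice and gives wrong-parity nodes and $t$ identity edges into the next slice, which plays the role of your target-propagation chain). The genuine problems are in your sink handling. Pruning the $V_{\exists}$-sinks is not value-preserving: an existential sink $u\ne t$ has $\apath_G(u,t)$ false, so it forces every universal predecessor to be false, and deleting it can flip such a predecessor from false to true. Concretely, take $V_{\exists}=\{s,u_1,u_2\}$, $V_{\forall}=\{w,t\}$ with edges $s\to w$, $w\to u_1$, $w\to u_2$, $u_1\to t$ and $t$ a sink: here $\apath_G(s,t)$ is false, but after removing the sink $u_2$ it becomes true. ``Contributing no winning existential move'' does not license deletion, because such sinks contribute losing \emph{universal} obligations; the correct treatment is the opposite of what you propose, namely to keep them and give their copies an $\apath$-false continuation so that they meet the outdegree condition while staying losing.

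Your treatment of the $V_{\forall}$-sinks rests on the reading that a universal node without successors vacuously satisfies $\apath$, and that reading is incompatible with your reduction (and with the lemma itself). Under it, every node of the last slice of any slice graph is a universal sink and hence vacuously satisfies $\apath_{G'}(\cdot,(t,m))$; together with the outdegree condition this would make every well-formed \ASAGAP instance a yes-instance, so the right-to-left direction of your equivalence fails whenever $\apath_G(s,t)$ is false, and \ASAGAP could not be \p-hard at all. The convention the paper actually uses is the opposite one: for a sink $x$, $\apath_G(x,t)$ holds only if $x=t$, exactly as in the base cases of the later claims (for $w\in V_m$: $\apath_G(w,t)$ iff $w=t$). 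Under that convention your $\apath$-false gadget for universal sinks happens to be value-preserving (for the opposite reason to the one you give, since such sinks are then losing rather than vacuously winning), but the pruning step above remains wrong, so the reduction as proposed does not establish \p-hardness without repairing the sink handling.
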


\noindent
\textit{Proof sketch.}
\ASAGAP is in \p, since it is a special case of \AGAP.
%that is known to be in \p, and since instances $\langle G,s,t \rangle$ 
%where $G$ is not a slice graph or $s\not\in V_1\cap V_{\exists}$ or $t\not\in V_m\cap V_{\forall}$
%can easily be identified.
%
In order to show \p-hardness of \ASAGAP, it suffices to find a reduction
$\AGAP \mlogred \ASAGAP$.
For an instance $\langle G,s,t \rangle$ of $\AGAP$ with graph $G=(V.E)$ where $V=V_{\exists} \cup V_{\forall}$ has $n$ nodes,
we construct an alternating slice graph $G'=(V',E')$ with $m=2n$ slices as follows.
Let $V'_i=\{\langle v,i\rangle \mid v\in V\}$ for $1\leq i\leq m$,
$V'_{\exists} = \bigcup_{i \text{~odd}} V'_i$, and $V'_{\forall}= \bigcup_{i \text{~even}} V'_i$.
The edges outgoing from a slice $V_i$ for odd $i<n$ (existential slice) are
$$
  E'_i=\Bigl\{\bigl(\langle u,i \rangle, \langle v,i+1 \rangle \bigr) \Bigm | (u,v)\in E \text{~and~} u\in V_{\exists}-\{t\}\Bigr\}
      \cup \Bigl\{\bigl(\langle u,i \rangle, \langle u,i+1 \rangle \bigr) \Bigm | u\in V_{\forall} \cup \{t\}\Bigr\}
$$
and for even $i$ (universal slice) accordingly
$$
  E'_i=\Bigl\{\bigl(\langle u,i \rangle, \langle v,i+1 \rangle \bigr) \Bigm | (u,v)\in E \text{~and~} u\in V_{\forall}-\{t\}\Bigr\}
      \cup \Bigl\{\bigl(\langle u,i \rangle, \langle u,i+1 \rangle \bigr) \Bigm | u\in V_{\exists} \cup \{t\}\Bigr\}~~.
$$
Then $G'=(V'_{\exists}\cup V'_{\forall}, ~E'_1\cup\cdots\cup E'_{m-1})$.
The transformation from $G$ to $G'$ can be computed in logarithmic space.
It is not hard to see that
$\langle G,s,t \rangle\in \AGAP$ if and only if $\langle G',\langle s,1 \rangle,\langle t,m\rangle\rangle\in\ASAGAP$.
\qed

Our basic \p-hardness proofs of model checking problems will use logspace reductions from $\ASAGAP$.
The structural basis can be seen in the proof of the folklore result
about $\KK_0$---the fragment of modal logic without variables---\markit{that we extend
to the strictly implicational fragment $\KK^{\rightarrow}_0$}.

\begin{thm}\label{thm:K-P-hard}
\markit{The model checking problem for $\KK^{\rightarrow}_0$ is \p-hard.}
\end{thm}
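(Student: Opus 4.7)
The plan is to show $\p$-hardness by a logspace reduction from \ASAGAP (which is \p-complete by Lemma~\ref{lemma:ASAGAP-P-complete}). From a slice-graph instance $\langle G,s,t\rangle$ with slices $V_1,\ldots,V_m$ (so $V_1$ is existential, $V_m$ is universal, and $m$ is even) and designated nodes $s\in V_1$, $t\in V_m$, I will construct a Kripke model $\Model{M}_G$ with a distinguished state $s^*$ together with a variable-free strictly implicational formula $\Phi$ such that $\Model{M}_G,s^*\mmodels\Phi$ iff $\apath_G(s,t)$.

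The Kripke model is obtained from $G$ by adding two auxiliary states: an initial state $s^*$ whose only outgoing edge is $s^*\rightarrow s$, and a gadget state $d$, with edges $x\rightarrow d$ added for every $x\in V_m\setminus\{t\}$. All other edges are inherited from $G$. The key effect is that $t$ becomes the unique node of $V_m$ without successors in $\Model{M}_G$, so the predicate ``$v=t$'' at $v\in V_m$ is captured without any propositional variable simply by the formula $\Box\bot$.

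The formula I want is the alternating chain $\Box\Diamond\Box\Diamond\cdots\Box\bot$ of $m+1$ modalities: the outer $\Box$ forces the evaluation to jump from $s^*$ to $s$; the next $\Diamond$ matches the existential rule of $\apath$ at $s\in V_1$; each further $\Box$ or $\Diamond$ mirrors the universal or existential nature of the next slice; and the final $\Box\bot$ exploits the gadget to recognise $t$ in $V_m$. Because the strictly implicational fragment forbids $\Diamond$ as a primitive, I re-express every $\Box\Diamond X$ as $\Box(\Box(X\rightarrow\bot)\rightarrow\bot)$. Formally, let $\alpha_1:=\Box(\Box(\bot\rightarrow\bot)\rightarrow\bot)$ and $\alpha_{k+1}:=\Box(\alpha_k\rightarrow\bot)$, and set $\Phi:=\alpha_{m+1}$. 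A routine induction on $k$ shows that $\alpha_k$ is strictly implicational and variable-free, has size $O(k)$, and is semantically equal to the desired alternation (ending in $\Box\bot$ since $m+1$ is odd).

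Correctness follows by descending induction on the slice index $i\in\{1,\ldots,m\}$. At $i=m$ the gadget makes $\Box\bot$ hold at $v$ iff $v=t$. For $i<m$, the $\Model{M}_G$-successors of any $v\in V_i$ coincide with its $G$-successors (the auxiliary states $s^*$ and $d$ are not reachable from any $V_i$ with $i<m$), so the outermost modality of the residual formula---$\Diamond$ on existential slices, $\Box$ on universal ones---reproduces exactly the corresponding recursive clause defining $\apath_G$. Finally, $\Phi=\Box\Phi'$ at $s^*$ reduces to $\Phi'$ at $s$, giving $\apath_G(s,t)$. The whole construction is clearly computable in logarithmic space. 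The main obstacle is that the strictly implicational fragment disallows $\Diamond$ as a primitive; I navigate this by keeping every $\Diamond$ immediately beneath a $\Box$, which is precisely what the auxiliary state $s^*$ and the outermost $\Box$ are for, ensuring that the double-negation encoding $\Box\Diamond X\equiv\Box(\Box(X\rightarrow\bot)\rightarrow\bot)$ always yields a formula of the required shape $\Box(\psi\rightarrow\chi)$.
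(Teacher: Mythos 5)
Your proposal is correct, and at its core it is the same reduction as the paper's: map an \ASAGAP slice-graph instance to a Kripke model plus a variable-free alternating chain of $m{+}1$ modalities whose final modality singles out $t$ in the last slice, then eliminate $\Diamond$ via $\Box\Diamond X\equiv\Box(\Box(X\rightarrow\bot)\rightarrow\bot)$; your recursion $\alpha_{k+1}=\Box(\alpha_k\rightarrow\bot)$ with $\alpha_1\equiv\Box\bot$ indeed yields $(\Box\Diamond)^{m/2}\Box\bot$, which matches the required alternation since $m$ is even, and your slice-by-slice induction mirrors the paper's argument. Where you differ is in how the two technical wrinkles are handled. The paper keeps the original frame (adding only a self-loop at $t$, so $t$ is detected by $\Diamond\top$), obtains a formula headed by $\Diamond$, and therefore negates it, reducing from $\overline{\ASAGAP}$ and invoking closure of \p under complement to land in the strictly implicational fragment. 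You instead modify the model: the fresh root $s^*$ with sole successor $s$ lets the formula start with the mandatory outer $\Box$, and the dummy sink $d$ attached to every node of $V_m\setminus\{t\}$ makes $t$ the unique successor-free node of $V_m$, detected by $\Box\bot$ --- the dual of the paper's self-loop trick. This buys you a direct reduction $\ASAGAP\mlogred\fe{\KK^{\rightarrow}_0}$ with no complementation step, at the cost of two extra states; both constructions are clearly logspace, so the two routes are essentially equivalent in strength. One point worth stating explicitly if you write this up: your base case uses the convention (also used implicitly throughout the paper) that for $w\in V_m$ one has $\apath_G(w,t)$ exactly when $w=t$, which is how the last, edge-free, universal slice is intended to be read.
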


\proof
First, we give a straightforward transformation from $\ASAGAP$ to $\fe{\KK_0}$.
Second, we turn this into a reduction  from $\overline{\ASAGAP}$ to $\fe{\KK^{\rightarrow}_0}$.

Let $\langle G,s,t\rangle$ be an instance of \ASAGAP, where $G=(V,E)$ is a slice graph with $m$ slices.
We construct the model $\Model{M}_G:=(V,E\cup\{(t,t)\},\xi)$ 
and the formula $\varphi_G:=\Diamond\Box\Diamond\cdots\Box\Diamond(\Diamond\top)$ 
that consists of a sequence of $m-1$ alternating modal operators starting with $\Diamond$
that is followed by $\Diamond\top$.
Notice that $t$ is the only state in $V_m$ 
that has a successor, and therefore it is the only state in $V_m$ where $\Diamond\top$ is satisfied.
Intuitively speaking, the prefix of $\Diamond\top$ in $\varphi_G$ that consists of alternating modal operators 
simulates the alternating path through $G$ from $s$,
and eventually $\Diamond\top$ is satisfied on all the endpoints of this alternating path only if all endpoints equal $t$. 
It is not hard to see that 
an alternating path from $s$ to $t$ exists in $G$ if and only if $\Model{M}_G,s \mmodels \varphi_G$,
i.e. $\langle G,s,t \rangle\in\ASAGAP$ if and only if $\Model{M}_G,s \mmodels \varphi_G$.
Accordingly, $\langle G,s,t \rangle\in\overline{\ASAGAP}$ if and only if $\Model{M}_G,s \mmodels \neg\varphi_G$,
where $\overline{\ASAGAP}$ denotes the complement of $\ASAGAP$.

\markit{We now transform $\neg\varphi_G$ into an equivalent formula in the strictly implicational fragment.
Using duality of $\Diamond$ and $\Box$ we obtain that $\neg\Diamond\Box\Diamond\cdots\Box\Diamond(\Diamond\top)$
is equivalent to $\Box\neg\Box\neg\Box\cdots\neg\Box\neg\Box(\Box\bot)$.
Every subformula $\Box\neg\alpha$ is equivalent to $\Box(\alpha\rightarrow\bot)$,
and the final $\Box(\Box\bot)$ is equivalent to $\Box(\top\rightarrow\Box(\top\rightarrow\bot))$,
where $\top\equiv\Box(\bot\rightarrow\bot)$.
In this way, $\neg\varphi_G$ can be transformed into the equivalent formula $\varphi'_G$
that belongs to the strictly implicational fragment.
It is straightforward that the mapping $\langle G,s,t \rangle\mapsto\langle \varphi'_G,\Model{M}_G,s \rangle$
can be computed in logarithmic space.
Since $\varphi'_G$ contains no variables and belongs to the strictly implicational fragment, 
this yields $\overline{\ASAGAP} \mlogred \fe{\KK^{\rightarrow}_0}$,
and the \p-hardness of $\fe{\KK^{\rightarrow}_0}$ follows from
the \p-completeness of $\ASAGAP$ (Lemma~\ref{lemma:ASAGAP-P-complete}) and the closure of \p under complement.}
\qed

In general, the slice graph is transformed into a frame (of a Kripke model)
to be used in an instance of the model checking problem.
Since the semantics of the logics under consideration 
is defined by Kripke models with frames that are transitive (and reflexive),
we need to produce frames that are transitive (and reflexive).
The straightforward way would be to take the transitive closure of a slice graph.
But this cannot be computed with the resources that are allowed for our reduction functions,
i.e.~in logarithmic space.
Fortunately, slice graphs can easily be made transitive by adding
all edges that ``jump'' from a node to a node that is at least two slices higher.
Clearly, the resulting graph % is not anymore a slice graph, but it
is a transitive supergraph of the transitive closure of the slice graph.
In order to make the reductions from \ASAGAP to the model checking problems work,
the valuation function of the Kripke model and the formula that has to be evaluated
have to be constructed in a way that ``ignores'' these edges that jump over a slice.

\begin{figure}[t]
\mbox{}\hfill
\begin{tikzpicture}[scale=0.6] \ptgraph \end{tikzpicture}
\hfill
\begin{tikzpicture}[scale=0.6] \ptclosuregraph \end{tikzpicture}
\hfill\mbox{}
\caption{A slice graph and its pseudo-transitive closure}
\label{fig:pseudo-trans}
\end{figure}
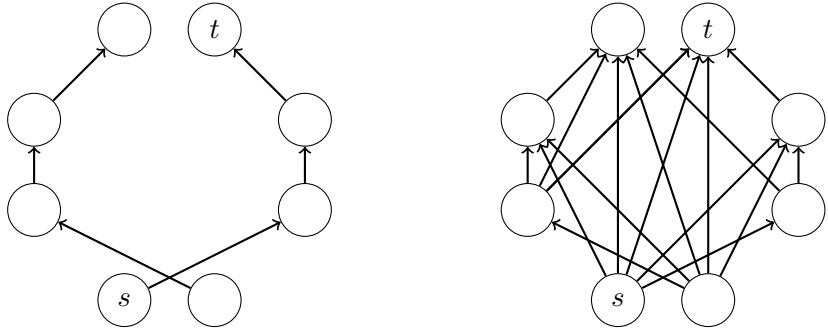

\begin{defi} \label{def:pseudo_trans}
Let $V_{\geq i} = \bigcup_{j=i,i+1,\ldots,m} V_j$, and $V_{\leq i} = \bigcup_{j=1,2,\ldots,i} V_j$.
The \emph{pseudo-transitive closure} of a slice graph $G=(V, E)$
with $m$ slices $V=V_1\cup \cdots \cup V_m$
is the graph $G'=(V,E')$ 
where
 
\begin{mathe}
	E'& := & E  ~~\cup~~ \bigcup_{i=1,2,\ldots,m-2} \klauf V_i \times V_{\geq i+2} \klzu ~~.
\end{mathe}
The \emph{reflexive and pseudo-transitive closure} of the slice graph $G$
is the graph $G''=(V,E'')$ 
where 

\begin{mathe}
	E'' & := & E'  ~~\cup~~ \{(u,u)\mid u\in V\} ~~.
\end{mathe}
\end{defi}\vspace{-6 pt}

\noindent An example for a slice graph and its pseudo-transitive closure
is shown in Figure~\ref{fig:pseudo-trans}.

%%%%%%%%%%%%%%%%%%%%%%%%%%%%%%%%%%%%%%%%%%%%%%%%%%%%%%%%%%%%%%%%%%%%%%%%%%%%%%%%%%%%%%%%%%%%%%%%%%%%%%%%%%%%
% Lowerbounds for intuitionistic logics
%%%%%%%%%%%%%%%%%%%%%%%%%%%%%%%%%%%%%%%%%%%%%%%%%%%%%%%%%%%%%%%%%%%%%%%%%%%%%%%%%%%%%%%%%%%%%%%%%%%%%%%%%%%%

\section{Lower bounds for intuitionistic logics}
\label{sec:intuitionistic lower bounds}

We investigate the complexity of the model checking problem for 
fragments of the intuitionistic logics \KC, \IPC, \FPL, and \BPL in Section~\ref{subsec:p-hard}.
Our basic proof idea is presented in the proof of Theorem~\ref{thm:KC-P-hard}
where we show the \p-hardness of $\fe{\KC^{\iImpl}}$.
This hardness result carries directly over to $\fe{\IPC^{\iImpl}}$ and $\fe{\BPL^{\iImpl}}$.
In order to obtain results for fragments with a restricted number of variables
we extend the construction from the basic proof.
In a first step, we show the \p-hardness of model checking for $\FPL^{\iImpl}$ even if we consider formulas with only one variable, i.e. $\fe{\FPL_1^{\iImpl}}$.
The same proof works for the \p-hardness of $\fe{\BPL_1^{\iImpl}}$.
In a second step, we yield \p-hardness of $\fe{\BPL_0}$.
Notice that it remains open whether $\fe{\BPL_0^{\iImpl}}$ is \p-hard, too.
Our last \p-hardness result in Section~\ref{subsec:p-hard} shows that $\fe{\KC_2}$ and $\fe{\IPC_2}$ are \p-hard.
%This follows almost directly from the proof construction of Theorem 4 in \cite{Rybakov06}.
In Section~\ref{subsec:optimality} we show that the results for 
$\fe{\FPL_1^{\iImpl}}$, $\fe{\KC_2}$, and $\fe{\IPC_2}$ are optimal in the sense, 
that with one variable less the model checking problem cannot be \p-hard, unless unexpected collapses of complexity classes happen.

%First we show \p-hardness for the purely implicational fragments (Section~\ref{subsec:implicational_fragments}).
%These basic proofs then will be extended 
%in order to obtain results for fragments with a restricted number of variables
%(Section~\ref{subsec:bounded_number_of_variables}).
%Finally, we show that all these hardness results are optimal in the sense,
%that using one variable less the model checking problem cannot be \p-hard,
%unless unexpected collapses of complexity classes happen (Section~\ref{subsec:optimality}).

%%------  P-Haerte des implicational fragments fuer FPL

\subsection{\texorpdfstring{\p-hard}{P-hard} fragments.}
\label{subsec:p-hard}
\ \vspace{1ex}

\noindent
We present the basic construction in the proof of Theorem~\ref{thm:KC-P-hard}, where we show the \p-hardness of the model checking problem 
for the implicational fragment of \KC.
For this, we use a logspace reduction from \ASAGAP to $\fe{\KC^{\iImpl}}$.
The \p-hardness of the model checking problems for the implicational fragments of \IPC and \BPL follow straightforwardly.

\begin{thm}\label{thm:KC-P-hard}
The model checking problem for $\KC^{\iImpl}$ is \p-hard.
\end{thm}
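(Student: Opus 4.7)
The plan is to exhibit a logspace reduction $\ASAGAP \mlogred \fe{\KC^{\iImpl}}$; combined with Lemma~\ref{lemma:ASAGAP-P-complete}, this yields the claimed \p-hardness (the matching upper bound is already known via the $\GT$-translation and Theorem~\ref{thm:FisherLadner}). Given an instance $\langle G,s,t\rangle$ of \ASAGAP with $G=(V,E)$, slices $V=V_1\cup\cdots\cup V_m$, $s\in V_1\cap V_{\exists}$, and $t\in V_m\cap V_{\forall}$, I would produce in logspace a \KC-model $\Model{M}_G := (V\cup\{\topnode\},\triangleleft,\xi)$ together with a strictly implicational formula $\varphi_G$ such that $\Model{M}_G, s \imodels \varphi_G$ iff $\apath_G(s,t)$. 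The relation $\triangleleft$ is the reflexive and pseudo-transitive closure of $E$ (Definition~\ref{def:pseudo_trans}) augmented with the edges $(u,\topnode)$ for every state $u$; then $\triangleleft$ is reflexive and transitive, and $\topnode$ is a common upper bound of any two states, so $(V\cup\{\topnode\},\triangleleft)$ is a directed preorder and $\Model{M}_G$ is a legitimate \KC-model. For the valuation I use $m-1$ level markers $a_1,\ldots,a_{m-1}$ and a target marker $p$: set $a_i$ true exactly at the states of $V_{i+1}\cup\cdots\cup V_m\cup\{\topnode\}$, and $p$ true exactly at $\{t,\topnode\}$. Each truth set is $\triangleleft$-upward closed, so $\xi$ is monotone, and the whole model is clearly writable slice-by-slice in logarithmic workspace.

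The formula $\varphi_G$ is defined by downward recursion on the slice index. Set $\psi_m := p$, and for $1 \le i < m$ build $\psi_i$ from $\psi_{i+1}$ by an implicational gadget whose shape depends on whether $V_i$ is universal ($i$ even) or existential ($i$ odd). A universal slice is handled directly by $\psi_i := a_i \iImpl \psi_{i+1}$, which by the Kripke semantics of $\iImpl$ expresses that every $\triangleleft$-successor in $V_{i+1}$ (or above) satisfies $\psi_{i+1}$; an existential slice must encode the existence of an $E$-successor in $V_{i+1}$ satisfying $\psi_{i+1}$ by a nested implicational pattern that exploits the classical-style principles available at the top of a directed preorder. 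Crucially, each $\psi_i$ is designed so that it already holds at every state in $V_{i+1}\cup\cdots\cup V_m\cup\{\topnode\}$, which neutralises the pseudo-transitive and reflexive edges that $\triangleleft$ possesses over and above $E$ (these otherwise add spurious successors to the $\iImpl$-quantifier). A routine downward induction on $i$ then shows $\Model{M}_G, u \imodels \psi_i$ iff $\apath_G(u,t)$ for every $u\in V_i$; in particular $\Model{M}_G, s \imodels \psi_1$ iff $\langle G,s,t\rangle\in\ASAGAP$. Taking $\varphi_G := \psi_1$ completes the reduction.

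The technical heart of the argument is the existential step. Because intuitionistic $\iImpl$ is intrinsically a universal quantifier over $\triangleleft$-successors, simulating an existential choice among $V_{i+1}$-successors within the purely implicational fragment requires a gadget that simultaneously (i) exploits the common upper bound $\topnode$ to invoke a \KC-style weak-excluded-middle pattern, (ii) uses the marker $a_i$ to single out the relevant slice against all pseudo-transitive side-successors, and (iii) depends on the inductive invariant that $\psi_{i+1}$ holds at every state above $V_{i+1}$. Making these three requirements mesh without leaving the $\iImpl$-only fragment, and then verifying that the gadget really does transport the induction hypothesis across an existential slice, is the main obstacle I expect; once it is handled, the construction and its correctness are routine.
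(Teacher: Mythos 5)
There is a genuine gap: the existential-slice gadget, which you yourself identify as ``the main obstacle,'' is never constructed, and it is precisely the heart of the proof. Your model $\Model{M}_G$ is essentially the paper's (reflexive pseudo-transitive closure plus a top node, with upward-closed slice markers and a target marker), and your universal step and neutralisation invariant are plausible design goals, but a proof must exhibit the implicational formula that simulates the existential choice and verify it against the spurious reflexive and pseudo-transitive successors; without it the reduction does not exist. Moreover, the partial pieces you do give do not compose as claimed: with your universal gadget $\psi_i := a_i\iImpl\psi_{i+1}$ and your direct-polarity invariant, a state $u\in V_{i+1}$ with $\mathit{not}~\apath_G(u,t)$ satisfies $a_i$ but refutes $\psi_{i+1}$ at its reflexive point, so $\psi_i$ fails at $u$ --- contradicting your stipulation that $\psi_i$ holds everywhere in $V_{i+1}\cup\cdots\cup V_m\cup\{\topnode\}$, which is exactly the property you need to neutralise the jump edges one slice further down.

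The paper resolves this differently from what you are searching for: it does not build separate gadgets for $\exists$- and $\forall$-slices at all. It uses the single uniform scheme $\psi_m := a_m\iImpl a_{m+1}$, $\psi_j := \psi_{j+1}\iImpl a_j$, with the slice marker in the \emph{consequent}, and proves an alternating-polarity invariant: at odd (existential) slices $\apath_G(w,t)$ iff $\Model{M}_G,w\imodels\psi_i$, while at even (universal) slices $\apath_G(w,t)$ iff $\Model{M}_G,w\not\imodels\psi_i$. Because $a_i$ is false exactly on $V_i$ and below and true above, $\psi_{i+1}\iImpl a_i$ evaluated at a state of $V_i$ acts as a local negation of $\psi_{i+1}$, and refuting $\psi_{i+1}$ is witnessed exactly by some $V_{i+1}$-successor refuting it; thus the existential quantifier is obtained as the failure of the built-in universal one, with no appeal to directedness or any weak-excluded-middle pattern (which is why the same reduction immediately yields hardness for $\IPC^{\iImpl}$ and $\BPL^{\iImpl}$ as well). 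Placing the marker in the consequent also makes the neutralisation property trivial, since every successor of a state in $V_{\geq i+1}$ satisfies $a_i$. If you want to salvage your direct-polarity plan, you would have to invent a genuinely new existential gadget and re-establish the invariant; the alternating-polarity trick is the standard way around this, and your proposal as written does not contain it.
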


\proof
We show $\ASAGAP \mlogred \fe{\KC^{\iImpl}}$.
The result then follows from Lemma~\ref{lemma:ASAGAP-P-complete}.

Let $\langle G,s,t\rangle$ be an instance of \ASAGAP.
We show how to construct a model $\Model{M}_G$ and a formula $\psi_G$ 
such that $\langle G,s,t\rangle\in\ASAGAP$ if and only if $\Model{M}_G,s\imodels \psi_G$.
Let the slice graph $G=(\mathcal{V}, E)$ have $m$ slices, with $\mathcal{V}=V_{\exists}\cup V_{\forall}$,
and $V_{\exists} = V_1\cup V_3\cup \cdots \cup V_{m-1}$, and $V_{\forall} = V_2\cup V_4\cup \cdots \cup V_{m}$.
We use $V_{\geq i}$ to denote $\bigcup_{j\geq i} V_j$.

In order to use $G$ as a frame of a model for \KC, it must be a directed preorder.
To get $(V,\leqslant)$ we build the pseudo-transitive closure of $G$, add the slice $V_{m+1}:=\{\topnode\}$, add edges from every node in $\mathcal{V}$ to $\topnode$, and build the reflexive closure.
It is clear that $(V,\leqslant)$ can be computed from $G$ in logarithmic space.
For simplicity of notation we write
$x<y$ or $y>x$ for $x\leqslant y$ and $x\not=y$, and we also use $x\geqslant y$ and $x>y$ in the same way.
The variables that we will use in our formulas are $a_1,\ldots,a_{m+1}$.
Informally, $a_i$ is satisfied in the states of the slices $V_{i+1},\ldots,V_{m+1}$, further $a_{m}$ is satisfied in the goal node $t$, and $a_{m+1}$ is satisfied in $\topnode$.
Define the valuation function $\xi$ by
$\xi(a_i):= V_{i+1}\cup\cdots\cup V_{m+1}$ (for $i=1,2,\ldots,m-1$), $\xi(a_{m}) := \{t,\topnode\}$, and $\xi(a_{m+1}) := \{\topnode\}$.
The Kripke model $\Model{M}_G = (V, \leqslant, \xi)$ is a model that satisfies the requirements for \KC. 
\begin{figure}[h]


\hrulefill
\vspace{1ex}

\slicegraphexample %
\hfill
\kcproofexample
\caption{A slice graph $G$, and the model $\Model{M}_G$ as constructed in the proof of Theorem~\ref{thm:KC-P-hard}.
The edges to the top node are not drawn,
reflexive edges are not drawn,
and the pseudo-transitive edges are drawn dashed.
The valuation marks the nodes (resp. the slices).
The fat edges indicate that $\apath_G(s,t)$ holds.}

\hrulefill
\label{fig:kcproofexample}
\end{figure}

Figure~\ref{fig:kcproofexample} shows a slice graph $G$ with $m=4$ slices and the Kripke model $\Model{M}_G$ that is transformed from it.
We will use the formulas $\psi_1,\ldots,\psi_m$ in order to express the $\apath_G$ property on $\Model{M}_G$.

\begin{mathe}
	\psi_{m} & := & a_{m}\iImpl a_{m+1} \\
	\psi_{j} & := & \psi_{j+1}\iImpl a_{j} \text{ for } j=m-1,m-2,\ldots,1
\end{mathe}
Next we will show that satisfaction of $\psi_i$ in slice $V_i$ depends only on the edges of the graph $G$ and not on the reflexive and pseudo-transitive edges that were added in order to obtain the Kripke structure.

\begin{claim}\label{claim:phi-property}\label{claim:phi_i-und-phi_i+1}\label{claim:phi-monoton}
For all $i=1,2,\ldots,m-1$ the following holds.
\begin{enumerate}[leftmargin=\parindent]
\item \label{ai-monoton}
      For all $w\in V_{\geq i+1}$ holds $\Model{M}_G,w\imodels \psi_i$.
\item \label{mutual-complement}
      For all $w\in V_i$ holds   
           $\Model{M}_G,w\imodels \psi_{i}$ if and only if $\Model{M}_G,w \not\imodels \psi_{i+1}$.
\item \label{strict-alternation}
For all $w\in V_i$ holds  $\Model{M}_G,w \imodels \psi_{i}$ if and only if $\exists u> w, u\in V_{i+1}: \Model{M}_G,u\not\imodels \psi_{i+1}$.
\end{enumerate}
\end{claim}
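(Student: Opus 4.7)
I would prove the three parts together by downward induction on $i$, from $i=m-1$ down to $i=1$. The induction rests on three structural facts about the constructed model: (i) $\xi(a_i)=V_{\geq i+1}$, so $a_i$ holds exactly on the slices strictly above $V_i$; (ii) every strict $\leqslant$-successor of a node in $V_j$ lies in $V_{\geq j+1}$, since the pseudo-transitive closure adds only $V_j \to V_{\geq j+2}$ edges, the added top edges go to $V_{m+1}=\{\topnode\}$, and the original graph edges run $V_j \to V_{j+1}$; and (iii) for $w\in V_i$, the set $\{u\in V_{i+1}: u>w\}$ coincides with $\{u\in V_{i+1}:(w,u)\in E\}$, because pseudo-transitive edges out of $V_i$ skip over $V_{i+1}$. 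Kripke monotonicity of $\imodels$ will be used once.

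Part~1 is then immediate: for $w\in V_{\geq i+1}$ fact~(ii) forces every $n\geqslant w$ into $V_{\geq i+1}$, so $n\imodels a_i$ by~(i), and the implication $\psi_{i+1}\iImpl a_i$ is trivially satisfied at $w$. Part~2 follows by inspecting $\psi_i$ at $w\in V_i$: every strict successor again lives in $V_{\geq i+1}$ where $a_i$ holds, so only the reflexive point $n=w$ can invalidate the implication; since $a_i$ fails at $w$, the implication holds at $w$ exactly when $\psi_{i+1}$ fails there, giving the biconditional asserted in part~2.

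Part~3 is the substantive step. Using fact~(iii), it is equivalent to: $w\imodels\psi_i$ iff there is a graph-successor $u\in V_{i+1}$ of $w$ with $u\not\imodels\psi_{i+1}$. The direction $(\Leftarrow)$ is immediate from monotonicity combined with part~2: if $u\not\imodels\psi_{i+1}$ and $w<u$, then $w\not\imodels\psi_{i+1}$, hence $w\imodels\psi_i$. For $(\Rightarrow)$ I would again apply part~2 and reduce to showing that $w\not\imodels\psi_{i+1}$ forces the existence of such a $u$. Expanding $\psi_{i+1}=\psi_{i+2}\iImpl a_{i+1}$ (or $a_m\iImpl a_{m+1}$ in the base case $i=m-1$), any witness $n\geqslant w$ to the failure has $n\not\imodels a_{i+1}$, so $n\in V_{\leq i+1}$, whence either $n\in V_{i+1}$ is a graph-successor of $w$ --- in which case part~2 at index $i+1$ converts $n\imodels\psi_{i+2}$ into $n\not\imodels\psi_{i+1}$, finishing the argument --- or $n=w$, in which case monotonicity propagates $\psi_{i+2}$ to every graph-successor $u\in V_{i+1}$ of $w$, at least one of which exists because non-final slices in a slice graph have positive outdegree, and part~2 at $u$ once more yields the required $u\not\imodels\psi_{i+1}$. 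The base case $i=m-1$ is analogous, with $t\in V_m$ playing the role of the unique refuter of $\psi_m$ in $V_m$.

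The main obstacle is precisely the subcase $n=w$ in the forward direction of part~3: the pseudo-transitive ``shortcut'' edges out of $w$ threaten to produce witnesses that are invisible to the alternating graph. The resolution is the joint use of Kripke monotonicity, part~2 at the next index (translating $\psi_{i+2}$ on $V_{i+1}$ into $\neg\psi_{i+1}$), and the slice-graph outdegree hypothesis, which together turn the internal witness at $w$ into an honest graph-edge witness in $V_{i+1}$ --- exactly the point that makes the formula's semantics faithful to the alternating graph despite the added transitive edges.
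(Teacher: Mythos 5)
Your proposal is correct and follows essentially the same route as the paper: parts (1) and (2) by direct slice analysis of which successors can falsify the implication, and part (3) reduced via part (2) to a statement about $\psi_{i+1}$, using monotonicity for one direction and part (2) at index $i+1$ together with the positive-outdegree condition for the other. The only differences are presentational: you phrase the forward direction of part (3) as a direct witness extraction (with an explicit treatment of the base case $i=m-1$) and wrap everything in a downward induction, whereas the paper argues the same implication contrapositively and needs no induction since parts (1) and (2) are established uniformly in $i$.
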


\noindent
\emph{Proof of Claim.}
For part (\ref{ai-monoton}), notice that 
$\psi_i=(\cdots((a_{m}\iImpl a_{m+1})\iImpl a_{m-1})\iImpl \cdots \iImpl a_{i+1})\iImpl a_{i}$.
Since $\xi(a_i)=V_{\geq i+1}$, the right-hand side of $\psi_i$ is satisfied in all states in $V_{>i}$.
Therefore $\psi_i$ is satisfied in all states in $V_{\geq i}$, too.

Part (\ref{mutual-complement}) expresses that $\psi_{i}$ and $\psi_{i+1}$ 
behave like the mutual complement in slice $V_i$, and is shown as follows.
Let $w\in V_i$.

\begin{mathe}
	& \hspace{-6.5ex} \Model{M}_G,w \imodels \psi_{i} \\
	\Leftrightarrow & \forall v \geqslant w : \text{if } \mKlasse{M}_G,v \imodels \psi_{i+1} \text{ then } \mKlasse{M}_G,v \imodels a_{i} & 
	                       \text{(semantics of $\iImpl$)} \\
	\Leftrightarrow & \text{if } \mKlasse{M}_G,w \imodels \psi_{i+1} \text{ then } \mKlasse{M}_G,w \imodels a_{i} & \text{(since $\xi(a_i)=V_{>i}$)} \\
	\Leftrightarrow & \mKlasse{M}_G,w \not\imodels \psi_{i+1} & \text{(since $\mKlasse{M}_G,w \not \imodels a_{i}$)}
\end{mathe}

Part (\ref{strict-alternation}) can be proven by proving
$\Model{M}_G,w \not\imodels \psi_{i+1}$ if and only if $\exists u> w, u\in V_{i+1}: \Model{M}_G,u\not\imodels \psi_{i+1}$,
according to (\ref{mutual-complement}).
The direction from right to left follows immediately from part (1) and the monotonicity of intuitionistic logic.
For the other direction,
assume $\forall u> w, u\in V_{i+1}: \Model{M}_G,u\imodels \psi_{i+1}$.
Firstly, this yields $\forall u> w:$ if $\Model{M}_G,u\imodels \psi_{i+2}$ then $\Model{M}_G,u\imodels a_{i+1}$ ($\ast\ast$),
and secondly $\forall u> w, u\in V_{i+1}: \Model{M}_G,u\not\imodels \psi_{i+2}$ (by (\ref{mutual-complement})).
From the latter, it follows
by the monotonicity property of intuitionistic logic that 
$\Model{M}_G,w \not\imodels \psi_{i+2}$. 
Notice that $\Model{M}_G,w \not\imodels a_{i+1}$ by construction of $\xi$,
and therefore we have: if $\Model{M}_G,w \imodels \psi_{i+2}$ then $\Model{M}_G,w \imodels a_{i+1}$.
Together with $(\ast\ast)$ follows 
$\forall u\geqslant w:$ if  $\Model{M}_G,u\imodels \psi_{i+2}$ then $\Model{M}_G,u\imodels a_{i+1}$.
This means $\Model{M}_G,w\imodels\psi_{i+1}$. ~\qedclaim

It is our goal to show that $\psi_1$ is satisfied in state $s\in V_1$ 
if and only if graph $G$ has an alternating $s$-$t$-path, i.e. $\apath_G(s,t)$.
We do this stepwise.

\begin{claim}\label{claim:ipc-alternation}
For all $i=1,2,\ldots,m$ and all $w\in V_i$ holds: 
\begin{enumerate}[leftmargin=\parindent]
\item if $i$ is odd:  $\apath_G(w,t)$ if and only if $\Model{M}_G,w    \imodels \psi_{i}$, and
\item if $i$ is even: $\apath_G(w,t)$ if and only if $\Model{M}_G,w\not\imodels \psi_{i}$.
\end{enumerate}
\end{claim}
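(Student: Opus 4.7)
The plan is to prove the claim by downward induction on $i$, from $i=m$ down to $i=1$, using Claim~\ref{claim:phi-property} as the main inductive tool. Since $V_m \subseteq V_{\forall}$ the index $m$ is even, so the base case $i=m$ falls under the even clause. For $w \in V_m$ the $\leqslant$-successors of $w$ are only $w$ itself and $\topnode$: at $\topnode$ the implication $a_m \iImpl a_{m+1}$ holds trivially, while at $w$ we have $\Model{M}_G, w \not\imodels a_{m+1}$, and $\Model{M}_G, w \imodels a_m$ iff $w=t$. Hence $\Model{M}_G, w \not\imodels \psi_m$ iff $w=t$, which matches $\apath_G(w,t)$ since a node in the last slice alternatingly reaches $t$ only by being $t$ itself.

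For the inductive step, fix $i < m$ and assume the claim holds at $i+1$. The crucial geometric observation is that for $w\in V_i$ and $u\in V_{i+1}$ one has $u>w$ iff $(w,u)\in E$, because every pseudo-transitive edge added in the construction of $\leqslant$ skips at least two slices. Combining this with Claim~\ref{claim:phi-property}(\ref{strict-alternation}) gives that $\Model{M}_G, w \imodels \psi_i$ iff there exists $u\in V_{i+1}$ with $(w,u)\in E$ and $\Model{M}_G, u \not\imodels \psi_{i+1}$, and Claim~\ref{claim:phi-property}(\ref{mutual-complement}) supplies the dual characterisation that $\Model{M}_G, w \not\imodels \psi_i$ iff $\Model{M}_G, u \imodels \psi_{i+1}$ for every such $u$.

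If $i$ is odd then $i+1$ is even, so the inductive hypothesis replaces $\Model{M}_G, u \not\imodels \psi_{i+1}$ by $\apath_G(u,t)$; the resulting condition on $w$ is exactly the existential-player unfolding of $\apath_G(w,t)$ at an existential node, giving the odd case of the claim. If $i$ is even then $i+1$ is odd, and the inductive hypothesis turns $\Model{M}_G, u \imodels \psi_{i+1}$ into $\apath_G(u,t)$; the characterisation of $\not\imodels \psi_i$ then matches the universal-player unfolding of $\apath_G(w,t)$ at a universal node, giving the even case.

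The main delicate point is tracking the parity flip between $\imodels \psi_i$ and $\not\imodels \psi_i$ as the slice index decreases, and verifying that the reflexive and pseudo-transitive edges added to form $\leqslant$ do not disturb the bijection between $\leqslant$-neighbours in $V_{i+1}$ and edges of $G$ from $w$; both points have already been encapsulated in Claim~\ref{claim:phi-property}, which is precisely why the inductive step reduces to a clean matching with the defining clauses of $\apath_G$.
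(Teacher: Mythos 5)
Your proof is correct and follows essentially the same route as the paper: induction from $i=m$ downward, with the base case read off from the valuation at slice $V_m$ and the inductive step combining Claim~\ref{claim:phi-property}(\ref{strict-alternation}) (and its negation for the universal case) with the observation that $\leqslant$-successors of $w\in V_i$ inside $V_{i+1}$ are exactly the $E$-successors, then matching the defining clauses of $\apath_G$. The only cosmetic difference is that the universal-case characterisation is just the negation of part~(\ref{strict-alternation}) rather than an appeal to part~(\ref{mutual-complement}), which does not affect correctness.
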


\noindent
\emph{Proof of Claim.}
We prove the claim by induction on $i$.
The base case $i=m$ considers an even $i$.
Let $w\in V_m$.
The following equivalences are straightforward.

\begin{mathe}
									& \hspace{-6.5ex} \apath_G(w,t) & \\
	\Leftrightarrow & w=t & \\
%	\Leftrightarrow & \Model{M}_G,w\imodels a_{m} & (\xi(a_m)\cap V_m = \{t\})\\ 
	\Leftrightarrow & \Model{M}_G,w \not\imodels a_{m}\iImpl a_{m+1} ~~~(=\psi_m) & \\
\end{mathe}
For the induction step, consider $i<m$.
First, assume that $i$ is odd.
Then the slice $V_i$ consists of existential nodes.
Let $w\in V_i$.

\begin{mathe}
									& \hspace{-6.5ex} \apath_G(w,t) &  \\
	\Leftrightarrow & \exists u, (w,u)\in E : \apath_G(u,t) & \text{(definition of $\apath_G$)} \\
	\Leftrightarrow & \exists u>w, u\in V_{i+1}:\Model{M}_G,u\not\imodels \psi_{i+1} & \text{(induction hypothesis, construction of $\Model{M}_G$)} \\
	\Leftrightarrow & \Model{M}_G,w \imodels \psi_{i} & \text{(Claim~\ref{claim:phi-property}(\ref{strict-alternation}))}
\end{mathe}
Second, assume that $i$ is even.
Then the slice $V_i$ consists of universal nodes.
Let $w\in V_i$.

\begin{mathe} 
									& \hspace{-6.5ex} \apath_G(w,t) & \\
	\Leftrightarrow & \forall u, (w,u)\in E: \apath_G(u,t) & \text{(definition of $\apath_G$)}\\
	\Leftrightarrow & \forall u>w, u\in V_{i+1}:\Model{M}_G,u\imodels \psi_{i+1} & \text{(induction hypothesis, construction of $\Model{M}_G$)}\\
	\Leftrightarrow & \Model{M}_G,w\not\imodels \psi_{i} & \text{(Claim~\ref{claim:phi-property}(\ref{strict-alternation}))} ~\qedclaim
\end{mathe}

Let $\psi_G:=\psi_1$.
From Claim~\ref{claim:ipc-alternation} it now follows
that $\langle G,s,t\rangle\in \ASAGAP$ if and only if
$\Model{M}_G,s \imodels \psi_{G}$, 
i.e. $\langle \psi_G,\Model{M}_G,s\rangle \in \fe{\KC^{\iImpl}}$.
%The reduction is the mapping
%
%\begin{mathe}
%  \langle G,s,t \rangle & \longmapsto & \langle \psi_G, \Model{M}_G, s \rangle
%\end{mathe}
%
%where $\langle G,s,t \rangle$  is an instance of \ASAGAP
%and $\langle \psi_G, \Model{M}_G, s \rangle$ is an instance of $\fe{\KC^{\rightarrow}}$.
Since $\Model{M}_G$ and $\psi_G$ can be constructed from $G$ using logarithmic space,
it follows that $\ASAGAP \mlogred \fe{\KC^{\iImpl}}$.
\qed

Clearly, the same lower bound holds for the implicational fragments of \IPC and \BPL.

\begin{cor}\label{cor:implIPC-BPL-p-hard}
	The model checking problem for $\IPC^{\iImpl}$ and $\BPL^{\iImpl}$ is \p-hard.\qed
\end{cor}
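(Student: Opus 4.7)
The plan is to observe that the construction given in the proof of Theorem~\ref{thm:KC-P-hard} carries over verbatim, so the same map $\langle G,s,t\rangle \mapsto \langle \psi_G,\Model{M}_G,s\rangle$ witnesses both $\ASAGAP \mlogred \fe{\IPC^{\iImpl}}$ and $\ASAGAP \mlogred \fe{\BPL^{\iImpl}}$. First I would check that $\Model{M}_G=(V,\leqslant,\xi)$ is already an admissible model for the two weaker logics: the frame $\leqslant$ was built as the reflexive and pseudo-transitive closure of the slice graph together with edges to $\topnode$, so it is in particular a preorder, which is exactly what \IPC requires, and every preorder is trivially transitive, which is all that \BPL requires; monotonicity of $\xi$ was already verified for the \KC argument and does not depend on directedness. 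In other words, every \KC-model is automatically an \IPC-model and a \BPL-model, so $\Model{M}_G$ qualifies as an instance for both $\fe{\IPC^{\iImpl}}$ and $\fe{\BPL^{\iImpl}}$.

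Next I would note that the formula $\psi_G=\psi_1$ is already in the implicational fragment $\{\iImpl,\bot\}$ by construction, and that the satisfaction relation $\imodels$ is defined uniformly across \KC, \IPC and \BPL. Hence the correctness argument is inherited, provided nothing in the proof of Claims~\ref{claim:phi-property} and \ref{claim:ipc-alternation} exploited the directedness of the preorder. Inspecting those claims, the only frame properties invoked are transitivity (for propagating satisfaction through pseudo-transitive jumps), reflexivity (so that the quantifier $\forall v \geqslant w$ covers $v=w$ itself), and monotonicity of the valuation. None of these uses the existence of a common successor for arbitrary pairs of states, so the equivalences $\apath_G(w,t) \Leftrightarrow \Model{M}_G,w \imodels \psi_i$ (for odd $i$) and $\apath_G(w,t) \Leftrightarrow \Model{M}_G,w \not\imodels \psi_i$ (for even $i$) remain valid when the model is read as an \IPC- or \BPL-model.

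Putting these two observations together yields $\langle G,s,t\rangle \in \ASAGAP$ iff $\Model{M}_G,s\imodels\psi_G$ in the \IPC and the \BPL semantics, and since the reduction is the same logspace-computable map as before, \p-hardness of $\fe{\IPC^{\iImpl}}$ and $\fe{\BPL^{\iImpl}}$ follows from Lemma~\ref{lemma:ASAGAP-P-complete}. There is no real technical obstacle; the only point that needs explicit verification is the easy sanity check that the original construction never secretly used directedness of the \KC frame, which the above inspection of the claim proofs confirms.
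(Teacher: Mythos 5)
Your proposal is correct and follows exactly the paper's (implicit) argument: the model $\Model{M}_G$ built in the proof of Theorem~\ref{thm:KC-P-hard} is a directed preorder, hence in particular a valid \IPC- and \BPL-model, the formula $\psi_G$ is implicational, and the satisfaction relation $\imodels$ is the same for all three logics, so the very same logspace reduction from \ASAGAP establishes \p-hardness of $\fe{\IPC^{\iImpl}}$ and $\fe{\BPL^{\iImpl}}$. Your explicit check that Claims~\ref{claim:phi-property} and~\ref{claim:ipc-alternation} never use directedness is exactly the sanity check the paper leaves to the reader.
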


%--------------------------------------------------------------------------------------------------
%  Fragmente mit beschränkter Variablenzahl
%--------------------------------------------------------------------------------------------------

\noindent
The basic construction of the reduction from the above proof can be seen as follows.
The frame of the model contains all information about the \ASAGAP instance
from which it is constructed, but there is some ``noise''
by the pseudo-transitive (and reflexive) edges.
The valuation function gives additional information on the structure of the \ASAGAP instance.
It says where the goal node $t$ sits, 
and it allows to check the distances of any state to the upper most slice.
The formula puts both parts together.
It uses the variables to filter out the original \ASAGAP instance and to evaluate it.

If we restrict the number of variables to be used in the formula,
we need a different approach to measure the distances of the states to the upper most slice.
%
%
%-------- FPL_1 ist P-hart  ------------------------------------------------------------
%
%
For irreflexive frames, we can replace the variables by formulas that measure this distance.
To distinguish the goal node from the other nodes we use one variable.
This yields that $\fe{\FPL_1^{\iImpl}}$ is \p-hard (Theorem~\ref{thm:FPL1-P-hard}).
In Theorem~\ref{thm:FPL0-in-LOGCFL} we show that we cannot save this variable.
Essentially, in the fragment of \FPL without variables we can measure distances, but we cannot do more.

\begin{thm}\label{thm:FPL1-P-hard}
%\fe{\FPL_1} is \p-hard, even for formulas that have $\rightarrow$ as only connective.
The model checking problem for $\FPL_1^{\iImpl}$ is \p-hard.
\end{thm}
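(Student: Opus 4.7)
The plan is to adapt the reduction $\ASAGAP \mlogred \fe{\KC^{\iImpl}}$ from the proof of Theorem~\ref{thm:KC-P-hard} by exploiting the irreflexivity of \FPL frames to replace the $m{+}1$ slice-marking variables $a_1,\ldots,a_{m+1}$ by formulas built from $\bot$ alone, using a single variable $p$ only to mark the goal node. Given an \ASAGAP instance $\langle G,s,t\rangle$ with slice graph $G=(\mathcal{V},E)$ having $m$ slices (necessarily $m$ even, since $t\in V_m\cap V_{\forall}$), I would take $\Model{M}_G=(\mathcal{V},\triangleleft,\xi)$ where $\triangleleft$ is the pseudo-transitive closure of $G$ from Definition~\ref{def:pseudo_trans} (transitive and irreflexive since $G$ is acyclic, hence a valid frame for \FPL) and $\xi(p):=\{t\}$. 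Monotonicity of $\xi$ is trivial because $t$ has no $\triangleleft$-successors.

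The key device is a family of ``depth formulas'' that measure distance from the top slice using only $\bot$ and $\iImpl$. Write $\top$ for $\bot\iImpl\bot$, and set $\delta_0:=\top\iImpl\bot$ and $\delta_k:=\top\iImpl\delta_{k-1}$ for $k\geq 1$. Because every node outside $V_m$ has at least one outgoing edge in $G$ (and hence in $\triangleleft$), an easy induction gives $\Model{M}_G,w\imodels\delta_k$ iff $w\in V_{\geq m-k}$; in particular, $\delta_0$ holds exactly on the leaves $V_m$. Using these, I would define formulas $\phi^{(k)}$ playing the role of the $\psi_{m-k}$ of the \KC proof: $\phi^{(0)}:=p$, $\phi^{(1)}:=p\iImpl\bot$, and $\phi^{(k)}:=\phi^{(k-1)}\iImpl\delta_{k-2}$ for $k\geq 2$.

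The heart of the argument is the analogue of Claim~\ref{claim:ipc-alternation}, proved by induction on $k$: for every $w\in V_{m-k}$,
\[
   \apath_G(w,t)\;\Longleftrightarrow\;\begin{cases}\Model{M}_G,w\imodels\phi^{(k)}, & \text{if } m-k \text{ is even,}\\ \Model{M}_G,w\not\imodels\phi^{(k)}, & \text{if } m-k \text{ is odd.}\end{cases}
\]
The inductive step evaluates $\phi^{(k-1)}\iImpl\delta_{k-2}$ at $w\in V_{m-k}$ by inspecting every $\triangleleft$-successor of $w$. The crucial calibration is that every pseudo-transitive successor $n$, which lies in some $V_{m-k+j}$ with $j\geq 2$, already satisfies $\delta_{k-2}$, so the implication is vacuous at such $n$; only the direct successors in $V_{m-k+1}$ fail $\delta_{k-2}$, and for these the induction hypothesis converts the $\phi^{(k-1)}$-condition into the $\apath_G$-condition on $n$. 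The definition of $\apath_G$ for existential respectively universal nodes then glues this into an $\apath_G$-statement about $w$, and the parity flip in the claim matches the $\exists/\forall$ alternation along the slices.

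Setting $\psi_G:=\phi^{(m-1)}$, the claim at $k=m-1$ (odd, since $m$ is even) and $w=s\in V_1$ yields $\Model{M}_G,s\imodels\psi_G$ iff $\langle G,s,t\rangle\notin\ASAGAP$. Since both $\Model{M}_G$ and $\psi_G$ are clearly constructible in logarithmic space, this gives $\overline{\ASAGAP}\mlogred\fe{\FPL_1^{\iImpl}}$, and \p-hardness follows from Lemma~\ref{lemma:ASAGAP-P-complete} together with the closure of \p under complement. The main obstacle is the careful bookkeeping in the induction: one must verify that the specific shift ``$\delta_{k-2}$ rather than $\delta_{k-1}$''---matching the jump length in Definition~\ref{def:pseudo_trans}---is exactly what cancels the pseudo-transitive ``noise'', so that the evaluation of $\phi^{(k)}$ effectively ignores all added edges and depends only on the original edges of $G$.
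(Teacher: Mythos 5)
Your proposal is correct and follows essentially the same route as the paper's proof: the same model (pseudo-transitive closure with $\xi(p)=\{t\}$), the same yardstick formulas (your $\delta_k$ are the paper's $\alpha_{m-1-k}$) and alternation formulas (your $\phi^{(k)}$ are the paper's $\psi_{m-k}$), the same inductive claims, and the same reduction from $\overline{\ASAGAP}$ using closure of \p under complement.
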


\proof

We show $\overline{\ASAGAP} \mlogred \fe{\FPL^{\iImpl}_1}$, where $\overline{\ASAGAP}$ is the complement of $\ASAGAP$.
Since \p is closed under complement, from  Lemma~\ref{lemma:ASAGAP-P-complete} follows that $\overline{\ASAGAP}$ is \p-complete.
Therefore we obtain the \p-hardness of \fe{\FPL^{\iImpl}_1}.

Let $\langle G,s,t \rangle$ with $G=(V,E)$ be an instance of \ASAGAP with $m$ slices.
From that we construct an \fe{\FPL^{\iImpl}_1} instance $\langle \psi, \Model{M}, s \rangle$.
Let $p$ be the variable that is used in $\FPL^{\iImpl}_1$.
Let $(V,\prec)$ be the pseudo-transitive closure of $G$ (see Definition~\ref{def:pseudo_trans}).
We define $\Model{M} := (V,\prec,\xi)$ with $\xi(p):= \{t\}$.
We use $p$ to distinguish $t$ from the other states in slice $V_m$.
Figure~\ref{fig:proofexample-FPL1} shows an example of $\Model{M}$ with $m=4$.

To express the $\apath_G$ property we use the formulas $\psi_m,\psi_{m-1},\dots,\psi_1$ defined as follows.

\renewcommand{\arraystretch}{1.5}\vspace{1.2ex}%
\begin{tabular}{lcllcl}
	$\alpha_m$ & $:=$ & $\bot$, & \hspace{4ex}$\psi_m$ & $:=$ & $p$ \\
	$\alpha_{i}$ & $:=$ & $\top \iImpl \alpha_{i+1}$, & \hspace{4ex}$\psi_{i}$ & $:=$ & $\psi_{i+1} \iImpl \alpha_{i+1}$ \hspace{3ex} for $i=m-1,m-2,\dots,1$
\end{tabular}%
\vspace{1.2ex}\newline\renewcommand{\arraystretch}{0}
Note that the length of $\psi_1$ is approximately the sum of the lengths of all $\alpha_i$ with $m\geq i > 1$, hence it is about $m^2$.
We use the $\alpha_i$ formulas as yardsticks for the slices and the $\psi_i$ formulas for the alternation as we did in the proof of Theorem~\ref{thm:KC-P-hard}.
According to Claim~\ref{claim:phi-property} we give the following claim.
Because of the irreflexivity of $\Model{M}$ we do not need the mutual complement property (Claim~\ref{claim:phi-property}(2)).

\begin{claim}\label{claim:FPLpsi-property}
For all $i$ with $m\geq i \geq 2$ it holds that
\begin{enumerate}[leftmargin=\parindent]
	\item $\Model{M},w \imodels \alpha_i$ if and only if $w \in V_{\geq i+1}$, and
	\item for all $w\in V_{i-1}$ it holds that $\Model{M},w \not\imodels \psi_{i-1}$ if and only if $\exists v\in V_{i},w\prec v : \Model{M},v \imodels \psi_{i}$.
\end{enumerate}
\end{claim}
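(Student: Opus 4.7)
\noindent
\emph{Proof of Claim.}
The plan is to prove part (1) by downward induction on $i$ (from $m$ down to $2$), and then derive part (2) from part (1) together with the semantics of $\iImpl$. Two structural facts about the pseudo-transitive closure $\prec$ of $G$ will drive the argument: first, every $\prec$-successor of a node in $V_j$ lies in $V_{\geq j+1}$; second, because every non-terminal node of $G$ has outdegree at least one, each $w\in V_j$ with $j<m$ possesses at least one successor already in $V_{j+1}$, and when additionally $j\leq m-2$ the entire set $V_{\geq j+2}$ is contained in the successors of $w$.

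For part (1), the base case $i=m$ is immediate: $\alpha_m=\bot$ is nowhere satisfied and $V_{\geq m+1}=\emptyset$. In the inductive step I would unfold $\alpha_i=\top\iImpl\alpha_{i+1}$, using the semantics of $\iImpl$ and the inductive hypothesis to reduce $\Model{M},w\imodels\alpha_i$ to the condition that every $\prec$-successor of $w$ lies in $V_{\geq i+2}$. The structural facts above then pin this down to $w\in V_{\geq i+1}$: if $w\in V_j$ with $j\leq i$ (so in particular $j<m$), the outdegree condition delivers a successor in $V_{j+1}$ outside $V_{\geq i+2}$, falsifying $\alpha_i$; whereas if $j\geq i+1$, every successor of $w$ (if any) sits in $V_{\geq j+1}\subseteq V_{\geq i+2}$, so $\alpha_i$ holds, possibly vacuously when $j=m$.

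For part (2), I would unfold $\psi_{i-1}=\psi_i\iImpl\alpha_i$ and take its negation: $\Model{M},w\not\imodels\psi_{i-1}$ iff there is some $v$ with $w\prec v$, $\Model{M},v\imodels\psi_i$, and $\Model{M},v\not\imodels\alpha_i$. Part~(1) rewrites the last clause as $v\notin V_{\geq i+1}$; combined with $w\in V_{i-1}$, which forces every successor of $w$ into $V_{\geq i}$, the witness $v$ must lie in $V_{\geq i}\setminus V_{\geq i+1}=V_i$, which is exactly the stated conclusion.

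The hard part is the case analysis in the inductive step of part~(1): one has to verify that the added pseudo-transitive edges do not spoil the correspondence between the satisfaction set of $\alpha_i$ and $V_{\geq i+1}$, and the outdegree assumption on $G$ is exactly what rules this out. Once this structural observation is isolated, both parts follow by routine unpacking of the intuitionistic satisfaction relation.~\qedclaim
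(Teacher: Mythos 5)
Your proposal is correct and follows essentially the same route as the paper's proof: downward induction on $i$ for part (1) via the semantics of $\iImpl$ and the slice structure of the pseudo-transitive closure, then part (2) by unfolding $\psi_{i-1}=\psi_i\iImpl\alpha_i$ and applying part (1). You merely spell out more explicitly the step the paper labels ``construction of $\Model{M}$'' (using the outdegree condition to locate a successor in $V_{j+1}$), which is a faithful elaboration rather than a different argument.
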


\noindent
\emph{Proof of Claim.}
With induction on $i$ we show (1).
For $i=m$ it is trivial because $\alpha_m=\bot$.
For the induction step let $w\in W$ and $m > i \geq 2$.

\begin{mathe}
									& \hspace{-6.5ex} \Model{M},w \imodels \alpha_i \hspace{2ex} (=\top \iImpl \alpha_{i+1}) & \\
	\Leftrightarrow & \forall v\in V, w\prec v : \Model{M},v \imodels \alpha_{i+1} & (\text{semantics of } \iImpl) \\
	\Leftrightarrow & \forall v\in V, w\prec v : v \in V_{\geq i+2} & (\text{induction hypothesis}) \\
	\Leftrightarrow & w \in V_{\geq i+1} & (\text{construction of }\Model{M})
\end{mathe}
For (2) consider $w\in V_{i-1}$ with $m \geq i \geq 2$.

\begin{mathe}
									& \hspace{-6.5ex} \Model{M},w \not\imodels \psi_{i-1} \hspace{2ex} (=\psi_i \iImpl \alpha_i)& \\
	\Leftrightarrow & \exists v\in V, w\prec v : \Model{M},v \imodels \psi_i \text{ and } \Model{M},v \not\imodels \alpha_i & (\text{semantics of } \iImpl) \\
	\Leftrightarrow & \exists v\in V_{i}, w\prec v : \Model{M},v \imodels \psi_i & (\text{Claim~\ref{claim:FPLpsi-property}(1)}) ~\qedclaim
\end{mathe}
According to Claim~\ref{claim:ipc-alternation} we have a similar connection between $\apath_G$ and the $\psi_i$ formulas.

\begin{claim}\label{claim:FPL-alternation}
For all $i=m,m-1,\dots,1$ and all $w\in V_i$ it holds that: 
\begin{enumerate}
\item if $i$ is even: $\apath_G(w,t)$ if and only if $\Model{M},w    \imodels \psi_{i}$, and
\item if $i$ is odd:  $\apath_G(w,t)$ if and only if $\Model{M},w\not\imodels \psi_{i}$.
\end{enumerate}
\end{claim}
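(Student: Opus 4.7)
The plan is to prove Claim~\ref{claim:FPL-alternation} by downward induction on $i$, closely mirroring the proof of Claim~\ref{claim:ipc-alternation} in the \KC case. The only substantive differences are the swapped parities---because here $\psi_m=p$ is satisfied at $t$ (whereas $a_m\iImpl a_{m+1}$ was not satisfied at $t$ in the \KC construction)---and the use of Claim~\ref{claim:FPLpsi-property}(2) in place of Claim~\ref{claim:phi-property}(\ref{strict-alternation}).

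For the base case $i=m$, note that $m$ is even (slices strictly alternate starting from the existential $V_1$, and $t\in V_m\cap V_\forall$). Nodes in $V_m$ have no outgoing edges in $G$, and none in the pseudo-transitive closure either, since pseudo-transitive edges jump only to strictly higher slices. Hence $\apath_G(w,t)$ reduces to $w=t$, which by construction of $\xi$ is equivalent to $\Model{M},w\imodels p=\psi_m$. This matches the even case of the claim.

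For the induction step at $i<m$, the key observation is that edges from $V_i$ to $V_{i+1}$ in $\prec$ are exactly the edges in $E$: by Definition~\ref{def:pseudo_trans}, the pseudo-transitive closure only adds edges from $V_i$ into $V_{\geq i+2}$, so the quantification ``$v\in V_{i+1}$ with $w\prec v$'' used in Claim~\ref{claim:FPLpsi-property}(2) coincides with ``$v\in V_{i+1}$ with $(w,v)\in E$''. If $i$ is odd then $V_i\subseteq V_\exists$ and I apply the induction hypothesis at the even slice $i+1$, giving $\apath_G(z,t)\Leftrightarrow\Model{M},z\imodels\psi_{i+1}$; combining the existential clause defining $\apath_G$ with Claim~\ref{claim:FPLpsi-property}(2) yields $\apath_G(w,t)\Leftrightarrow\Model{M},w\not\imodels\psi_i$, as required. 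If $i$ is even then $V_i\subseteq V_\forall$, and the induction hypothesis at odd $i+1$ gives $\apath_G(z,t)\Leftrightarrow\Model{M},z\not\imodels\psi_{i+1}$; the universal clause of $\apath_G$ then expresses precisely the negation of the existential statement on the right-hand side of Claim~\ref{claim:FPLpsi-property}(2), so $\apath_G(w,t)\Leftrightarrow\Model{M},w\imodels\psi_i$.

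I do not foresee a real obstacle: the argument is a routine parity-aware induction. The only thing requiring a moment of care is justifying that the pseudo-transitive edges jumping to slices $V_{\geq i+2}$ play no role in the local reasoning at slice $V_i$, but this is immediate from Definition~\ref{def:pseudo_trans} and has already been implicitly used in Claim~\ref{claim:FPLpsi-property}. Once Claim~\ref{claim:FPL-alternation} is established, setting $\psi:=\psi_1$ and using the base slice $V_1\ni s$ (odd) gives $\langle G,s,t\rangle\in\ASAGAP$ iff $\Model{M},s\not\imodels\psi_1$, i.e.\ iff $\langle G,s,t\rangle\in\overline{\ASAGAP}$ corresponds to $\Model{M},s\imodels\psi_1$; combined with the logspace constructibility of $\Model{M}$ and $\psi$, this completes the reduction $\overline{\ASAGAP}\mlogred\fe{\FPL^{\iImpl}_1}$.
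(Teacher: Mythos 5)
Your argument is correct and matches the paper's own proof essentially step for step: a downward induction on $i$ with the even base case $i=m$ (where $\apath_G(w,t)$ collapses to $w=t$, i.e.\ satisfaction of $p$), and an induction step that combines the definition of $\apath_G$, the induction hypothesis, and Claim~\ref{claim:FPLpsi-property}(2), with the justification that $\prec$-edges from $V_i$ into $V_{i+1}$ are exactly the $E$-edges being the paper's ``construction of $\Model{M}$'' remark. Your concluding paragraph about $\psi:=\psi_1$ and the reduction from $\overline{\ASAGAP}$ also coincides with how the paper finishes the proof of Theorem~\ref{thm:FPL1-P-hard}.
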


\noindent
\emph{Proof of Claim.}
We prove this claim by induction on $i$.
The base case $i=m$ considers an even $i$.
Let $w\in V_m$.
The following equivalences are straightforward.

\begin{mathe}
									& \hspace{-6.5ex} \apath_G(w,t) & \\
	\Leftrightarrow & w=t & \\
	\Leftrightarrow & \Model{M},w\imodels p & 
\end{mathe}
The induction step is with the help of Claim~\ref{claim:FPLpsi-property} similar to the induction step in the proof of Claim~\ref{claim:ipc-alternation}.
(Note that the roles of the even and odd slices are swapped.)
We consider $i<m$.
First, assume that $i$ is even.
Then the slice $V_i$ consists of universal nodes.
Let $w\in V_i$.

\begin{mathe}
	& \hspace{-6.5ex} \apath_G(w,t) & \\
	\Leftrightarrow & \forall v \in V, (w,v) \in E : apath_G(v,t) & (\text{definition of } \apath_G) \\
	\Leftrightarrow & \forall v \in V_{i+1}, w\prec v : \Model{M},v \not\imodels \psi_{i+1} & (\text{induction hypothesis, construction of } \Model{M}) \\
	\Leftrightarrow & \Model{M},w \imodels \psi_i & (\text{Claim~\ref{claim:FPLpsi-property}(2)})
\end{mathe}
Second, assume that $i$ is odd, then the slice $V_i$ consists of existential nodes.
Let $w\in V_i$.

\begin{mathe}
	& \hspace{-6.5ex} \apath_G(w,t) & \\
	\Leftrightarrow & \exists v \in V, (w,v) \in E : apath_G(v,t) & (\text{definition of } \apath_G) \\
	\Leftrightarrow & \exists v \in V_{i+1}, w\prec v : \Model{M},v \imodels \psi_{i+1} & (\text{induction hypothesis, construction of } \Model{M}) \\
	\Leftrightarrow & \Model{M},w \not\imodels \psi_i & (\text{Claim~\ref{claim:FPLpsi-property}(2)}) ~\qedclaim
\end{mathe}

Let $\psi:=\psi_1$.
It follows from Claim~\ref{claim:FPL-alternation} that $\Model{M},s \imodels \psi$ (resp. $\langle \psi, \Model{M}, s \rangle \in \fe{\FPL^{\iImpl}_1}$) if and only if $\langle G,s,t \rangle \notin \ASAGAP$.
Since $\Model{M}$ and $\psi$ can be constructed from $G$ using logarithmic space, it follows that $\overline{\ASAGAP} \mlogred \fe{\FPL^{\iImpl}_1}$.
\qed

\begin{figure}[t]


\hrulefill
\vspace{1ex}

\fpleinsexample
\caption{The model $\Model{M}$ as constructed from the example instance of \ASAGAP
in Fig.~\ref{fig:kcproofexample} by the proof of Theorem~\ref{thm:FPL1-P-hard}.
For simplicity, the pseudo-transitive edges are not drawn.}
\label{fig:proofexample-FPL1}
\hrulefill
\end{figure}

\begin{cor}\label{cor:implBPL1-p-hard}
	The model checking problem for $\BPL^{\iImpl}_1$ is \p-hard.\qed
\end{cor}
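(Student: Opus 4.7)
The plan is to observe that the entire reduction built in the proof of Theorem~\ref{thm:FPL1-P-hard} can be reused verbatim, so no fresh construction is needed. The corollary will follow from a single structural remark about the frame that was produced there.

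Concretely, I would first verify that the model $\Model{M}=(V,\prec,\xi)$ appearing in that reduction is already a \BPL model. The relation $\prec$ is the pseudo-transitive closure of a slice graph (Definition~\ref{def:pseudo_trans}). Transitivity is immediate from the definition: if $u\in V_i,v\in V_j,w\in V_k$ with $i<j<k$, then $(u,v),(v,w)\in\prec$ force $k\geq i+2$, so $(u,w)$ is in the pseudo-transitive closure too. (Intuitionistic monotonicity of $\xi$ is likewise preserved, since $\xi(p)=\{t\}$ and $t$ is a sink in the original slice graph.) So $(V,\prec)$ meets the only frame requirement imposed by \BPL, namely transitivity. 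The point to flag is that $\BPL$ only asks for transitivity, whereas \FPL additionally demands irreflexivity, and this weakening is exactly in the right direction for our purposes.

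Next I would note that the satisfaction relation $\imodels$ and the syntactic class of formulas are the same for \BPL and \FPL; only the class of admissible models differs. Hence for the formula $\psi\in\Klasse{IL}$ of the previous proof (which uses only $\iImpl$, $\bot$, and the single variable $p$), one has $\Model{M},s\imodels\psi$ evaluated under \BPL-semantics if and only if the identical statement holds under \FPL-semantics. The chain of claims (Claims~\ref{claim:FPLpsi-property} and \ref{claim:FPL-alternation}) then gives $\Model{M},s\imodels\psi \iff \langle G,s,t\rangle\notin\ASAGAP$ without change.

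Therefore the same logspace mapping $\langle G,s,t\rangle \mapsto \langle\psi,\Model{M},s\rangle$ witnesses $\overline{\ASAGAP}\mlogred\fe{\BPL^{\iImpl}_1}$, and \p-hardness follows from Lemma~\ref{lemma:ASAGAP-P-complete} together with closure of \p under complement. There is no real obstacle here; the only thing to be careful about is making explicit that the pseudo-transitive closure is genuinely transitive and that the valuation remains monotone, so that $\Model{M}$ qualifies as a legal input to $\fe{\BPL^{\iImpl}_1}$.
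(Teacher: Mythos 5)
Your proposal is correct and matches the paper's own reasoning: the corollary is stated as an immediate consequence of Theorem~\ref{thm:FPL1-P-hard}, precisely because the constructed model (being transitive with a trivially monotone valuation, as $t$ has no $\prec$-successors) is also a legal \BPL model and the satisfaction relation $\imodels$ does not depend on which frame class the model is drawn from. Your extra verification of transitivity of the pseudo-transitive closure and of monotonicity is exactly the (implicit) content of the paper's one-line justification.
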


%
%------  BPL_0 ist P-hart  ----------------------------------------------------

For the fragment of \BPL without variables, we can show the \p-hardness 
of model checking only for formulas with the connectives $\iImpl$ and $\vee$.
Our replacement technique for the last variable costs us the implicationality of the fragment.

\begin{thm}\label{thm:BPL0-P-hard}
%\fe{\BPL_0}---\,i.e. the model checking problem for $\BPL$ even without any variables\,---\,is \p-hard.
The model checking problem for $\BPL_0$ is \p-hard.
\end{thm}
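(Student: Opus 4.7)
I plan to reduce from $\overline{\ASAGAP}$ to $\fe{\BPL_0}$, adapting the construction of the proof of Theorem~\ref{thm:FPL1-P-hard}. The only obstacle to dropping the variable count there is the formula $\psi_m=p$, used to mark the goal node $t$ inside slice $V_m$; as the statement of the theorem suggests, its role can be absorbed into a variable-free formula at the cost of introducing a $\vee$.

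The key structural observation is that \BPL frames need not be uniformly reflexive or irreflexive, so the reflexivity pattern itself can carry information. Starting from the pseudo-transitive closure of the slice graph $G$ as in the \FPL proof, I would install a self-loop at every node of $V_m\setminus\{t\}$ and keep $t$ (together with all nodes in $V_{<m}$) irreflexive. The resulting frame is still transitive, hence legitimate for \BPL. The variable-free formula $q:=\top\iImpl\bot$ then holds vacuously at the irreflexive leaf $t$ but fails at each reflexive leaf of $V_m\setminus\{t\}$, where the self-loop witnesses a $\top$-satisfying successor that does not satisfy $\bot$. Thus $q$ plays the structural role of $p$ within $V_m$.

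My next step is to rebuild the yardsticks and the $\psi_i$ formulas, since $q$ coincides with the first \FPL yardstick $\top\iImpl\bot$ and the new reflexive loops disrupt the plain induction $\alpha_i=\top\iImpl\alpha_{i+1}$. I define $\tilde\alpha_{m-1}:=q\vee\delta$ with $\delta:=(\top\iImpl\bot)\iImpl\bot$, which holds at reflexive leaves (where $q$ fails, so $\delta$ is vacuous) while being refuted at any node that can reach $t$ (since $t$ itself is a $q$-satisfying witness in its future); higher yardsticks are $\tilde\alpha_i:=\top\iImpl\tilde\alpha_{i+1}$. I then set $\psi_m:=q$ and $\psi_{i-1}:=\psi_i\iImpl\tilde\alpha_i$ for $i\leq m$, mirroring the inductive definitions of the \FPL proof. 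Analogues of Claims~\ref{claim:FPLpsi-property} and~\ref{claim:FPL-alternation} are proved by induction on $i$, showing that $\Model{M},w\imodels\psi_i$ correctly tracks $\apath_G(w,t)$ with the polarity swapped with respect to the \FPL proof (because the marker $q$ has opposite truth pattern to $p$). Evaluating $\psi_1$ at $s$ then gives $\overline{\ASAGAP}\mlogred\fe{\BPL_0}$, so by Lemma~\ref{lemma:ASAGAP-P-complete} and closure of \p under complement, $\fe{\BPL_0}$ is \p-hard.

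The main obstacle is verifying the Claim~\ref{claim:FPLpsi-property}(1) analogue for the disjunctive yardsticks: intuitionistic bisimulation through purely reflexive-leaf successors can make some nodes in $V_{<m}$ indistinguishable from reflexive leaves by any variable-free formula, so $\tilde\alpha_i$ does not isolate slice membership in the naive sense. The inductive step must be reorganised around the fact that precisely those ``hidden'' nodes are the ones where the alternating path to $t$ fails at the corresponding level, so the $\delta$-branch of the $\vee$ labels them consistently with the alternation and the overall equivalence to $\apath_G(s,t)$ still holds.
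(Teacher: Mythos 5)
Your marker $q:=\top\iImpl\bot$ does isolate $t$ inside $V_m$, but the reflexivity encoding destroys exactly the information the reduction must preserve, and no reorganisation of the induction can recover it. Call a node $v\in V_{m-1}$ \emph{hidden} if it has no edge to $t$. Its only successors are reflexive leaves of $V_m$ (there are no pseudo-transitive edges out of $V_{m-1}$), and at a single reflexive point with empty valuation a variable-free formula holds iff it is classically true; a routine induction then shows that a hidden node satisfies \emph{exactly} the same variable-free formulas as a reflexive leaf. Now look one slice up: a universal node $x\in V_{m-2}$ already has pseudo-transitive edges to \emph{every} leaf of $V_m$, so adding or deleting an edge from $x$ to a hidden node does not change the set of theories of $x$'s successors, hence not the variable-free theory of $x$, nor of any node above it, nor of $s$. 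But such an edge can flip $\apath_G(s,t)$: take $m=4$, $V_4=\{t,r\}$, $V_3=\{v_1,v_2\}$ with $v_1\to t$ and $v_2\to r$ only, $V_2=\{x\}$ with $x\to v_1$, $V_1=\{s\}$ with $s\to x$; adding the edge $x\to v_2$ destroys the alternating path, yet $s$ has the same variable-free theory in both resulting models. So on your model \emph{no} variable-free formula---in particular no reorganised $\psi_1$---can decide \ASAGAP. The closing paragraph's hope is misplaced: the hidden nodes are indeed the ones where the alternating path fails, but they must be \emph{detected} at the universal slice above them, and your encoding makes them indistinguishable from leaves that are reachable from that slice anyway. (Incidentally, $q$ has the same truth pattern as $p$ in the proof of Theorem~\ref{thm:FPL1-P-hard}, so no polarity swap would occur even if the rest went through.)

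The paper takes a different route around precisely this obstacle. It first redoes the construction of Theorem~\ref{thm:FPL1-P-hard} with two variables and no $\bot$ ($p_1$ true exactly at $t$, $p_2$ true nowhere, serving as a surrogate for $\bot$), and then eliminates the variables by Rybakov's substitution technique: three small auxiliary frames $\mathfrak{F}_1,\mathfrak{F}_2,\mathfrak{F}_3$ are appended, every state of $V\setminus\{t\}$ gets an edge to $a^1_3$ and every state of $V$ gets edges to $a^2_4$ and $a^3_5$, and $p_1,p_2$ are replaced by the variable-free formulas $\beta_1,\beta_2$ (this is where $\vee$ enters, at the cost of leaving the implicational fragment). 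The essential difference from your construction is that the truth pattern of the simulated variables is controlled by explicit edges into gadget states whose theories are not already forced on every node by the pseudo-transitive edges, so correctness transfers from the two-variable argument instead of being washed out two slices up. If you want a variable-free reduction, you need gadget anchors of this kind rather than a reflexivity bit on the last slice.
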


\proof
As in the proof of Theorem~\ref{thm:FPL1-P-hard} we show $\overline{\ASAGAP} \mlogred \fe{\BPL_0}$.
%In this proof use the main ideas from the proof of Theorem~\ref{thm:FPL1-P-hard} and combine them with a technique from Rybakov (see~\cite{Rybakov06}).
The proof consists of two parts.
In the first part we modify the construction that we gave in the proof of Theorem~\ref{thm:FPL1-P-hard} in a way that the $\psi_i$ formulas contain two variables but no $\bot$ because we need \markit{$\bot$-free formulas} for the second step.
In the second step we use a technique from Rybakov \cite[Lemma 8]{Rybakov06} to substitute the variables.

Let $\langle G,s,t \rangle$ with $G=(V,E)$ be an instance from \ASAGAP, $(V,\prec)$ be the pseudo-transitive closure of $G$, and $\Model{M}:=(V,\prec,\xi)$ with $\xi(p_1):=\{t\}$ and $\xi(p_2):=\emptyset$.
Informally, $p_2$ plays the role of $\bot$ because for all $w\in V$ it holds that $\Model{M},w \not\imodels p_2$.
We define the $\psi_i$ formulas as mentioned above.

\renewcommand{\arraystretch}{1.5}\vspace{1.2ex}%
\begin{tabular}{lcllcl}
	$\theta_m$ & $:=$ & $p_2$, & \hspace{4ex}$\psi_m$ & $:=$ & $p_1$ \\
	$\theta_{i}$ & $:=$ & $\top \iImpl \theta_{i+1}$, & \hspace{4ex}$\psi_{i}$ & $:=$ & $\psi_{i+1} \iImpl \theta_{i+1}$ \hspace{3ex} for $i=m-1,m-2\dots,1$
\end{tabular}%
\vspace{1.2ex}\newline\renewcommand{\arraystretch}{0}
For the same reason as in the proof of Theorem~\ref{thm:FPL1-P-hard} it holds that

\begin{mathe}
	\Model{M},s \imodels \psi_1 & \Leftrightarrow & \langle G,s,t \rangle \notin \ASAGAP.
\end{mathe}
%
%We can use a technique from Rybakov (see the proof of Lemma 8 in~\cite{Rybakov06}) because $\psi_1$ is positive.
The models $\mathfrak{F}_i=(W_i,R_i)$ for $i=1,2,3$ and the formulas $\beta_1$ and $\beta_2$ are defined as in the proof of Lemma 8 in \cite{Rybakov06}.
Let for $k=1,2,3$

\begin{mathe}
	W_k & := & \gklauf b_k,a_1^k,a_2^k,\dots,a_{k+2}^k \gklzu, \text{ and} \\
	R_k & := & \gklauf (b_k,b_k),(a_{k+2}^k,b_k) \gklzu \cup \gklauf (a_i^k,a_j^k) \mid 1 \leq j < i \leq k+2 \gklzu.
\end{mathe}
The models are depicted in Figure~\ref{fig:BPL0_P-haerte_Modell}. 
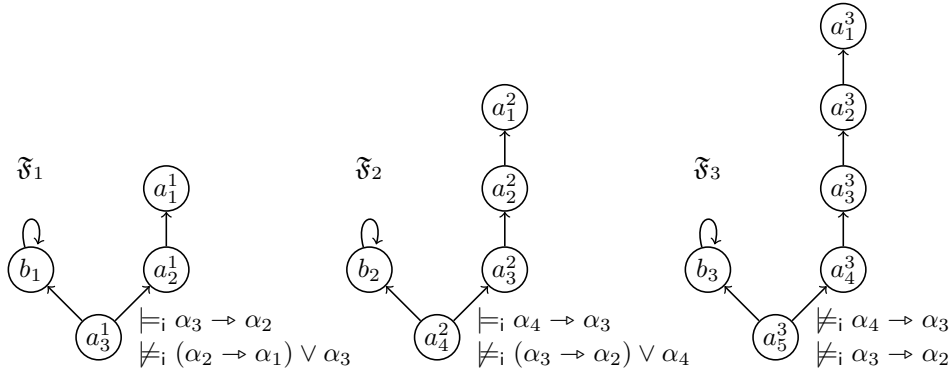
\begin{figure}[t]
\hrulefill
\vspace{1ex}

\begin{tikzpicture}[
 scale=0.9,
 -,
 auto,
 node distance=1.2cm,
 semithick,
 state/.style={style=circle, draw=black, minimum size=6mm, inner sep=0mm},
 empty/.style={style=circle, draw=white, minimum size=0mm, inner sep=0mm},
 txt/.style={style=rectangle},]
 
\node[empty] at(-3,0) {};

%model F_1
\begin{small}
	\node[state] (a31) at (0,0) {$a_3^1$};
	\node[state] (a21) at (1,1) {$a_2^1$};
	\node[state] (a11) at (1,2.2) {$a_1^1$};
	\node[state] (b1) at (-1,1) {$b_1$};
	
	\node[txt,text width=40mm] at (2.8,0.25)  {$\imodels \alpha_3 \iImpl \alpha_2$};
	\node[txt,text width=40mm] at (2.8,-0.25) {$\not\imodels (\alpha_2 \iImpl \alpha_1) \vee \alpha_3$};
	
	\node[txt] at (-1,2.5) {$\mathfrak{F}_1$};
\end{small}

	\path
		(a31) edge[->] (a21)
		(a21) edge[->] (a11)
		(a31) edge[->] (b1)
		(b1) edge[->,loop above] (b1)
	;

%model F_2
\begin{small}
	\node[state] (a42) at (5,0) {$a_4^2$};
	\node[state] (a32) at (6,1) {$a_3^2$};
	\node[state] (a22) at (6,2.2) {$a_2^2$};
	\node[state] (a12) at (6,3.4) {$a_1^2$};
	\node[state] (b2) at (4,1) {$b_2$};

	\node[txt,text width=40mm] at (7.8,0.25)  {$\imodels \alpha_4 \iImpl \alpha_3$};
	\node[txt,text width=40mm] at (7.8,-0.25) {$\not\imodels (\alpha_3 \iImpl \alpha_2) \vee \alpha_4$};
	
	\node[txt] at (4,2.5) {$\mathfrak{F}_2$};
\end{small}

	\path
		(a42) edge[->] (a32)
		(a32) edge[->] (a22)
		(a22) edge[->] (a12)
		(a42) edge[->] (b2)
		(b2) edge[->,loop above] (b2)
	;

%model F_3
\begin{small}
	\node[state] (a53) at (10,0) {$a_5^3$};
	\node[state] (a43) at (11,1) {$a_4^3$};
	\node[state] (a33) at (11,2.2) {$a_3^3$};
	\node[state] (a23) at (11,3.4) {$a_2^3$};
	\node[state] (a13) at (11,4.6) {$a_1^3$};	
	\node[state] (b3) at (9,1) {$b_3$};
	
	\node[txt,text width=40mm] at (12.8,0.25)  {$\not\imodels \alpha_4 \iImpl \alpha_3$};
	\node[txt,text width=40mm] at (12.8,-0.25) {$\not\imodels \alpha_3 \iImpl \alpha_2$};	

	\node[txt] at (9,2.5) {$\mathfrak{F}_3$};
\end{small}

	\path
		(a53) edge[->] (a43)
		(a43) edge[->] (a33)
		(a33) edge[->] (a23)
		(a23) edge[->] (a13)
		(a53) edge[->] (b3)
		(b3) edge[->,loop above] (b3)
	;

\end{tikzpicture}
\caption{
The models $\mathfrak{F}_1$, $\mathfrak{F}_2$, and $\mathfrak{F}_3$. 
(Transitive edges are not depicted.)
It is drawn which parts of $\beta_1$ and $\beta_2$ are satisfied and which are not. 
For example let $w\in W^*$ with $(w,a_3^1) \in R^*$, then it holds that $\mathfrak{F}^*,w \not\imodels \beta_1$. 
Therefore we connect via $R^*$ (resp. $R^{\xi}$) every state from $V \setminus \xi(p)$ with $a_3^1$ and replace $p_1$ with $\beta_1$ in $\psi_1$.
The models $\mathfrak{F}_1$ and $\mathfrak{F}_2$ simulate $\xi$ with technical help of $\mathfrak{F}_3$. 
(For details see the proof of Lemma 8 in \cite{Rybakov06}.)
}
\label{fig:BPL0_P-haerte_Modell}

\hrulefill

\end{figure} 

The formulas $\beta_1$ and $\beta_2$ are defined as follows.
We use the abbreviation $\alpha_1 := \top \iImpl \bot$ and $\alpha_{i+1} := \top \iImpl \alpha_i$ for $i \geq 1$.

\begin{mathe}
	\beta_1 & := & (\alpha_3 \iImpl \alpha_2) \iImpl ((\alpha_2 \iImpl \alpha_1) \vee \alpha_3) \\
	\beta_2 & := & (\alpha_4 \iImpl \alpha_3) \iImpl ((\alpha_3 \iImpl \alpha_2) \vee \alpha_4) 
\end{mathe}
We define a $\fe{\BPL_0}$-instance $\langle \psi_{\beta},\mathfrak{F}^*,s \rangle$ with $\mathfrak{F}^*=(W^*,R^*)$.

\begin{mathe}
	W^* & := & V \cup W_1 \cup W_2 \cup W_3 \\
	R^{\xi} & := & \gklauf (w,a^1_3),(v,a^2_4),(v,a^3_5) \mid w \in W \setminus \{t\} ~~,v\in W\gklzu \\
	\multicolumn{3}{l}{$R^*$ is the transitive closure of $\prec \cup ~R_1 \cup R_2 \cup R_3 \cup R^{\xi}$}
\end{mathe}	

\noindent Note that $|W_1 \cup W_2 \cup W_3|=15$ and $\prec$ is already transitive, hence one can compute the transitive closure in logarithmic space.
(We give no valuation function because in $\BPL_0$ models variables are irrelevant.)
The connection between $\beta_1$ and $\beta_2$ and $\mathfrak{F}^*$ is shown and explained in Figure~\ref{fig:BPL0_P-haerte_Modell}.
In the following we substitute the variables in $\psi_1$.

\begin{mathe}
	\psi_{\beta} & := & \psi_1[p_1/\beta_1][p_2/\beta_2]
\end{mathe}	
As Rybakov did in the proof of Lemma 8 in \cite{Rybakov06} one can show by induction on the construction of $\psi$ that

\begin{mathe}
	\mathfrak{F}^*,s \imodels \psi_{\beta} & \Leftrightarrow & \Model{M},s \imodels \psi_1.
\end{mathe}
(Note that Rybakov shows this only for $\bot$-free formulas, hence we cannot use the one variable version of $\psi_1$ from the proof of Theorem~\ref{thm:FPL1-P-hard}.)
It holds that $\langle \psi_{\beta},\mathfrak{F}^*,s \rangle \in \fe{\BPL_0}$ if and only if $\langle G,s,t \rangle \notin \ASAGAP$.
It follows directly from the construction that this is a logspace reduction.
\qed

%--------- IPC_2 und KC_2 sind P-hart  ---------------------------------------------------

%For \IPC with one variable,
%the reflexivity of the states inflates the cost of the yardstick argument
%by making the replacement formulas that read the yardstick exponentially long (see~\cite{MW-STACS11}). 
Other than $\BPL^{\iImpl}_0$ and $\FPL^{\iImpl}_0$,
the implicational fragments of \IPC with any bounded number of variables have only a finite number of equivalence classes (see~\cite{Urq74}).
Therefore they cannot express arbitrary distances in a model.
We obtain \p-hardness of model checking for the fragment of \IPC with two variables,
where the formulas consist of arbitrary connectives. 
The same applies for the fragment of \KC with two variables.

The proof uses our basic construction from the proof of Theorem~\ref{thm:KC-P-hard}
and essentially the same replacement of variables as in the proof of \cite[Theorem 4]{Rybakov06}
showing that the validity problem for $\IPC_2$ is \PSPACE-complete.
Whereas there the reduction works in polynomial-time (that suffices to compute transitive closures),
our construction must be computable in logarithmic space,
and therefore we must deal with the pseudo-transitive closure.
Little other technical changes in the proof are needed.
For completeness, we present the proof in Appendix \ref{appendixA}.

\begin{thm}\label{thm:KC2-P-hard}%\label{cor:IPC2-P-hard}
The model checking problem for $\KC_2$ and for $\IPC_2$ is \p-hard. \qed
\end{thm}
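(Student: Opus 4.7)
The plan is to combine the basic construction of Theorem~\ref{thm:KC-P-hard} with a variable-elimination technique modelled on Rybakov's proof \cite[Theorem 4]{Rybakov06} that $\IPC_2$ validity is \PSPACE-hard. Starting from an $\ASAGAP$ instance $\langle G,s,t \rangle$ with $m$ slices, the construction of Theorem~\ref{thm:KC-P-hard} yields a \KC-model $\Model{M}_G$ and an implicational formula $\psi_G$ in the variables $a_1,\ldots,a_{m+1}$, where $a_i$ is interpreted by $V_{i+1}\cup\cdots\cup V_{m+1}$ (with the special meaning of $a_m$ and $a_{m+1}$ for $t$ and $\topnode$). The task is to replace each $a_i$ by a two-variable formula $\gamma_i(p,q)$ while preserving, on the relevant states, the valuation pattern that the reduction relies upon.

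First, I would attach, on top of $\Model{M}_G$ (above $\topnode$), the fixed gadget that Rybakov uses: a family of additional states together with a valuation of two variables $p,q$ such that there is a family of formulas $\gamma_1,\ldots,\gamma_{m+1}$, each built from $p$ and $q$ alone, whose extension in the enlarged model restricted to the states of $\Model{M}_G$ coincides with $\xi(a_i)$. The gadget is a uniformly constructible finite structure of size polynomial in $m$, depending only on $m$, and can be produced in logarithmic space. Since it is glued above $\topnode$ it does not destroy directedness (for \KC) nor reflexivity (for \IPC); in particular the resulting frame is still a directed preorder for the \KC-variant and a preorder for the \IPC-variant.

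Second, I would put $\psi'_G := \psi_G[a_1/\gamma_1]\cdots[a_{m+1}/\gamma_{m+1}]$ and verify, by induction over the structure of $\psi_G$ and using the Substitution Lemma for intuitionistic satisfaction, that for every state $w$ of the original frame the enlarged model $\Model{M}'_G$ satisfies $\gamma_i$ at $w$ if and only if $\Model{M}_G$ satisfies $a_i$ at $w$. Together with Claim~\ref{claim:ipc-alternation} of Theorem~\ref{thm:KC-P-hard} this gives $\Model{M}'_G,s \imodels \psi'_G$ iff $\apath_G(s,t)$, so $\langle G,s,t\rangle \mapsto \langle \psi'_G,\Model{M}'_G,s\rangle$ is the desired reduction from $\ASAGAP$ to $\fe{\KC_2}$ (and analogously to $\fe{\IPC_2}$).

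The main obstacle, and the reason the proof is not a verbatim copy of Rybakov's, is the logspace bound: his construction computes the full transitive closure of the built frame, which cannot be done in \Log. As in Theorem~\ref{thm:KC-P-hard}, I would replace the transitive closure of $G$ by its pseudo-transitive closure (Definition~\ref{def:pseudo_trans}), which is logspace-computable, and check that neither the $\psi_i$ formulas (as already handled by Claim~\ref{claim:phi-property}) nor the gadget formulas $\gamma_i$ are affected by the additional slice-jumping edges. The rest of the changes are cosmetic: verifying that attaching the gadget above $\topnode$ preserves the directedness required for \KC and the preorder property required for \IPC, and that the monotonicity of $\xi$ is maintained on the enlarged frame. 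With these adjustments the reduction goes through in logarithmic space, and \p-hardness of $\fe{\KC_2}$ and $\fe{\IPC_2}$ follows from Lemma~\ref{lemma:ASAGAP-P-complete}.
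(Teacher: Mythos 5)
There is a genuine gap, and it lies at the heart of your variable-elimination step. You propose to glue a fixed gadget (depending only on $m$) \emph{above} $\topnode$ and to find two-variable formulas $\gamma_1,\ldots,\gamma_{m+1}$ whose extensions, restricted to the states of $\Model{M}_G$, coincide with $\xi(a_1),\ldots,\xi(a_{m+1})$. This cannot work: since every state of $\Model{M}_G$ sees $\topnode$ and the frame is transitive, every original state sees exactly the same set of gadget states, and after the $a_i$ are discarded the valuation of $p,q$ is constant (empty) on the original part. The map that fixes the gadget and collapses all original states to a single reflexive point placed below the gadget is then a p-morphism, so \emph{every} formula in $p$ and $q$ has the same truth value at all states of $\Model{M}_G$; in particular no $\gamma_i$ can have extension $V_{\geq i+1}$, which must separate slices. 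Giving the original states some monotone $p,q$-valuation does not help either, because on reflexive frames one cannot measure unbounded ``distance'' to a fixed colored region with a bounded stock of variables (this is exactly why the depth-yardsticks of Theorem~\ref{thm:FPL1-P-hard} are confined to irreflexive frames). The point you are missing is how Rybakov's technique actually transmits the valuation: the generic model is not attached uniformly above the whole model, but each original state $w$ receives \emph{state-dependent} edges into the gadget, namely to the unique maximal refuting states $a_i^k$ and $b_i^k$ of the replacement formulas exactly when the variable being eliminated fails at $w$; the variable is replaced by $\alpha_i^k\vee\beta_i^k$, which is refuted at $w$ iff $w$ sees both of these states. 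It is this selective wiring, not the shape of the gadget alone, that makes the two-variable formulas track the old valuation.

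Two further remarks. First, even with the correct wiring, your appeal to ``the Substitution Lemma'' is not enough: the model changes when the gadget is added, and the semantics of $\iImpl$ at an original state quantifies over the new gadget successors as well. The paper's proof (Appendix~\ref{appendixA}, Claim~\ref{claim:IPC2redu}) has to argue separately that the substituted formulas are satisfied everywhere in the gadget, so that a refuting successor of an implication must lie in the original part; some such argument is indispensable. Second, the paper organizes the proof as a logspace reduction $\fe{\IPC^{\iImpl}}\mlogred\fe{\KC_2}$ from the already established \p-hard problem of Corollary~\ref{cor:implIPC-BPL-p-hard}, rather than redoing the \ASAGAP construction; your instinct about replacing transitive closures by pseudo-transitive ones to stay within logspace is correct and is indeed how the paper adapts Rybakov's polynomial-time construction, but it does not repair the elimination mechanism described above.
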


%\input{KC2_P-haerte}
%
%The bound on the number of variables for \KC Theorem~\ref{thm:KC2-P-hard} passes directly to \IPC.
%
%\begin{cor}\label{cor:IPC2-P-hard}
%The model checking problem for $\IPC_2$ is \p-hard.
%\end{cor}
%

%------------------------------------------------------------------------------------------------
%  Optimalität der Variablenzahl
%------------------------------------------------------------------------------------------------

\subsection{Optimality of the bounds of the numbers of variables}
\label{subsec:optimality}
\ \ \vspace{1ex}

\noindent
The \p-hardness of $\fe{\KC_2}$ and $\fe{\IPC_2}$ (Theorem~\ref{thm:KC2-P-hard}) is optimal
because $\fe{\KC_1} \in \NCi$  and $\fe{\IPC_1} \in \ACi$ \cite{MW-STACS11}.
In order to show the optimality of the \p-hardness of $\fe{\FPL^{\iImpl}_1}$ (Theorem \ref{thm:FPL1-P-hard}),
we show that the complexity of $\fe{\FPL_0}$ is below \p.
%\LOGDCFL^{\NLog[1]} \cap \NLog^{\LOGDCFL[1]}

\begin{thm}\label{thm:FPL0-in-LOGCFL}
The model checking problem for $\FPL_0$ is in \LOGCFL.
\end{thm}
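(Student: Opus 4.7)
The plan is to exploit that, without variables, satisfaction of a formula at a state depends only on a single numerical parameter of that state. Concretely, for each state $s$ in a transitive irreflexive Kripke model define its \emph{rank} $r(s)$ to be the length of the longest $\triangleleft$-chain starting at $s$. The first step is to show, by induction on $\varphi\in\FPL_0$, that the truth value of $\varphi$ at $s$ depends only on $r(s)$ and not on any further structure of $\Model{M}$. The only nontrivial case is $\varphi=\psi_1\iImpl\psi_2$, which quantifies over $\triangleleft$-successors of $s$; here I would use that in a transitive irreflexive frame the set of ranks attained by $\triangleleft$-successors of $s$ is exactly $\{0,1,\dots,r(s)-1\}$ (by applying transitivity to a chain witnessing $r(s)$), so the value of the implication at $s$ is determined by $r(s)$ alone.

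Combining this with intuitionistic monotonicity, the set of ranks $k$ at which $\varphi$ holds is downward closed in $\mathbb{N}$, so each $\FPL_0$ formula $\varphi$ is characterised by a single threshold $m_\varphi \in \{-1,0,1,\dots\}\cup\{\infty\}$ with $\varphi$ true at rank $k$ iff $k\leq m_\varphi$. A short calculation then yields the recursive rules $m_\bot=-1$, $m_{\psi_1\wedge\psi_2}=\min(m_{\psi_1},m_{\psi_2})$, $m_{\psi_1\vee\psi_2}=\max(m_{\psi_1},m_{\psi_2})$, and for implication $m_{\psi_1\iImpl\psi_2}=\infty$ when $m_{\psi_2}\geq m_{\psi_1}$ while $m_{\psi_1\iImpl\psi_2}=m_{\psi_2}+1$ otherwise. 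Every intermediate value fits in $O(\log n)$ bits.

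The third step is to assemble these observations into a \LOGCFL-algorithm via the characterisation of \LOGCFL as nondeterministic auxiliary pushdown automata running in polynomial time with a logarithmic work tape. The algorithm first nondeterministically guesses and verifies $r(s)$ by producing a longest $\triangleleft$-chain from $s$ on the fly (longest-path length in a DAG lies in \NLog, and hence in \LOGCFL). It then evaluates $m_\varphi$ by a depth-first traversal of the parse tree of $\varphi$, using the pushdown for the recursion stack and the logarithmic work tape to carry the $O(\log n)$-bit threshold values; the four recursive cases only require comparisons, minima, maxima and incrementations. Finally it accepts iff $r(s)\leq m_\varphi$.

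The main obstacle I expect lies in packaging the two phases as a single pushdown computation within the polynomial-time bound. Once $r(s)$ has been committed to the work tape, the formula evaluation is a textbook stack-based tree traversal on a tree of polynomial size, so the composition should go through. A secondary concern is verifying the implication rule against the corner cases $m_{\psi_1}=-1$ (antecedent never holds) and $m_{\psi_2}=\infty$ (consequent always holds), which one would settle by a direct case analysis on the semantic definition of $\iImpl$.
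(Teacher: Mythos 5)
Your proposal is correct and follows essentially the same route as the paper: it reduces the state to the length of its longest outgoing path (the paper's $\lp_{\Model{M}}$), reduces the formula to a single threshold/index computed by the same min/max/implication recursion (the paper cites Visser's $\alpha_i$-normal form for this, which you rederive semantically via the ranks-of-successors argument), and then combines an \NLog longest-path check with a stack-based recursive evaluation of the formula on an auxiliary pushdown machine, exactly as in the paper's Algorithm for $\fe{\FPL_0}$.
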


\proof
%With respect to $\FPL_0$, every variable free formula 
%can be represented by a small index bounded by the length of the formula \cite{visser80}.
%We will show that every state in an $\FPL_0$ model can also be represented by a small index
%bounded by the number of states in the model.
\markit{Visser \cite{visser80} gives a systematically construction of representatives of the formula equivalence classes of variable free formulas over irreflexive Kripke models.
This enables that every variable free formula can be represented by a small string.
We call this string \textit{formula index}.
We will show that every state in an $\FPL_0$ model can also be represented by the length of its longest outgoing path.
It turns out, that a formula is satisfied in a state if and only if the formula index is greater than the length of the longest path that starts in the state.}
This yields a \LOGCFL algorithm for the model checking problem for $\FPL_0$.

The formula index of a formula is the index $i$ of the $\FPL_0$-equivalent%
\footnote{Two variable free $\Klasse{IL}$ formulas $\varphi$ and $\psi$ are \textit{$\FPL_0$-equivalent} if for all states $w$ in all $\FPL_0$ models $\Model{M}$ it holds that $\Model{M},w \imodels \varphi ~\Leftrightarrow~ \Model{M},w \imodels \psi$. We denote this as $\varphi \equiv_F \psi$.}
formula $\alpha_i$ from \cite[Def. 4.3]{visser80} defined as follows.
Let $i \in \mathbb{N}\cup \{\omega\}$, where $\omega > i$ for all $i \in \mathbb{N}$.

\renewcommand{\arraystretch}{1.2}\vspace{1.7ex}
\begin{tabular}{lcllcllcl}
	$\alpha_0$ & $:=$ & $\bot$, & \hspace{4ex}$\alpha_{\omega}$ & $:=$ & $\top$, & $\hspace{4ex}\alpha_{i+1}$ & $:=$ & $\top \iImpl \alpha_{i}$ \hspace{1.5ex}for $i \in \mathbb{N}$.
\end{tabular}
\vspace{1ex}\renewcommand{\arraystretch}{1}
\begin{claim}\label{claim:V-formula_eq} \cite[Fact 4.4(iii)]{visser80}
Every variable free $\Klasse{IL}$ formula is $\FPL_0$-equivalent to exactly one $\alpha_i$.
\end{claim}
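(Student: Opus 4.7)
\medskip
\noindent\emph{Proof plan.}
The plan is to give a semantic characterisation of each $\alpha_i$ in terms of a \emph{height} function on states, and then to push the variable free formula structure through this characterisation by a routine induction. For any state $w$ of an \FPL model, let $h(w)$ denote the length of a longest $\triangleleft$-path starting at $w$; since the models are finite and $\triangleleft$ is irreflexive and transitive (hence acyclic), $h(w)$ is a well-defined natural number.

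The real structural work is the observation that on an irreflexive transitive frame, the set of heights realised among the strict successors of $w$ is exactly $\{0,1,\ldots,h(w)-1\}$ (empty when $h(w)=0$). The inclusion ``$\subseteq$'' uses that $w\triangleleft v$ together with transitivity and irreflexivity forces $h(w)\geq h(v)+1$; the inclusion ``$\supseteq$'' follows because a longest path $w\triangleleft v_1\triangleleft\cdots\triangleleft v_{h(w)}$ gives, by transitivity, successors $v_1,\ldots,v_{h(w)}$ of $w$ whose heights are respectively $h(w)-1,h(w)-2,\ldots,0$. With this in hand, a straightforward induction on $i\in\mathbb{N}$ then establishes $\Model{M},w\imodels\alpha_i$ iff $h(w)<i$, while $\alpha_\omega=\top$ is trivially satisfied everywhere; the step $\alpha_{i+1}=\top\iImpl\alpha_i$ unfolds to ``every successor $v$ of $w$ satisfies $h(v)<i$'', which is equivalent to $h(w)\leq i$ by the structural fact.

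From this characterisation I would derive three closure identities. First, $\alpha_i\wedge\alpha_j\equiv_F\alpha_{\min(i,j)}$ and $\alpha_i\vee\alpha_j\equiv_F\alpha_{\max(i,j)}$ follow immediately. For implications, if $i\leq j$ then $\alpha_i$ entails $\alpha_j$ pointwise, so $\alpha_i\iImpl\alpha_j\equiv_F\alpha_\omega$; if $i>j$, then $w\imodels\alpha_i\iImpl\alpha_j$ iff no strict successor of $w$ has height in the ``gap'' $\{j,j+1,\ldots,i-1\}$, which by the structural fact is equivalent to $h(w)\leq j$, i.e.\ to $w\imodels\alpha_{j+1}$. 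I expect this last case, in particular noticing that the resulting index depends only on $j$ and not on $i$, to be the main technical point; everything else is book-keeping around the height characterisation.

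Given these identities, existence is a routine structural induction on the variable free $\Klasse{IL}$ formula $\varphi$: the base is $\bot=\alpha_0$ and the three connective cases are handled directly by the identities above. Uniqueness is a short separate argument: for distinct $i,j\in\mathbb{N}\cup\{\omega\}$ with $i<j$ (adopting $n<\omega$ for all $n\in\mathbb{N}$), a linear $\triangleleft$-chain whose root has height exactly $i$ provides a state that satisfies $\alpha_j$ but not $\alpha_i$, witnessing $\alpha_i\not\equiv_F\alpha_j$.
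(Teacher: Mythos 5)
Your proposal is correct, and its skeleton coincides with the paper's: the paper proves this claim simply by citing Visser's Fact~4.4(ii)/(iii) and listing exactly the four reduction identities you derive ($\wedge$ gives $\min$, $\vee$ gives $\max$, and $\alpha_a \iImpl \alpha_b$ collapses to $\alpha_\omega$ or $\alpha_{b+1}$), leaving their verification and the uniqueness statement to the citation. What you do differently is make this self-contained: you prove the identities semantically from the characterisation $\Model{M},w\imodels\alpha_i \Leftrightarrow h(w)<i$, which is precisely the paper's later Claim~\ref{longest-path-property}(1) (there proved for a different purpose, namely the \LOGCFL{} algorithm), together with the additional structural observation that the heights realised among the strict successors of $w$ form exactly the initial segment $\{0,\ldots,h(w)-1\}$ --- this is the one ingredient not present anywhere in the paper, and it is what lets you settle the critical case $a>b$ of the implication (correctly using the paper's semantics of $\iImpl$, which quantifies over strict successors only, so irreflexivity enters exactly where it should). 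You also supply the uniqueness half explicitly via linear chains of prescribed height, which the paper leaves implicit in the reference to Visser. The trade-off is clear: the paper's proof is a two-line appeal to known results, while yours costs about a page but stands on its own and makes visible why the index in the case $a>b$ depends only on $b$.
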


\noindent
One can prove the claim with the following case distinction \cite[Fact 4.4(ii)]{visser80}.

\renewcommand{\arraystretch}{1.2}\vspace{1.7ex}
\begin{tabular}{ll}
	If $\varphi = \bot$, & then \hspace{1.15ex} $\varphi \equiv_F \alpha_0$. \\
%	If $\varphi = \top$, & then $\varphi \equiv_F \alpha_{\omega}$. \\
	If $\varphi \equiv_F  \alpha_a \wedge \alpha_b$, & then \hspace{1.15ex} $\varphi \equiv_F \alpha_{\min\{a,b\}}$. \\
	If $\varphi \equiv_F  \alpha_a \vee \alpha_b$, & then \hspace{1.15ex} $\varphi \equiv_F \alpha_{\max\{a,b\}}$. \\
	If $\varphi \equiv_F  \alpha_a \iImpl \alpha_b$, & then $\begin{cases} \varphi \equiv_F \alpha_{\omega} & \text{~~if } a\leq b\\
																																				\varphi \equiv_F \alpha_{b+1} & \text{~~if } a> b.
																													\end{cases}$
%	If $\varphi \equiv  \alpha_a \rightarrow \alpha_b$ and $a \leq b$, & then $\varphi \equiv \alpha_{\omega}$. \\
%	If $\varphi \equiv  \alpha_a \rightarrow \alpha_b$ and $a > b$, & then $\varphi \equiv \alpha_{b+1}$.
\end{tabular} 
\vspace{1.7ex}\newline\renewcommand{\arraystretch}{1}
If $\varphi\equiv_F \alpha_i$, we call $i$ the \emph{formula index} of $\varphi$.
In order to analyse the complexity of the formula index computation, we define the following decision problem.

\dproblem{\EQVF}
				 {$\langle \varphi,i \rangle$, where $\varphi$ is a variable free $\Klasse{IL}$ formula and $i \in \mathbb{N} \cup \{\omega\}$}
				 {Is $\alpha_i \equiv_F \varphi$?} 

\vspace{-1ex}
\begin{claim}\label{claim:LOGDCFL_eq_visserformel}
\EQVF is in \markit{\LOGCFL}.
%\EQVF is in \LOGDCFL.
\end{claim}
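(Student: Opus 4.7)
The plan is to place \EQVF\ in \LOGCFL\ via the auxiliary-pushdown characterization recalled in Section~\ref{sec:prelims}: I will describe a Turing machine with a logarithmic worktape and a stack that, in polynomial time, computes the formula index $k$ of $\varphi$ and then accepts iff $k = i$. Correctness will be immediate from Claim~\ref{claim:V-formula_eq} together with the case distinction for the operators listed immediately after it.

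A key preliminary observation is that all intermediate indices stay small: by induction on the structure of a subformula $\psi$, the formula index of $\psi$ is either $\omega$ or a natural number bounded by $|\psi|$, because $\wedge$ and $\vee$ take the minimum resp.\ maximum of the two operand indices and the $\iImpl$-rule either returns $\omega$ or increases the numerical index by at most $1$. Hence every index encountered during the computation can be stored in $O(\log|\varphi|)$ bits (reserving one codeword for $\omega$) and therefore fits on the logspace worktape.

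The algorithm itself is a postorder recursion over the parse tree of $\varphi$, using the pushdown both to match brackets while parsing the input and to hold pending context. A leaf $\bot$ returns index $0$; at an internal node with connective $\circ \in \{\wedge,\vee,\iImpl\}$, the machine first recursively computes the index $a$ of the left argument, pushes $a$ together with a pointer to the right argument, then recursively computes the index $b$ of the right argument, pops $a$, and combines $a$ and $b$ by $\min$, $\max$, or the two-case $\iImpl$-rule respectively; the value returned at the root is finally compared with $i$. The main point to verify is that the pushdown discipline is honest: the stack depth is bounded by the parse-tree depth of $\varphi$, each frame stores only $O(\log|\varphi|)$ bits, the worktape ever holds only a constant number of pointers and indices, and every position of the input is visited a constant number of times, so the running time is polynomial in $|\varphi|$.
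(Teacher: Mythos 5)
Your proposal is correct and follows essentially the same route as the paper: a deterministic postorder recursion over the parse tree implementing the $\equiv_F$ case distinction, with the observation that every formula index is $\omega$ or bounded by the formula length (hence logspace-storable), running in polynomial time on a logspace machine with an auxiliary stack, which places \EQVF{} in \LOGCFL. Your added bookkeeping about stack frames and the final comparison with $i$ only makes explicit what the paper's sketch leaves implicit.
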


\noindent
\emph{Proof of Claim.}
From the case distinction above one can directly form a recursive algorithm.
If $\varphi \equiv_F \alpha_i$ it holds that $i=\omega$ or $i \leq |\varphi|$. ($|\varphi|$ denotes the length of $\varphi$.)
So every variable value can be stored in logarithmic space.
The algorithm walks recursively through the formula and computes the formula index of every subformula once, hence running time is polynomial.
All information that are necessary for recursion can be stored on the stack.
Therefore the algorithm can be implemented on a polynomial time logspace machine that uses an additional stack i.e. a \LOGCFL-machine \markit{(even without using nondeterminism)}. ~\qedclaim
%
%\begin{algorithm}[h]
%	\caption{Visser formula index check.}
%	\label{alg:eq_V_formula}
%  \begin{algorithmic}[1]
%
%		\REQUIRE an $\Klasse{IL}_0$ formula $\varphi$ and an index $i \in \mathbb{N} \cup \{\omega\}$
%		\STATE \textbf{if} VIndex-calc$(\varphi)=i$ \textbf{then} accept \textbf{else} reject
%			\par\vspace{5pt}
%			
%		\STATE\textbf{function} VIndex-calc($\psi$) \hspace{4ex}// requires an $\Klasse{IL}_0$ formula, returns an index from $\mathbb{N} \cup \{\omega\}$
%		\STATE{\textbf{if} $\psi=\bot$ \textbf{then return} $0$}
%%		\STATE{\textbf{else if} $\psi=\top$ \textbf{then return} $\omega$}
%		
%		\STATE{\textbf{else if} $\psi=\beta \wedge \gamma$ \textbf{then return} $\min \{ \text{VIndex-calc}(\beta), \text{VIndex-calc}(\gamma)\}$}
%		\STATE{\textbf{else if} $\psi=\beta \vee \gamma$ \textbf{then return} $\max \{ \text{VIndex-calc}(\beta), \text{VIndex-calc}(\gamma)\}$}
%		\STATE{\textbf{else if} $\psi=\beta \rightarrow \gamma$ \textbf{then}}
%		\STATE{\hspace{4ex} $b:=\text{VIndex-calc}(\beta)$}
%		\STATE{\hspace{4ex} $c:=\text{VIndex-calc}(\gamma)$}
%		\STATE{\hspace{4ex} \textbf{if} $b \leq c$ \textbf{then return} $\omega$} \textbf{else return} $c+1$
%%		\STATE{\hspace{4ex} \textbf{else return} $c+1$}
%
%		\STATE{\textbf{end if}}
%	\end{algorithmic}
%\end{algorithm}
%

In the following we show that for model checking every $\FPL_0$ model can be reduced to its longest path.
Let $\Model{M}=(W,R)$ be an $\FPL_0$ model.
(Note that we need no valuation function because in $\FPL_0$ models variables are irrelevant.)
Therefore we define a function $\lp_{\Model{M}} : W \rightarrow \mathbb{N}$ that maps a state $w$ to the length of the longest path in $\Model{M}$ starting in $w$.

\begin{mathe}
	\lp_{\Model{M}}(w) & := & \begin{cases}
																\hspace{1.2ex} 0, & \text{~~~if } \nexists v \in W : (w,v) \in R \\
																\max\limits_{(w,v) \in R}\{\lp_{\Model{M}}(v)\}+1, & \text{~~~otherwise}
														\end{cases}
\end{mathe}
\begin{claim}\label{longest-path-property}
\hfill
% Let $\Model{M}=(W,R)$ be an $\FPL_0$ model.
\begin{enumerate}[leftmargin=\parindent]
	\item Let $\Model{M}=(W,R)$ be an $\FPL_0$ model. For every $\alpha_i$ % with $i \in \mathbb{N}\ \cup \{\omega\}$ 
	      and every state $w\in W$ it holds that $\Model{M},w\imodels \alpha_i$ if and only if $\lp_{\Model{M}}(w) < i$.
	\item The following problem is \NLog-complete: given an $\FPL_0$ model $\Model{M}$, an integer $n$, and 
	      a state $w$ of $\Model{M}$; does $\lp_{\Model{M}}(w)=n$ hold?
\end{enumerate}
\end{claim}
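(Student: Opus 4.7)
\noindent\emph{Proof plan.} I would first record that, since $R$ is transitive and irreflexive on the finite set $W$, it must be acyclic: a cycle $w_0Rw_1R\cdots Rw_kRw_0$ would force $w_0Rw_0$ by transitivity, contradicting irreflexivity. Hence $\lp_\Model{M}$ is well defined. For part~(1) the plan is to induct on $i\in\mathbb{N}\cup\{\omega\}$ following the inductive definition of $\alpha_i$. The base case $i=0$ is immediate since $\alpha_0=\bot$ is satisfied nowhere and $\lp(w)<0$ is always false, and $i=\omega$ is equally trivial since $\alpha_\omega=\top$ holds everywhere and $\lp(w)<\omega$ is always true on a finite frame. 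In the step, $\alpha_{i+1}=\top\iImpl\alpha_i$ unfolds, by the intuitionistic semantics of implication, to ``every $R$-successor $v$ of $w$ satisfies $\alpha_i$'', which by the induction hypothesis equals ``$\lp(v)<i$ for every $R$-successor $v$ of $w$'', and this by the recursive definition of $\lp$ is exactly $\lp(w)<i+1$ (the case $\lp(w)=0$ being handled vacuously when $w$ has no successors).

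For the upper bound in part~(2) I would decompose $\lp(w)=n$ as the conjunction of $\lp(w)\geq n$ and $\lp(w)\leq n$. The first predicate lies in $\NLog$ by guess-and-verify: a machine guesses a sequence $w=v_0,v_1,\ldots,v_n$ one state at a time and checks each $(v_j,v_{j+1})\in R$ against the adjacency matrix. The second predicate is the negation of ``there exists an $R$-path of length $n+1$ from $w$'', again an $\NLog$ property, so by Immerman--Szelepcs\'enyi it also lies in $\NLog$. Their conjunction is therefore in $\NLog$.

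For $\NLog$-hardness the plan is a logspace reduction from layered graph reachability (an $\NLog$-complete variant of graph accessibility): given a layered DAG $G=(V_1\cup\cdots\cup V_k,E)$ with edges only between consecutive layers and a source $s\in V_1$, decide whether $s$ reaches some node in $V_k$. The construction is the $\FPL_0$ frame $\Model{M}_G=(V,R)$ on $V=V_1\cup\cdots\cup V_k$ defined by
\[
  R \;=\; E \;\cup\; \bigl\{(u,v) : u\in V_i,\ v\in V_j,\ j\geq i+2\bigr\}.
\]
Since layer membership and $E$ can be read directly, $R$ is logspace-computable. $R$ is irreflexive because every edge strictly raises the layer, and it is transitive because composing two edges yields a layer jump of at least two, which puts the composite into the ``skip'' part of $R$. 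The hardest conceptual step, and the main obstacle I had to overcome, is producing a transitive frame in logspace without resorting to an explicit transitive-closure computation (which would be outside $\Log$). The trick is that all length-$\geq 2$ skip edges are added unconditionally, making transitivity automatic, and yet these skip edges can never be used to attain the maximum path length: any $R$-path $s=v_0,v_1,\ldots,v_\ell$ strictly raises the layer at each step, so $\ell\leq k-1$, and equality forces every step to advance by exactly one layer, hence to use an $E$-edge. Consequently $\lp_{\Model{M}_G}(s)=k-1$ iff there is an $E$-path of length $k-1$ from $s$, iff $s$ reaches $V_k$ in $G$, and the mapping $(G,s)\mapsto(\Model{M}_G,s,k-1)$ completes the reduction.
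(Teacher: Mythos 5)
Your part (1) is exactly the paper's argument: the same induction on the index $i$, unfolding $\alpha_{i+1}=\top\iImpl\alpha_i$ via the semantics of $\iImpl$ and the recursive definition of $\lp_{\Model{M}}$; your preliminary remark that transitivity plus irreflexivity forces acyclicity (so $\lp_{\Model{M}}$ is well defined) is what the paper compresses into the justification ``irreflexivity of $\Model{M}$''. Part (2) is where you genuinely diverge: the paper does not argue it at all, but simply cites Jakoby and Tantau \cite{JT07} for the \NLog-completeness of deciding whether the longest path starting in a given node has length $n$. You instead give a self-contained proof: membership by splitting $\lp_{\Model{M}}(w)=n$ into $\lp_{\Model{M}}(w)\geq n$ (guess-and-verify) and $\lp_{\Model{M}}(w)\leq n$ (via closure of \NLog under complement), and hardness by a logspace reduction from layered reachability whose frame is made transitive by adding all layer-skipping edges---the same pseudo-transitive-closure trick the paper uses in Definition~\ref{def:pseudo_trans}---together with the observation that skip edges can never realize the maximal path length. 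This buys something the paper leaves implicit: the cited result concerns longest paths in general acyclic graphs, whereas the claim is stated for $\FPL_0$ models, i.e.\ transitive irreflexive frames, and your reduction shows explicitly that hardness survives this restriction. One small detail to add to your membership argument: if $n$ is given in binary you should first reject whenever $n\geq |W|$ (a path in an acyclic frame has length at most $|W|-1$), so that the step counter for the guessed path needs only $O(\log |W|)$ bits; without this preliminary test the counter alone could exceed logarithmic space.
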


\noindent
\emph{Proof of Claim.}
We prove (1) with induction on the formula index $i$.
The cases $i=0$ and $i=\omega$ are clear.
The induction step is shown by the following equivalences.

\begin{mathe}
									& \hspace{-6.5ex} \Model{M},w \imodels \alpha_{i+1} \hspace{2ex} (= \top \iImpl \alpha_i) & \\
	\Leftrightarrow~~ &	\forall s\in W, (w,s) \in R : \Model{M},s \imodels \alpha_i & (\text{semantics of }\iImpl) \\
	\Leftrightarrow~~ &	\forall s\in W, (w,s) \in R : ~~\lp_{\Model{M}}(s) < i& (\text{induction hypothesis}) \\
	\Leftrightarrow~~ &	\lp_{\Model{M}}(w) < i+1 & (\text{irreflexivity of } \Model{M})
\end{mathe}

\noindent For (2) note that the problem for a given graph $G$, a node $s$ of $G$ and an integer $n$ to decide whether the longest path in $G$ starting in $s$ has the length $n$ is \NLog-complete \cite{JT07}. ~\qedclaim

%\begin{algorithm}[h]
%	\caption{Find longest path.}
%	\label{alg:lp_algo}
%  \begin{algorithmic}[1]
%
%		\REQUIRE an $\FPL_0$ model $\Model{M}$, a state $w$ of $\Model{M}$ and an integer $n$
%		\STATE guess nondeterministically a state $v$ 
%		\STATE guess a path from $w$ to $v$ state by state and verify that this path has the length $n$.
%		\STATE \textbf{if} for all states $v$ the length of the path from $w$ to $v$ is at most $n$ \textbf{then} accept \textbf{else} reject
%
%	\end{algorithmic}
%\end{algorithm}

\begin{algorithm}[h]
	\caption{$\FPL_0$ model checking algorithm.}
	\label{alg:FPL0-MC}
  \begin{algorithmic}[1]
		\REQUIRE a variable free $\Klasse{IL}$ formula $\varphi$, an $\FPL_0$ model $\Model{M}$, and a state $w$ from $\Model{M}$
		\STATE guess nondeterministically a formula index $i \in \{0,1,\ldots,|\varphi|\} \cup \{\omega\}$ 
		\STATE \textbf{if} $(\varphi,i) \in \EQVF$ \textbf{then}
%		\STATE use Algorithm \ref{alg:eq_V_formula} to decide whether $(\varphi,i) \in \EQVF$
		\STATE \hspace{4ex} guess nondeterministically an integer $n < i$ 
		\STATE \hspace{4ex} \textbf{if} $\lp_{\Model{M}}(w) = n$ \textbf{then} accept \textbf{else} reject
%		\STATE verify with Algorithm \ref{alg:lp_algo} that $\lp_{\Model{M}}(w) = n$
		\STATE \textbf{else} reject
%		\STATE \textbf{if} $n < i$ \textbf{then} accept \textbf{else} reject
	\end{algorithmic}
\end{algorithm}
Algorithm \ref{alg:FPL0-MC} decides $\fe{\FPL_0}$ with the resources of \LOGCFL.
In the first two steps we compute the formula index of $\varphi$. 
With Claim \ref{claim:LOGDCFL_eq_visserformel} it follows that these steps can be done with the resources of \LOGCFL.
In the next steps the length of the longest path starting in $w$ is guessed and verified.
The verification (Step 4) can be done with the resources of \NLog.
The correctness of Step 4 follows from Claim~\ref{claim:V-formula_eq} and Claim \ref{longest-path-property}.
Altogether Algorithm \ref{alg:FPL0-MC} can be implemented on a nondeterministic polynomial time machine with logarithmic space and an additional stack.
These are the resources of \LOGCFL.
\qed

It is not known whether $\fe{\FPL_0}$ is \LOGCFL-hard, too.
We show \NLog as lower bound, even for the implicational fragment.

\begin{lem}\label{lem:FPL_0-NLhard}
The model checking problem for $\FPL_0^{\iImpl}$ is \NLog-hard.
\end{lem}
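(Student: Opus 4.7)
The plan is to give a logspace many-one reduction from the complement of $s$-$t$-reachability in a suitable layered DAG to $\fe{\FPL_0^{\iImpl}}$, and to invoke $\NLog = \coNLog$. Concretely, the source problem is to decide, for a layered directed acyclic graph $G$ with slices $V_1,\ldots,V_m$, edges $E \subseteq \bigcup_i V_i \times V_{i+1}$, source $s \in V_1$ and singleton last slice $V_m = \{t\}$, whether $t$ is reachable from $s$. This restricted reachability is $\NLog$-complete via a standard logspace layering of arbitrary $\GAP$ instances followed by the addition of a final singleton slice.

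Given such a $G$, I would take as the output the pseudo-transitive closure $\Model{M}$ of $G$ (Definition~\ref{def:pseudo_trans}), together with the state $s$ and the formula $\alpha_{m-1}$ from the proof of Theorem~\ref{thm:FPL0-in-LOGCFL}, abbreviating $\top$ by $\bot \iImpl \bot$ so that $\alpha_{m-1}$ lies in the implicational fragment $\FPL_0^{\iImpl}$. The model $\Model{M}$ is transitive and irreflexive and hence a valid $\FPL_0$-model. The whole mapping is computable in logspace: slice memberships are read directly from the input, an edge $(u,v)$ belongs to $\Model{M}$ iff either $(u,v) \in E$ or $u \in V_i$ and $v \in V_j$ with $j \geq i+2$, and $\alpha_{m-1}$ has size linear in $m$.

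The central claim to establish is that $\Model{M}, s \imodels \alpha_{m-1}$ if and only if $t$ is not reachable from $s$ in $G$. By Claim~\ref{longest-path-property}(1) this reduces to showing $\lp_{\Model{M}}(s) = m-1$ iff there is an $s$-$t$-path in $G$. The easy direction is the converse: because $G$ is layered with $V_m = \{t\}$, any $s$-$t$-path visits one node per slice and has length exactly $m-1$, and it remains a path in the supergraph $\Model{M}$. The main obstacle I anticipate is the opposite direction, namely ensuring that the added pseudo-transitive edges do not allow spurious length-$(m-1)$ paths. The key observation is that any path in $\Model{M}$ traverses strictly increasing slices, so a path of length $m-1$ from $s \in V_1$ must visit exactly one node in each of the $m$ slices, ending in $V_m = \{t\}$. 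Then every edge of this path connects consecutive slices; but a pseudo-transitive edge by definition skips at least two slices, so each edge must come from the original $E$, producing an $s$-$t$-path in $G$ and closing the argument.
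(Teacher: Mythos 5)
Your reduction is correct, but it takes a different route from the paper. The paper's sketch reduces from the \emph{exact longest-path} problem of Jakoby and Tantau \cite{JT07} (the same problem already invoked in Claim~\ref{longest-path-property}(2)): given $\langle G,v,n\rangle$, the longest path from $v$ has length $n$ iff $G,v\imodels\alpha_{n+1}$ and $G,v\not\imodels\alpha_n$, so the DAG itself (implicitly, its transitive closure) serves as the model and correctness is immediate from Claim~\ref{longest-path-property}(1); the pair of a positive and a negated query is absorbed by the closure of \NLog under complement. You instead start from the complement of $s$-$t$-reachability in a layered DAG with singleton last slice, build the pseudo-transitive closure of Definition~\ref{def:pseudo_trans} as the model, and ask a single query $\alpha_{m-1}$, again settled by Claim~\ref{longest-path-property}(1) plus the observation that a length-$(m-1)$ path from $V_1$ must step through consecutive slices and hence avoids all pseudo-transitive edges. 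What the paper's version buys is brevity and reuse of \cite{JT07}; what yours buys is self-containedness (only standard layered \GAP{} and $\NLog=\coNLog$ are needed, not the \NLog-completeness of exact longest path) and explicitness on two points the paper's sketch glosses over: the instance produced is a genuine single-query many-one reduction, and the model handed to $\fe{\FPL_0^{\iImpl}}$ is manifestly transitive, irreflexive and logspace-computable, whereas evaluating $\alpha_i$ on an arbitrary DAG strictly speaking requires passing (at least implicitly) to its transitive closure, which is not a logspace operation in general. Your slice-counting argument that pseudo-transitive edges cannot create spurious length-$(m-1)$ paths is exactly the right point to verify, and it is sound.
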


\noindent
\emph{Proof sketch.}
Claim~\ref{longest-path-property} % in the above proof 
shows that in $\FPL_0$ only the depth of a model can be evaluated by a formula.
Accordingly, the $\alpha_i$ formulas can be used to describe the maximal length of a path through a model.
This yields a reduction from the longest path problem in acyclic directed graphs to $\fe{\FPL^{\iImpl}_0}$.
\markit{Let $\langle G=(V,E),v\in V, n\in \mathbb{N} \rangle$ be an instance of the longest path problem.
Then it holds, that the longest path starting in $v$ has the length $n$ if and only if $G,v \imodels \alpha_{i+1}$ and $G,v \not\imodels \alpha_i$.
This follows from Claim~\ref{longest-path-property}(1).
Since \NLog is closed under complementation this is a correct reduction.}
For the \NLog-completeness of this longest path problem see \cite{JT07}.
\qed

%(For a given instance of \GAP one can build a slice graph and use from the second last to the last slide only the edges to the goal node.
%Then ask whether the number of slices is the length of the longest path from the start node.
%So this is the main idea of a reduction from \GAP to this longest path problem for the \NLog-hardness.)

%%%%%%%%%%%%%%%%%%%%%%%%%%%%%%%%%%%%%%%%%%%%%%%%%%%%%%%%%%%%%%%%%%%%%%%%%%%%%%%%%%%%%%%%%%%%%%%%%%%%%%%%%%%%
% Lowerbounds for modal logics
%%%%%%%%%%%%%%%%%%%%%%%%%%%%%%%%%%%%%%%%%%%%%%%%%%%%%%%%%%%%%%%%%%%%%%%%%%%%%%%%%%%%%%%%%%%%%%%%%%%%%%%%%%%%

\section{Lower bounds for modal logics}
\label{sec:modal lower bounds}

For all \p-hard model checking problems for fragments of intuitionistic logics 
we obtain the same lower bound for their modal companions.

\begin{thm}\label{thm:lowerbounds_modalL} 
The model checking problem is \p-hard for $\Kiv_0$, $\PrL^{\rightarrow}_1$, $\Sivii^{\rightarrow}$, $\Kiv_1^{\rightarrow}$, and $\Siv^{\rightarrow}$.
\end{thm}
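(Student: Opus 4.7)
The plan is to derive all five \p-hardness results by a uniform application of the satisfaction-preserving translation $\GT'$ of Lemma~\ref{lem:Goedel-Tarski} to the \p-hardness results already established in Section~\ref{sec:intuitionistic lower bounds}. Recall that $\GT'$ introduces no new propositional variables and maps any implicational intuitionistic formula to a strictly implicational modal formula. Moreover, for each intuitionistic logic in our list the class of admissible Kripke models coincides (by Figure~\ref{fig:int-modal}) with the class of Kripke models for its modal companion, so the model built in the intuitionistic reduction can be re-used verbatim as a model for the modal companion.

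Concretely, I would match each case as follows. For $\fe{\Siv^{\rightarrow}}$, compose the reduction underlying Corollary~\ref{cor:implIPC-BPL-p-hard} (\p-hardness of $\fe{\IPC^{\iImpl}}$) with $\GT'$; since the model $\Model{M}_G$ of Theorem~\ref{thm:KC-P-hard} already has a reflexive and transitive frame, it is an $\Siv$-model, and $\GT'(\psi_G)$ is strictly implicational. For $\fe{\Sivii^{\rightarrow}}$, apply $\GT'$ to Theorem~\ref{thm:KC-P-hard} itself, whose model is a directed preorder. For $\fe{\PrL^{\rightarrow}_1}$, apply $\GT'$ to Theorem~\ref{thm:FPL1-P-hard}: the pseudo-transitive closure in the \FPL-reduction is transitive and irreflexive, the formula is in $\FPL^{\iImpl}_1$, and $\GT'$ produces a strictly implicational modal formula over the single variable $p$. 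For $\fe{\Kiv^{\rightarrow}_1}$, compose $\GT'$ with Corollary~\ref{cor:implBPL1-p-hard}. Finally, for $\fe{\Kiv_0}$, apply $\GT'$ to Theorem~\ref{thm:BPL0-P-hard} (\p-hardness of $\fe{\BPL_0}$): the resulting formula need not be strictly implicational, which is fine because the theorem only asserts \p-hardness of $\Kiv_0$ without the implicational restriction.

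The only points one must verify carefully are essentially bookkeeping: (i) $\GT'$ is logspace-computable (it is a pointwise textual substitution on subformulas); (ii) $\GT'$ preserves satisfaction for any intuitionistic model (Lemma~\ref{lem:Goedel-Tarski}); (iii) $\GT'$ preserves both the variable count and the implicational/strictly-implicational shape (by definition of $\GT'$ and the clause $\GT'(\alpha\iImpl\beta)=\Box(\GT'(\alpha)\rightarrow\GT'(\beta))$); and (iv) the underlying frame of each intuitionistic model used in the referenced reductions satisfies precisely the frame condition of the corresponding modal companion. Since all of these are immediate, the main conceptual hurdle is simply ensuring that the target fragment of the modal companion is respected in each of the five cases; for variables this follows from $\GT'$ not introducing any, and for the strictly implicational shape from Lemma~\ref{lem:Goedel-Tarski}. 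Composing these logspace-computable reductions with the corresponding source reductions then yields $\p$-hardness for each of $\Kiv_0$, $\PrL^{\rightarrow}_1$, $\Sivii^{\rightarrow}$, $\Kiv_1^{\rightarrow}$, and $\Siv^{\rightarrow}$, completing the proof.
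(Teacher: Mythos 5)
Your proposal is correct and follows essentially the same route as the paper: the paper's proof of Theorem~\ref{thm:lowerbounds_modalL} likewise just invokes Lemma~\ref{lem:Goedel-Tarski} (the satisfaction-preserving, variable- and shape-preserving translation $\GT'$) together with Theorems~\ref{thm:BPL0-P-hard}, \ref{thm:FPL1-P-hard}, and \ref{thm:KC-P-hard}, the corollaries for $\IPC^{\iImpl}$ and $\BPL^{\iImpl}_1$ being obtained exactly as you do by reusing the same models under the weaker frame conditions. Your case-by-case matching and the bookkeeping checks are just a more explicit spelling-out of that argument.
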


\begin{proof}
By Lemma~\ref{lem:Goedel-Tarski} this follows from Theorems~\ref{thm:BPL0-P-hard}, \ref{thm:FPL1-P-hard}, and \ref{thm:KC-P-hard}.
\end{proof}

From Theorem~\ref{thm:KC2-P-hard} and Lemma~\ref{lem:Goedel-Tarski} we obtain that 
the model checking problem for $\Sivii_2$---the modal companion of $\KC_2$---is \p-hard.
Even though model checking for $\KC_1$ is in $\NCi$~\cite{MW-STACS11}, 
we can show that one variable suffices to make model checking \p-hard for $\Sivii$.

\begin{thm}\label{thm:S4.2_1-P-hard}
The model checking problem for $\Sivii_1$ is \p-hard.
\end{thm}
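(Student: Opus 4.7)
To prove Theorem~\ref{thm:S4.2_1-P-hard}, the plan is to reduce from $\ASAGAP$ to $\fe{\Sivii_1}$, adapting the frame-and-formula template of the proof of Theorem~\ref{thm:KC-P-hard} but replacing the $m+1$ auxiliary variables $a_1,\ldots,a_{m+1}$ by modal formulas over a single variable $p$. Since $\fe{\KC_1}\in\NCi$, the additional expressive power gained by moving from intuitionistic $\iImpl$ to modal $\Box$ and $\Diamond$ must be what makes this reduction work.

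Given an instance $\langle G,s,t\rangle$ of $\ASAGAP$ with $m$ slices $V_1,\ldots,V_m$, the first step is to build a Kripke model $\Model{M}$ over a directed preorder. I would start from the reflexive and pseudo-transitive closure of $G$ together with a fresh top state $\topnode$ to which every other state has an edge, just as in the proof of Theorem~\ref{thm:KC-P-hard}: directedness is then automatic, since $\topnode$ is a common upper bound of any two states. To compensate for the loss of the slice-marking variables, I would attach to this frame an auxiliary ``ruler'' gadget, i.e.~a chain of fresh states of size polynomial in $m$, linked to the main frame so that from any $w\in V_i$ only a controlled prefix of the chain is reachable. The valuation then places $p$ on a strategic set of states (for instance on the goal $t$ together with a fixed subset of the ruler), and the whole enlarged model is still logspace-constructible because only polynomially many extra nodes, their reflexive self-loops, the edges from $V_i$ to the ruler, and the edges to $\topnode$ have to be written down.

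Next, I would define, over the single variable $p$, modal formulas $\beta_1,\ldots,\beta_{m+1}$ whose extensions on $\Model{M}$ coincide exactly with the extensions of $a_1,\ldots,a_{m+1}$ from the $\KC$-proof. Each $\beta_i$ is a nested modal formula that walks $m-i$ rungs down the ruler and checks for $p$; the $\Diamond$-modality is crucial here because, unlike the intuitionistic $\iImpl$, it can make purely existential statements about successors and so escape the reflexive self-loops. The target formula is then $\varphi:=\varphi_1$, where $\varphi_m:=\Box(\beta_m\to\beta_{m+1})$ and $\varphi_i:=\Box(\varphi_{i+1}\to\beta_i)$ for $i=m-1,\ldots,1$, mirroring the recursion for $\psi_i$ in the proof of Theorem~\ref{thm:KC-P-hard} with each $\iImpl$ replaced by its satisfaction-preserving G\"odel-Tarski image from Lemma~\ref{lem:Goedel-Tarski}. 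Correctness follows from an induction completely parallel to Claims~\ref{claim:phi-property} and~\ref{claim:ipc-alternation}, giving $\apath_G(s,t) \Leftrightarrow \Model{M},s\mmodels\varphi$ and hence $\ASAGAP\mlogred\fe{\Sivii_1}$.

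The hard part is engineering the ruler and the $\beta_i$'s so that three requirements hold simultaneously: the enlarged frame is still a directed preorder (which the top node $\topnode$ is designed to guarantee); each $\beta_i$ has exactly the intended extension on the main frame despite the reflexive self-loops and the pseudo-transitive jumps that tend to let $\Box$ and $\Diamond$ peer too far; and both the model and the formula remain logspace-constructible. In particular, the reflexive loops rule out a direct modal analogue of the yardstick formulas $\alpha_i$ used in the proof of Theorem~\ref{thm:FPL1-P-hard}, because $\Box\bot$ collapses to $\bot$ on any reflexive frame; this is what forces the introduction of an essentially irreflexive ruler substructure whose depth a modal formula can probe, while directedness with the rest of the model is still restored through $\topnode$.
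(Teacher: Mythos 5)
There is a genuine gap: the one construction your whole reduction hinges on --- ``modal formulas $\beta_1,\ldots,\beta_{m+1}$ over a single variable whose extensions coincide with those of $a_1,\ldots,a_{m+1}$'' --- is exactly the hard part, and the mechanism you sketch for it does not work. A model for \Sivii must be a directed \emph{preorder}, so \emph{every} state, including every state of your ruler, carries a self-loop; there is no ``essentially irreflexive ruler substructure'' available, and depth-probing in the style of the $\alpha_i$ yardsticks of Theorem~\ref{thm:FPL1-P-hard} is therefore ruled out just as you note for the main frame. Your stated escape --- that $\Diamond$ ``can make purely existential statements about successors and so escape the reflexive self-loops'' --- is backwards: since $(w,w)\in R$, the state $w$ itself always witnesses $\Diamond\varphi$ whenever it satisfies $\varphi$, so self-loops can only make $\Diamond$-formulas \emph{easier} to satisfy, never harder; nothing in your sketch prevents a reflexive loop or a pseudo-transitive two-slice jump from short-circuiting the intended rung-by-rung walk down the ruler. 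What is missing is a concrete way to force a formula to advance by exactly one slice per modal step on a reflexive, transitive frame.

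The paper's proof supplies precisely this device, and it is different from your plan in both ingredients. Instead of simulating the $m+1$ variables, the single variable $a$ is placed on every \emph{even} slice (and on one auxiliary state), and distance is measured by formulas $\delta_i$ of the form $\Diamond(a\wedge\Diamond(\neg a\wedge\cdots))$ that demand a strictly alternating $a/\neg a$ path: a self-loop cannot alternate, and a pseudo-transitive edge jumps two slices and hence preserves the parity of $a$, so both kinds of ``noise'' edges are filtered out automatically. The goal node is not marked by the variable directly but identified by a small gadget $\{u,t_1,t_2,\topnode\}$ with a two-cycle between $t_1$ and $t_2$ and the formula $\eta=\neg a\wedge\Diamond(a\wedge\Diamond\neg a)$, so that $t$ is the unique state of $V_m$ with a successor satisfying $\eta$. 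Finally, the alternating reachability is expressed by formulas $\lambda_i$ using $\Box$, $\Diamond$, $\wedge$ and negation guarded by the $\delta_{i+1}$ (e.g.\ $\Box(\delta_{i+1}\rightarrow\lambda_{i+1})$ on universal slices), not by a G\"odel--Tarski image of the $\psi_i$ from Theorem~\ref{thm:KC-P-hard}; indeed the resulting valuation is not monotone, so the intuitionistic template you want to transport does not apply as such. Your proposal would only become a proof once you either rediscover this alternation trick (at which point the ruler is unnecessary) or exhibit some other slice-measuring formulas compatible with reflexivity, transitivity and logspace constructibility --- none of which is currently provided.
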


\proof
We show that $\ASAGAP \mlogred \fe{\Sivii_1}$.
Since \ASAGAP is \p-hard (Lemma~\ref{lemma:ASAGAP-P-complete}),
the \p-hardness of \fe{\Sivii_1} follows.

Let $\langle G,s,t\rangle$ be an instance of \ASAGAP,
where $G=(V_{\exists}\cup V_{\forall}, E)$ is a slice graph with $m$ slices,
and $V_{\exists} = V_1\cup V_3\cup \cdots \cup V_{m-1}$, 
and $V_{\forall} = V_2\cup V_4\cup \cdots \cup V_{m}$.
We construct a Kripke model $\Model{M}_G = (U, R, \xi)$ and a formula $\lambda_1$
such that $\langle G,s,t \rangle\in\ASAGAP$ if and only if $\langle \lambda_1,\Model{M}_G,s \rangle\in\fe{\Sivii_1}$.
First, let $G_t=(V,\leqslant)$ be the pseudo-transitive and reflexive closure of $G$.
%(as constructed in the proof of Theorem~\ref{thm:IPC-P-hard}).
%
Second, we add two slices to $G_t$, namely $V_{m+1}:=\{u,t_1,t_2\}$ and $V_{m+2}:=\{\topnode\}$.
Third, we add the edges $\{(v,u) \mid v\in V_m\}$ from every node in $V_m$ to $u$,
edges  $\{(t,t_1), (t,t_2)\}$ from the goal node $t\in V_m$ to $t_1$ and to $t_2$,
and edges $\{(u,\topnode),(t_1,\topnode),(t_2,\topnode)\}$ from every node in $V_{m+1}$ to $\topnode$.
Moreover, in slice $V_{m+1}$ we abstain from the rule that there are no edges
between different nodes in the same slice.
We also add the edges $\{(t_1,t_2), (t_2,t_1)\}$ between $t_1$ and $t_2$ in both directions.
Finally, we add pseudo-transitive edges $V_{\leq m-1}\times V_{m+1}$ and $V_{\leq m}\times V_{m+2}$,
and reflexive edges to all nodes.
Let the graph $G'=(U,R)$ be the graph obtained in this way.
Then $G'$ is reflexive, transitive, and every node has an edge to $\topnode$.
Therefore, $G'$ is a directed preorder.

In order to be able to find out in which slice a state is,
we mark every even slice $V_2,V_4,\ldots,V_m,V_{m+2}$ with the variable $a$,
and in slice $V_{m+1}$ the node $t_2$ is marked with $a$.
This yields
the valuation function $\xi$ to be defined by $\xi(a):= V_2 \cup V_4 \cup \cdots \cup V_{m+2}\cup \{t_2\}$,
and completes the construction of the Kripke model $\Model{M}_G:=(U,R,\xi)$.
Figure~\ref{fig:proofexamples42} shows an example.

\begin{figure}[t]
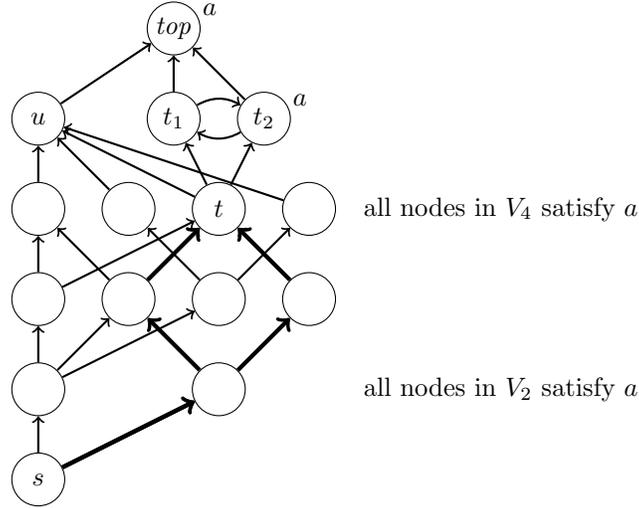

\hrulefill
\vspace{1ex}

\siviiexample %
\caption{The model $\Model{M}_G$ as constructed in the proof of Theorem~\ref{thm:S4.2_1-P-hard}
for the ASAGAP instance from Figure~\ref{fig:kcproofexample}.
Pseudo-transitive edges and reflexive edges are not drawn for simplicity.
The valuation marks the nodes (resp. the slices).
The fat edges indicate that $\apath_G(s,t)$ holds.}
\label{fig:proofexamples42}

\hrulefill
\end{figure}

Let $\eta := \neg a \wedge \Diamond(a \wedge \Diamond \neg a)$.
We will use that $\eta$ is satisfied in $t_1$, but it is not satisfied in $V_m\cup\{u,t_2,\topnode\}$.
The goal node $t$ is the only node in slice $V_m$ 
that has a successor (namely $t_1$), in which $\eta$ is satisfied.
We can estimate the slice to which a node belongs using the following formulas $\delta_i$.
Let $\delta_{m} := \Diamond(\neg \eta)$, and for $i=m-1,m-2,\ldots,1$

\begin{mathe}
	\delta_{i} & := & \begin{cases}
												\Diamond( \neg a \wedge \delta_{i+1}), 		& \text{~~~if~~~} i \text{ is even,} \\
												\Diamond( ~~~~~~a~~ \wedge \delta_{i+1}), & \text{~~~if~~~} i \text{ is odd.}
										\end{cases}
\end{mathe}	
%									
%$$
%\delta_{i} := \left\{
%\begin{array}{rl}
% \Diamond( \neg a \wedge \delta_{i+1}), & \mbox{ if $i$ is even,} \\
% \Diamond( ~~~~~~a~~ \wedge \delta_{i+1}), & \mbox{ if $i$ is odd.}
%\end{array}
%\right.
%$$
%
%Let $V_{\leq i}$ denote $V_1\cup\cdots\cup V_i$.
\vspace{-1ex}
\begin{claim}\label{claim:delta}
Let $x\in V_{\leq m}$ and $i=1,2,\ldots,m$.
Then $\Model{M}_G,x \mmodels \delta_i$ if and only if
$x\in V_{\leq i}$.
\end{claim}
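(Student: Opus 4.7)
The plan is to prove Claim~\ref{claim:delta} by reverse induction on $i$, from $i=m$ down to $i=1$.

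First I would establish the behavior of $\eta$ that the text flags just before the claim: $\eta$ holds at $t_1$ but fails on every state of $V_m\cup\{u,t_2,\topnode\}$. This is direct from the definition. The $\neg a$-conjunct kills every $a$-state, which disposes of $V_m$, $t_2$ and $\topnode$; the node $u$ satisfies $\neg a$ but its only successors are $u$ and $\topnode$, both $a$-states without $\neg a$-successors, so the inner $\Diamond(a\wedge\Diamond\neg a)$ fails at $u$; and $t_1$ satisfies $\neg a$ and sees $t_2$, which has $a$ and the $\neg a$-successor $t_1$. The same inspection shows that $t$ is the unique state of $V_m$ having a successor (namely $t_1$) at which $\eta$ holds.

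For the base case $i=m$, I would verify that $\delta_m=\Diamond(\neg\eta)$ is satisfied at every $x\in V_{\leq m}$. If $x$ lies in an even slice then $x$ itself satisfies $a$ and hence $\neg\eta$, so the reflexive edge supplies the witness. If $x$ lies in an odd slice $V_i$, the positive-outdegree property of slice graphs provides an original edge $x\to y$ with $y\in V_{i+1}$; the slice $V_{i+1}$ is even, so $y$ satisfies $a$ and thus $\neg\eta$, again witnessing $\delta_m$ at $x$.

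For the inductive step, suppose the claim holds for $\delta_{i+1}$, and write $\delta_i=\Diamond(\ell\wedge\delta_{i+1})$ with $\ell=a$ if $i$ is odd and $\ell=\neg a$ if $i$ is even, so that among the main slices $\ell$ is satisfied precisely on $V_{i+1}$. For the forward direction ($x\in V_{\leq i}\Rightarrow x\mmodels\delta_i$) I would exhibit a witness $y\in V_{i+1}$: if $x\in V_i$, take any original slice-graph successor in $V_{i+1}$, which exists by positive outdegree; if $x$ lies in a lower slice, use a pseudo-transitive edge $x\to y$ with $y\in V_{i+1}$. Then $y\mmodels\ell$ by the parity of its slice, and $y\mmodels\delta_{i+1}$ by induction, since $y\in V_{\leq i+1}$. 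For the backward direction ($x\notin V_{\leq i}\Rightarrow x\not\mmodels\delta_i$), suppose $x\in V_j$ with $i<j\leq m$ had a successor $y\mmodels\ell\wedge\delta_{i+1}$. The inductive hypothesis forces any such $y$ lying in $V_{\leq m}$ into $V_{\leq i+1}$, while the graph structure forces $y\in V_{\geq j}\supseteq V_{\geq i+1}$; combining these narrows $y$ to $V_{i+1}$ exactly, and the parity of $V_{i+1}$ then has to match the literal $\ell$, which is contradicted by the combination of $j>i$ and the parity pattern of $\ell$ in the alternating chain.

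The main obstacle lies in the backward direction, and specifically in ruling out the four auxiliary states $u,t_1,t_2,\topnode$ as witnesses at the wrong levels: the reflexive loops and the pseudo-transitive edges $V_{\leq m-1}\times V_{m+1}$ and $V_{\leq m}\times V_{m+2}$ create many shortcut successors of $x$, so the induction hypothesis must be strengthened by a companion statement that also pins down $\delta_j$-satisfaction at $u,t_1,t_2$, and $\topnode$ for every $j$. Once this companion statement is available, the alternating $a/\neg a$ demand in the $\delta_i$-chain blocks these auxiliary witnesses, and the induction closes.
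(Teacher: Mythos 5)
Your base case and the left-to-right ("forward") half of the induction are fine and essentially match the paper's intent, but the backward direction contains a genuine gap, and it sits exactly where you placed the final hand-wave. First, the parenthetical "among the main slices $\ell$ is satisfied precisely on $V_{i+1}$" is false: $\ell$ holds on \emph{every} slice of the same parity as $V_{i+1}$. More importantly, the concluding "parity contradiction" does not exist. After your narrowing, the case $j=i+1$ leaves the reflexive successor $y=x$ as a candidate witness, and $V_{i+1}$ has exactly the parity that $\ell$ demands (that is the whole point of the alternation), so nothing is contradicted. In fact, using only the already-available forward half for level $i+1$, every $x\in V_{i+1}$ satisfies $\ell\wedge\delta_{i+1}$ at itself, hence $\Model{M}_G,x\mmodels\delta_i$: for instance every $x\in V_m$ satisfies $a$ and $\delta_m$ (it sees itself, where $\eta$ fails because $a$ holds), so $x\mmodels\delta_{m-1}$; the same conclusion also follows via $\topnode$, which satisfies $a\wedge\delta_m$ and is a successor of every state in $V_{\leq m}$. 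So the right-to-left implication you are trying to establish already fails at $x\in V_{i+1}$, and no bookkeeping inside your induction can repair that step.

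The "companion statement" about $u,t_1,t_2,\topnode$, which you correctly identify as unavoidable, actually works against you rather than for you: $t_1$ and $t_2$ form a two-cycle with alternating $a$-labels below $\topnode$, and a short downward induction shows that both satisfy every $\delta_j$ ($t_2$ serves as the $a$-witness, $t_1$ as the $\neg a$-witness). Since every state of $V_{\leq m-1}$ has a pseudo-transitive edge into $V_{m+1}$, these states are \emph{not} blocked by the alternating $a/\neg a$ demands; e.g.\ for $m\geq 4$ every $x\in V_3$ satisfies $\delta_1$. For comparison, the paper's own argument is terse at precisely these two spots: it applies the induction hypothesis to the witness $y$ (which presupposes $y\in V_{\leq m}$, excluding none of the auxiliary states) and rules out $y=x$ by "$x\mmodels\neg a$" (which presupposes $x$ lies in a slice of the opposite parity, not an arbitrary $x\in V_{\leq m}$). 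So your instinct that extra care is needed for the reflexive loops and the auxiliary states is sound, but the two repairs you propose do not go through; as stated, the equivalence cannot be recovered by this induction, and one would have to weaken the statement (e.g.\ restrict which $x$ and which $\delta_i$ are paired, matching how the claim is actually invoked) or adjust the construction before the argument can close.
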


\noindent
\emph{Proof of Claim.}
We proceed by induction on $i=m,m-1,\ldots,1$.
The base case $i=m$ is clear, since $\neg \eta$ is satisfied in $u$
and every state in $V_{\leq m}$ has an edge to $u$.
For the induction step consider an arbitrary $i<m$.
Let $i$ be odd and  and $x\in V_{\leq m}$.
If $\Model{M}_G,x \mmodels \delta_i$, 
then $x$ has a successor $y$ with
$\Model{M}_G,y \mmodels a$ and $\Model{M}_G,y \mmodels \delta_{i+1}$.
By the induction hypothesis we obtain $y\in V_{\leq i+1}$.
If $x\not= y$, it follows by the properties of the slice graph that 
$y$ is a successor of $x$ in a slice ``higher'' than that of $x$.
%, and therefore $x\in V_{\leq i}$.
%If $x=y$, then $y\not\in V_{i+1}$ because $V_{i+1}\subseteq \xi(a)$ and $\Model{M}_G,y \mmodels \neg a$.
The case $x=y$ is not possible because $\Model{M}_G,x \mmodels \neg a$ and $\Model{M}_G,y \mmodels a$.
Therefore $x \in V_{\leq i}$.
For the other proof direction,
take any $x\in V_{\leq i}$.
The formula $\delta_i$ is satisfied in $x$,
if there exists a path of length $m-i+1$ from $x$ to $u$ in $(U,R)$,
that goes through states that alternatingly satisfy $a$ and $\neg a$.
This means, that no edge $(v,v)$ appears on this path.
Since every state in $V_{\leq m}$ has a successor in the subsequent slice,
such a path exists, and therefore $\Model{M}_G,x \mmodels \delta_i$.
For even $i$, the proof is similar. ~\qedclaim

The goal state $t$ is the only state in $V_m$ that satisfies $\Diamond \eta$.
Using the $\delta_i$ formulas to verify an upper bound for the slice of a state,
we can now simulate the alternating graph accessibility problem by the following formulas.

Let $\lambda_{m} := a \wedge \Diamond \eta$ and for $i=m-1,m-2,\ldots,1$

\begin{mathe}
	\lambda_{i} &:= & \begin{cases}
												\neg a \wedge \Diamond (\delta_{i+1} \,\wedge\, \lambda_{i+1}), & \text{~~~if~~~} i \text{ is odd,} \\
												a \wedge \Box (\delta_{i+1} \rightarrow \lambda_{i+1}), 				& \text{~~~if~~~} i \text{ is even.}
										\end{cases}
\end{mathe}										
%
%$$
%\lambda_{i} := \left\{
%\begin{array}{rl}
%\neg a \wedge \Diamond (\delta_{i+1} \,\wedge\, \lambda_{i+1}), & \mbox{ if $i$ is odd,} \\
%a \wedge \Box (\delta_{i+1} \rightarrow \lambda_{i+1}), & \mbox{ if $i$ is even.}
%\end{array}
%\right.
%$$
\vspace{-1ex}
\begin{claim}\label{claim:s42main}
For $i=1,2,\ldots,m$ and all $x\in V_i$ holds:
$\apath_G(x,t)$ if and only if $\Model{M}_G,x \mmodels \lambda_i$.
\end{claim}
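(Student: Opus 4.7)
My plan is to prove Claim~\ref{claim:s42main} by downward induction on $i$ from $m$ to $1$, paralleling the proofs of Claims~\ref{claim:ipc-alternation} and \ref{claim:FPL-alternation} but adapted to the $\Box$/$\Diamond$ alternation now baked into $\lambda_i$ itself.

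For the base case $i=m$ I would observe that every $x\in V_m$ already satisfies $a$, so the task reduces to showing $x\mmodels\Diamond\eta$ iff $x=t$. A direct case analysis over the states of $\Model{M}_G$ shows that $\eta=\neg a\wedge\Diamond(a\wedge\Diamond\neg a)$ is satisfied exactly at $t_1$: the witness is supplied by the added cycle $t_1\to t_2\to t_1$ together with the valuation $\neg a$ at $t_1$ and $a$ at $t_2$, while every other state either carries $a$ or lacks the required two-step pattern (in particular $u$, $t_2$, $\topnode$, and all nodes of $V_{\leq m}$ fail $\eta$, since $\topnode$ is the only successor of $\topnode$ and kills the innermost $\Diamond\neg a$). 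Consequently $x\mmodels\Diamond\eta$ iff $x$ has $t_1$ as an $R$-successor, iff $x=t$, iff $\apath_G(x,t)$.

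For the inductive step, assuming the claim at $i+1$, I would split on the parity of $i$. If $i$ is odd then $V_i\subseteq V_{\exists}$ and $\apath_G(x,t)$ is equivalent, by definition of $\apath_G$ and the induction hypothesis, to the existence of an original successor $z\in V_{i+1}$ of $x$ satisfying $\lambda_{i+1}$. I would then match this with $\lambda_i=\neg a\wedge\Diamond(\delta_{i+1}\wedge\lambda_{i+1})$: the $\neg a$ conjunct holds automatically because $x$ sits in an odd slice; Claim~\ref{claim:delta} forces any $\Diamond$-witness into $V_{\leq i+1}$, filtering out all pseudo-transitive edges into $V_{\geq i+2}$; and the reflexive witness $y=x$ is excluded because $\lambda_{i+1}$ (for even $i+1$) starts with $a$ while $x\mmodels\neg a$. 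The even case is dual, with $\lambda_i=a\wedge\Box(\delta_{i+1}\rightarrow\lambda_{i+1})$ capturing universal quantification, the antecedent $\delta_{i+1}$ again discarding pseudo-transitive successors, and the self-loop at $x$ neutralised by the same parity flip between $a$ and $\neg a$.

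The principal obstacle is exactly this control of the extra edges added to turn the slice graph into a directed preorder: Claim~\ref{claim:delta} isolates the pseudo-transitive edges, and the alternating polarity of $a$ across consecutive $\lambda_i$-levels absorbs the reflexive self-loops, so that $\Diamond$ and $\Box$ effectively range only over the original $E$-successors in the subsequent slice. Once this is verified, specialising the induction to $i=1$ and $x=s$ yields $\apath_G(s,t)$ iff $\Model{M}_G,s\mmodels\lambda_1$, closing the claim and, via the surrounding argument, the reduction $\ASAGAP\mlogred\fe{\Sivii_1}$.
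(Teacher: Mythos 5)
Your overall strategy is exactly the paper's: the same downward induction, the same base case via $\eta$, and the same appeal to Claim~\ref{claim:delta} and to the alternation of $a$ in the inductive step. The first problem is a factual slip in your base case: $\eta=\neg a\wedge\Diamond(a\wedge\Diamond\neg a)$ is \emph{not} satisfied exactly at $t_1$. Every node $x$ of an odd slice $V_i$ with $i<m$ satisfies $\neg a$ and has a successor $y\in V_{i+1}$ with $\Model{M}_G,y\mmodels a$ and $\Model{M}_G,y\mmodels\Diamond\neg a$ (via $V_{i+2}$, or via $u$ when $i+1=m$), so $\Model{M}_G,x\mmodels\eta$. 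This slip is harmless for the conclusion you draw, because the base case only inspects successors of states of $V_m$, and on $V_m\cup\{u,t_1,t_2,\topnode\}$ the formula $\eta$ does hold only at $t_1$ -- which is precisely the weaker statement the paper uses and all that is needed.

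The more serious gaps are in the inductive step. Claim~\ref{claim:delta} is stated only for states in $V_{\leq m}$, so it does not ``force any $\Diamond$-witness into $V_{\leq i+1}$'': through the pseudo-transitive edges $V_{\leq m-1}\times V_{m+1}$ and $V_{\leq m}\times V_{m+2}$ every relevant state also sees $u$, $t_1$, $t_2$ and $\topnode$, and these four states are outside the scope of that claim. This is not a formality: $t_2$ satisfies $a$, it satisfies $\Diamond\eta$ via its successor $t_1$, and it satisfies $\delta_m=\Diamond\neg\eta$ via $\topnode$, so $t_2\mmodels\delta_m\wedge\lambda_m$; hence at level $i=m-1$ the pseudo-transitive edge $(x,t_2)$ yields $x\mmodels\Diamond(\delta_m\wedge\lambda_m)$ for \emph{every} $x\in V_{m-1}$, independently of $\apath_G(x,t)$, and nothing in your argument excludes such witnesses. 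Moreover, in the even case the self-loop is not ``neutralised by the parity flip'': there the reflexive successor $x$ would have to be discarded by falsifying the antecedent $\delta_{i+1}$, but Claim~\ref{claim:delta} asserts that $\delta_{i+1}$ is \emph{true} at $x\in V_i\subseteq V_{\leq i+1}$, while the parity flip only falsifies the consequent $\lambda_{i+1}$ at $x$; so the implication, and with it $\Box(\delta_{i+1}\rightarrow\lambda_{i+1})$, fails at $x$, and the direction ``$\apath_G(x,t)$ implies $x\mmodels\lambda_i$'' cannot be completed as you describe. To be fair, the paper's own proof consists of chains of equivalences annotated only with ``Claim~\ref{claim:delta}'' and ``construction of $\Model{M}_G$'' and is equally silent about the reflexive successor in the $\Box$-case and about witnesses in the two added slices; but since your write-up makes these justifications explicit, it also makes visible that they do not go through for the model as constructed, and a correct proof would need a genuinely different treatment of exactly these successors.
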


\noindent
\emph{Proof of Claim.}
We prove the claim by induction on $i$ 
and start with $i=m$.
For all $x\in V_m$ holds $\Model{M}_G,x \mmodels \lambda_m$ if and only if $x=t$, where the latter is the same as $\apath_G(x,t)$.
For the induction step,
consider an odd $i<m$ first and let $x\in V_i$.
We get the following equivalences.

\begin{mathe}
									& \hspace{-6.5ex} \apath_G(x,t) &  \\
	\Leftrightarrow &	\exists (x,y)\in E: y\in V_{i+1} \text{~and~} \apath_G(y,t) & \text{(definition of $\apath_G$)} \\
	\Leftrightarrow & \exists (x,y) \in R: \Model{M}_G,y\mmodels \delta_{i+1} \text{~and~} \Model{M}_G,y \mmodels \lambda_{i+1} & 
	                     \text{(ind. hypoth., Claim~\ref{claim:delta})} \\
	\Leftrightarrow & \Model{M}_G,x \mmodels \neg a \wedge \Diamond ( \delta_{i+1} \wedge \lambda_{i+1} ) \hspace{2ex}(= \lambda_i) & \text{(construction of $\Model{M}_G$)}
\end{mathe}
%
%From the definition of $\apath_G$ for slice graphs follows (i) iff (ii),
%(ii) iff (iii) by the inductive hypothesis and by Claim~\ref{claim:delta},
%and (iii) iff (iv) by the construction of $\Model{M}_G$.
%
Second, consider an even $i<m$, and let $x\in V_i$.
The following equivalences hold.

\begin{mathe}
									& \hspace{-6.5ex} \apath_G(x,t) &  \\
	\Leftrightarrow & \forall (x,y) \in E\!: \text{~if~} y \in V_{i+1} \text{~then~} \apath_G(y,t) & \\
	\Leftrightarrow & \forall (x,y) \in R\!: \text{~if~} \Model{M}_G,y\mmodels \delta_{i+1} \text{~then~} \Model{M}_G,y \mmodels \lambda_{i+1} &  \\
	\Leftrightarrow & \Model{M}_G,x \mmodels a \wedge \Box ( \delta_{i+1} \rightarrow \lambda_{i+1} ) \hspace{2ex}(= \lambda_i) & 
\end{mathe}
%
%From the definition of $\apath_G$ for slice graphs follow (i) iff (ii),
%(ii) iff (iii) by the inductive hypothesis and by Claim~\ref{claim:delta},
%and (iii) iff (iv) by the construction of $\Model{M}_G$.
The arguments for the equivalences are the same as above. ~\qedclaim

%%%%%%%%%%%%%%%%%%%%%%%%%%%%%%%%%%%%%%%%%%%%%%%%%%%%%%%%%%%%%%%%%

From Claim~\ref{claim:s42main} it now follows
that $\langle G,s,t\rangle\in \ASAGAP$ if and only if
$\Model{M}_G,s \mmodels \lambda_1$,
i.e. $\langle \lambda_1, \Model{M}_G,s\rangle \in \fe{\Sivii_1}$.
Since the construction of $\Model{M}_G$ and $\lambda_1$ from $G$ can
be computed in logarithmic space,
it follows that $\ASAGAP \mlogred \fe{\Sivii_1}$.
\qed

Note that the reduction in the proof of Theorem~\ref{thm:S4.2_1-P-hard} is not suitable for intuitionistic logics,
since the constructed Kripke model lacks the monotonicity property of the variables.
Moreover, in that proof we make extensive use of negation, that would have a very different meaning in intuitionistic logics.

Clearly, the same lower bound holds for the fragment of $\Siv$ with one variable.

\begin{cor}\label{cor:S4_1-P-hard}
The model checking problem for $\Siv_1$ is \p-hard. \qed
\end{cor}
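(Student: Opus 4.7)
The plan is to observe that the reduction constructed in the proof of Theorem~\ref{thm:S4.2_1-P-hard} transfers essentially unchanged, since every $\Sivii$ model is automatically an $\Siv$ model. Concretely, I would inspect the Kripke model $\Model{M}_G = (U,R,\xi)$ built there: its frame $(U,R)$ is obtained from the (reflexive and) pseudo-transitive closure of a slice graph, augmented by a top node $\topnode$ that is reachable from every state. In particular $(U,R)$ is reflexive, transitive, and directed (because $\topnode$ is a common successor of any two states), so it is a directed preorder, which is a fortiori a preorder. Hence $\Model{M}_G$ qualifies as an input to $\fe{\Siv_1}$.

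Next I would note that the satisfaction relation $\mmodels$ used in Theorem~\ref{thm:S4.2_1-P-hard} depends only on the concrete Kripke model and not on the frame class that the logic is associated with. Therefore Claims~\ref{claim:delta} and \ref{claim:s42main}, established there for $\Model{M}_G$, still hold verbatim when $\Model{M}_G$ is read as an $\Siv$ model, and the same formula $\lambda_1$, which uses only the single variable $a$, serves as the target of the reduction. This yields $\ASAGAP \mlogred \fe{\Siv_1}$ via the mapping $\langle G,s,t\rangle \mapsto \langle \lambda_1,\Model{M}_G,s\rangle$, and the \p-hardness of $\fe{\Siv_1}$ follows from Lemma~\ref{lemma:ASAGAP-P-complete}.

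There is no real obstacle here; the only thing to verify is that one does not accidentally use a property specific to directed preorders in the correctness argument of the original reduction. Inspecting the proof of Theorem~\ref{thm:S4.2_1-P-hard}, the argument only uses (i) reflexivity and transitivity of $R$, (ii) the concrete edges added during the construction (in particular the structure of slice $V_{m+1}$ that makes $\eta$ detect the goal node $t$), and (iii) the concrete valuation of $a$. Directedness is never invoked in the verification; it is merely used to ensure that $\Model{M}_G$ belongs to the intended class of models. So no part of the correctness argument needs to be revisited, and the corollary follows immediately.
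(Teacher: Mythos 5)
Your argument is correct and is essentially the paper's own: the paper derives the corollary by noting that the $\Sivii_1$ hardness construction of Theorem~\ref{thm:S4.2_1-P-hard} carries over verbatim, since the constructed model $\Model{M}_G$ is a (directed) preorder and hence a legitimate $\Siv$ model, so the same reduction from \ASAGAP applies. Your additional check that directedness is never used in the correctness argument is fine but not even needed, since the model is in both frame classes.
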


The \p-hardness results for $\fe{\Sivii_1}$ and $\fe{\Siv_1}$ are optimal
since the model checking problem for $\Siv_0$ is easy to solve.
A formula without any variables is either satisfied by every model w.r.t.~\Siv or it is satisfied by no model.
This is because $\Diamond \top$ (resp. $\Box \top$) is satisfied by every state in every model, 
and $\Diamond \bot$ (resp. $\Box \bot$) is satisfied by no state in every model.
Essentially, in order to evaluate a $\Siv_0$ formula in some model,
the model and the modal operators can be ignored and the remaining classical propositional formula can be evaluated 
like a classical propositional formula---this problem is \NCi-complete (see \cite{bus87}).

\begin{lem}\label{lem:S40_in_NC1}
The model checking problem for $\Siv_0$ and for $\Sivii_0$ are \NCi-complete. \qed
\end{lem}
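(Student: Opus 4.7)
The plan is to establish \NCi-completeness by reducing $\fe{\Siv_0}$ and $\fe{\Sivii_0}$ to the variable-free Boolean formula value problem (which is \NCi-complete by Buss~\cite{bus87}) and conversely. The semantic heart is a \emph{constancy claim}: in any $\Siv$ model $\Model{M}$ (and hence any $\Sivii$ model), every variable-free modal formula $\varphi$ takes the same truth value at every state of $\Model{M}$. I would prove this by structural induction on $\varphi$. The base case $\bot$ and the implication case are immediate from the induction hypothesis. The only substantive case is $\varphi = \Box\psi$: by hypothesis $\psi$ is constantly $\top$ or constantly $\bot$ on $\Model{M}$; in the former case $\Box\psi$ is trivially true everywhere, and in the latter reflexivity forces each state $s$ to see itself with $\psi$ false, so $\Box\psi$ is false everywhere. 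If the input uses the abbreviation $\Diamond$, one can either expand $\Diamond\chi$ to $\neg\Box\neg\chi$ beforehand or handle it symmetrically.

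For the upper bound, given an input $\langle\varphi,\Model{M},s\rangle$, I would discard $\Model{M}$ and $s$ and form the classical propositional formula $\varphi'$ by deleting every occurrence of $\Box$ (and $\Diamond$) from $\varphi$. The constancy claim yields $\Model{M},s\mmodels\varphi$ iff $\varphi'$ evaluates to true as a Boolean formula over the constants $\bot,\top$ with the connective $\rightarrow$. Symbol deletion is computable in \NCi, and classical Boolean formula value is in \NCi by Buss, so the composition lies in \NCi. The argument applies verbatim to $\Sivii_0$ because every directed preorder is in particular a preorder, so the constancy claim continues to hold over $\Sivii$-models.

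For \NCi-hardness I would reduce variable-free Boolean formula value to both $\fe{\Siv_0}$ and $\fe{\Sivii_0}$ in the same stroke: a Boolean formula $\varphi$ is already a syntactically legal variable-free modal formula, and I pair it with the trivial one-state reflexive Kripke model $\Model{M}_0 = (\{s\},\{(s,s)\},\emptyset)$, which is simultaneously an $\Siv$-model and an $\Sivii$-model (directedness is trivial on a singleton). Since $\varphi$ contains no modalities, $\Model{M}_0,s\mmodels\varphi$ holds iff $\varphi$ is a true Boolean formula, so the reduction is correct and clearly computable within the resources required for \NCi-hardness. The only non-routine point in the whole proof is the reflexivity step in the constancy claim, already foreshadowed by the paper's remark that $\Diamond\top$ is always satisfied and $\Diamond\bot$ is never satisfied; I anticipate no further obstacle.
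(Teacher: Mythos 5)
Your proposal is correct and follows essentially the same route as the paper: the paper's argument is exactly the observation that variable-free formulas are constant on reflexive (hence also directed) preorder models—because $\Box\top$/$\Diamond\top$ hold everywhere and $\Box\bot$/$\Diamond\bot$ nowhere—so the modal operators and the model can be discarded and the problem coincides with Buss's \NCi-complete Boolean formula value problem. You merely make the constancy induction and the single-state hardness reduction explicit, which the paper leaves as a remark.
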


According to Theorem~\ref{thm:FPL0-in-LOGCFL} we show that the complexity of $\fe{\PrL_0}$ is below \p, namely $\fe{\PrL_0} \in \ACi$.
Therefore the \p-hardness of $\fe{\PrL^{\rightarrow}_1}$ is optimal in the sense that we cannot save the variable.

\begin{thm}\label{thm:PrL0-in-AC1}
The model checking problem for $\PrL_0$ is in \ACi.
\end{thm}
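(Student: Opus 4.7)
My plan is to mirror the approach of Theorem~\ref{thm:FPL0-in-LOGCFL}. First, I would prove that satisfaction of any variable-free $\PrL$ formula $\phi$ at a state $w$ of a $\PrL$-model $\Model{M}$ depends only on $k := \lp_{\Model{M}}(w)$. Define $F(\phi, d)\in\{0,1\}$ recursively by $F(\bot, d) := 0$, $F(\alpha\rightarrow\beta, d) := \neg F(\alpha,d)\vee F(\beta,d)$, and $F(\Box\alpha, d) := \bigwedge_{i<d} F(\alpha, i)$. A structural induction on $\phi$ then yields $\Model{M},w\mmodels\phi$ iff $F(\phi,k)=1$. The critical case is $\phi = \Box\alpha$: in any finite transitive and irreflexive frame, the successors of a state $w$ with $\lp(w)=d$ realize exactly the depths $\{0,1,\dots,d-1\}$, because along any longest path $w\to v_1\to\cdots\to v_d$ transitivity makes each $v_j$ a direct successor with $\lp(v_j)=d-j$, while conversely no successor of $w$ can have depth $\geq d$. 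Combined with the inductive hypothesis, this gives the formula for $F(\Box\alpha, d)$.

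Unlike in $\FPL_0$, a variable-free $\PrL$ formula is generally not equivalent to a single threshold $\alpha_i$ (only to a Boolean combination of the $\Box^k\bot$'s), so the ``formula-index'' shortcut is unavailable and $F(\phi,k)$ must be computed more explicitly. The algorithm first obtains $k=\lp_{\Model{M}}(w)$ in $\NLog\subseteq\ACi$, using the $\NLog$-completeness of the longest-path problem on DAGs that is recalled in Lemma~\ref{lem:FPL_0-NLhard}. It then computes, for every subformula $\psi$ of $\phi$, the bit-vector $T_\psi\in\{0,1\}^{k+1}$ defined by $T_\psi[d] := F(\psi,d)$. The three primitive transformations involved---the zero vector (for $\bot$), the pointwise operation $\overline{T_\alpha}\vee T_\beta$ (for $\alpha\rightarrow\beta$), and the prefix-AND $T_{\Box\alpha}[d] := \bigwedge_{i<d} T_\alpha[i]$ (for $\Box\alpha$)---are each realised by a polynomial-size, constant-depth, unbounded-fan-in Boolean sub-circuit on vectors of length $k+1$.

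The main obstacle is controlling the overall circuit depth: the syntactic tree of $\phi$ can have depth $\Theta(|\phi|)$, so a straightforward bottom-up assembly of the $T_\psi$'s only yields a $\p$-algorithm. The remedy is a Buss-style rebalancing of $\phi$'s syntactic tree, or equivalently parallel tree-contraction: the tree of size $O(|\phi|)$ is reduced in $O(\log|\phi|)$ parallel rake-and-compress rounds, each round composing only a constant number of the primitive bit-vector operations above and therefore contributing only constant depth with unbounded fan-in. Verifying that the bit-vector algebra generated by these three primitives stays closed under short compositions (so that each compressed chunk is still realisable as an $O(1)$-depth Boolean circuit on the length-$(k+1)$ input vectors) and that the overall construction is logspace-uniform, then yields a polynomial-size, $O(\log n)$-depth, unbounded-fan-in circuit family for $\fe{\PrL_0}$, placing it in $\ACi$.
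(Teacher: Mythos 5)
Your first two steps coincide with the paper's own proof: the statement that satisfaction of a variable-free formula at $w$ depends only on $\lp_{\Model{M}}(w)$ is exactly the paper's reduction to the linear models $\Model{L}_n=(\{0,1,\ldots,n\},>)$ (your $F(\varphi,d)$ is just truth at state $d$ of such a model), including the observation that the successors of a state of depth $d$ realize precisely the depths $0,\ldots,d-1$; and computing $\lp_{\Model{M}}(w)$ in $\NLog\subseteq\ACi$ is the same step. Where you diverge is the last ingredient: the paper builds no circuit at all but invokes Reinhardt's (unpublished, personally communicated) result that $\PrL_0$ model checking on linear frames is in \ACi, whereas you attempt to prove that ingredient explicitly. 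That is a legitimate and in fact more self-contained route, but your sketch of it has a real gap.

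The gap is in the depth analysis of the tree contraction. You justify the $O(\log|\varphi|)$ bound by saying each rake-and-compress round composes ``only a constant number of the primitive bit-vector operations'' and hence adds constant depth. That is true only in the first round: a compress step composes two contexts that were themselves produced by earlier rounds, so after $r$ rounds the objects being composed may encode chains of $2^r$ primitives, and realizing such a composite directly costs depth proportional to its length. What tree contraction actually needs is a polynomial-size representation of single-hole contexts over your bit-vector algebra that is \emph{closed under arbitrary composition}, with composition and application computable in constant unbounded fan-in depth; closure ``under short compositions'' is not the right condition, and no such representation is exhibited. One does exist, but it hinges on a structural fact you never use: the prefix-AND primitive sends every vector to one of only $k+2$ threshold vectors $1^{j}0^{k+1-j}$, so any context containing at least one $\Box$ factors through a domain of size $k+2$ and can be stored as a pointwise mode vector plus a lookup table of $O(k^2)$ bits, after which composing two representations reduces to first-zero detection and multiplexing, which are constant-depth. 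Without this (or an equivalent normal form) your depth bound does not follow; in addition, the logspace uniformity of the contraction schedule, which you only mention in passing, would have to be argued to place the construction in \ACi rather than merely in nonuniform $\classfont{AC^1}$.
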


\proof
We show that every $\PrL_0$ model can be reduced to its longest path.
%\footnote{Let $\Model{M}_1$ and $\Model{M}_2$ be $\PrL_0$ models and $w_1$ (resp. $w_2$) be a state of $\Model{M}_1$ (resp. $\Model{M}_2$).
%We say $(\Model{M}_1,w_1)$ is \textit{homomorphic} to $(\Model{M}_2,w_2)$ if for every $\Klasse{ML}_0$ formula $\varphi$ it holds that $\Model{M}_1,w_1 \mmodels \varphi \Leftrightarrow \Model{M}_2,w_2 \mmodels \varphi$.}
Therefore we define linear models%
\footnote{A frame $\Model{M}=(W,R)$ is \textit{linear} if for every $w_1,w_2 \in W$ (with $w_1 \neq w_2$) it holds that either $(w_1,w_2) \in R$ or $(w_2,w_1) \in R$.}
$\Model{L}_n := (\{0,1,\ldots,n\},>)$ and use the function $\lp_{\Model{M}}$, that maps a state to the length of the longest path in its model starting in this state (see the proof of Theorem~\ref{thm:FPL0-in-LOGCFL}).
(Note that we give no valuation function because in $\PrL_0$ models variables are irrelevant.)
Reinhardt \cite{KR-presonal11} recently showed the upper bound \ACi for $\PrL_0$ model checking restricted to linear models.

%For a given $\PrL_0$ model $\Model{M}$ and a state $w$ we show that $(\Model{M},w)$ is homomorphic to the linear model that has the length of the longest path starting of $\Model{M}$ in $w$.

\begin{claim}\label{claim:homomorphic_L_model}
Let $\Model{M}=(W,R)$ be a $\PrL_0$ model, $w\in W$, and $\varphi$ a variable free $\Klasse{ML}$ formula.
Then it holds that $\Model{M},w \mmodels \varphi$ if and only if $\Model{L}_{\lp_{\Model{M}}(w)},\lp_{\Model{M}}(w) \mmodels \varphi$.
\end{claim}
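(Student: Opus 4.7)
The plan is to prove the claim by induction on the structure of $\varphi$, after first recording a \emph{generated subframe lemma} for the linear models: for every $k \leq n$ and every variable-free $\Klasse{ML}$-formula $\psi$, $\Model{L}_n, k \mmodels \psi$ iff $\Model{L}_k, k \mmodels \psi$. This is essentially immediate, because in $\Model{L}_n$ (whose accessibility is $>$) the set of states reachable from $k$ is exactly $\{0,1,\ldots,k\}$, and the restriction of $\Model{L}_n$ to this set is literally $\Model{L}_k$; a routine induction on $\psi$ shows that satisfaction at $k$ is unaffected by deleting the states strictly above $k$.

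The cases $\varphi = \bot$ and $\varphi = \chi_1 \rightarrow \chi_2$ of the main induction are immediate: $\bot$ fails at every state on both sides, and the implication case reduces to the induction hypothesis for $\chi_1$ and $\chi_2$, since the semantics of $\rightarrow$ is local to the distinguished state. The core case is $\varphi = \Box\chi$, and the plan here is to first establish the following combinatorial fact about any transitive, irreflexive $\PrL_0$ model $\Model{M}$:
\begin{equation*}
\{\lp_{\Model{M}}(v) \mid (w,v) \in R\} \;=\; \{0,1,\ldots,\lp_{\Model{M}}(w)-1\}.
\end{equation*}
The inclusion ``$\subseteq$'' is easy: prepending $w$ to any longest path starting at $v$ yields a path of length $\lp(v)+1$ from $w$, forcing $\lp(v) < \lp(w)$. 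For ``$\supseteq$'', fix a longest path $w \to u_1 \to \cdots \to u_n$ of length $n = \lp(w)$; by transitivity each $u_i$ is a successor of $w$, and maximality of the original path forces $\lp(u_i) = n-i$ (any strict improvement would extend the total path length beyond $n$), so every value in $\{0,\ldots,n-1\}$ is attained by some successor.

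Granted the combinatorial fact, the $\Box$-case unfolds smoothly: $\Model{M}, w \mmodels \Box \chi$ iff $\Model{M}, v \mmodels \chi$ for every $R$-successor $v$ of $w$, which by the induction hypothesis is equivalent to $\Model{L}_{\lp(v)}, \lp(v) \mmodels \chi$ for each such $v$. By the combinatorial fact the set of indices $\lp(v)$ is exactly $\{0,\ldots,\lp(w)-1\}$, so this becomes $\Model{L}_k, k \mmodels \chi$ for every $k < \lp(w)$. The generated subframe lemma then replaces $\Model{L}_k$ by $\Model{L}_{\lp(w)}$ at no cost, giving $\Model{L}_{\lp(w)}, k \mmodels \chi$ for every $k < \lp(w)$, which is precisely $\Model{L}_{\lp(w)}, \lp(w) \mmodels \Box \chi$. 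The main obstacle is the ``$\supseteq$'' half of the combinatorial fact: this is where both transitivity (to promote all intermediate states on a longest path to immediate successors of $w$) and irreflexivity (which rules out self-loops that would otherwise inflate some $\lp(u_i)$) genuinely participate, together with the tightness of the maximality argument that pins down $\lp(u_i) = n-i$ rather than merely $\lp(u_i) \geq n-i$.
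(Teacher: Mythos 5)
Your proof is correct and takes essentially the same route as the paper's: structural induction on $\varphi$, with $\bot$ and $\rightarrow$ trivial and the $\Box$-case handled by matching the successors of $w$ with the states of $\Model{L}_{\lp_{\Model{M}}(w)}$. The only difference is that you spell out explicitly the two facts the paper leaves implicit in its annotations, namely the generated-submodel step and the identity $\{\lp_{\Model{M}}(v) \mid (w,v)\in R\} = \{0,1,\ldots,\lp_{\Model{M}}(w)-1\}$, which is exactly what justifies the paper's final equivalence labelled ``construction of $\Model{L}_{\lp_{\Model{M}}(w)}$''.
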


\noindent
\emph{Proof of Claim.}
We show this by induction on the construction $\varphi$.
The case $\varphi=\bot$ is clear.
In the induction step the case $\varphi=\alpha \rightarrow \beta$ is straightforward.
Assume that $\varphi= \Box \alpha$.

\begin{mathe}
	& \hspace{-6.5ex} \Model{M},w \models \varphi \hspace{2ex} (= \Box \alpha) & \\
	\Leftrightarrow & \forall v \in W, (w,v) \in R : \Model{M},v \mmodels \alpha & (\text{semantics of } \Box) \\
	\Leftrightarrow & \forall v \in W, (w,v) \in R : \Model{L}_{\lp_{\Model{M}}(v)},\lp_{\Model{M}}(v) \mmodels \alpha & (\text{induction hypothesis}) \\
	\Leftrightarrow & \forall v \in W, (w,v) \in R : \Model{L}_{\lp_{\Model{M}}(w)},\lp_{\Model{M}}(v) \mmodels \alpha & (\Model{L}_{\lp_{\Model{M}}(v)} \text{ is a submodel of } \Model{L}_{\lp_{\Model{M}}(w)}) \\
	\Leftrightarrow & \Model{L}_{\lp_{\Model{M}}(w)},\lp_{\Model{M}}(w) \mmodels \Box \alpha \hspace{2ex} (=\varphi) & (\text{construction of }\Model{L}_{\lp_{\Model{M}}(w)}) ~\qedclaim
\end{mathe}

\noindent For a $\PrL_0$ instance $\langle \varphi, \Model{M}, w \rangle$ one can compute $\lp_{\Model{M}}(w)$ with the resources of \NLog (see~\cite{JT07}). 
It can be decided whether $\langle \varphi, \Model{L}_{\lp_{\Model{M}}(w)},\lp_{\Model{M}}(w) \rangle \in \fe{\PrL_0}$ with the resources of \ACi \cite{KR-presonal11}.
With Claim~\ref{claim:homomorphic_L_model} it holds that $\langle \varphi, \Model{L}_{\lp_{\Model{M}}(w)},\lp_{\Model{M}}(w) \rangle \in \fe{\PrL_0}$ if and only if $\langle \varphi, \Model{M}, w \rangle \in \fe{\PrL_0}$.
Since $\NLog \subseteq \ACi$ it holds that $\fe{\PrL_0} \in \ACi$.
\qed

It is not known whether \ACi also is the lower bound of $\fe{\PrL_0}$.
But from Lemmas~\ref{lem:Goedel-Tarski} and \ref{lem:FPL_0-NLhard}, the lower bound \NLog follows, even for the strictly implicational fragment.

\begin{lem}\label{lem:PrL_0-NLhard}
The model checking problem for $\PrL_0^{\rightarrow}$ is \NLog-hard. \qed
\end{lem}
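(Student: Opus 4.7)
The plan is to establish $\fe{\FPL_0^{\iImpl}} \mlogred \fe{\PrL_0^{\rightarrow}}$ via the modified G\"odel--Tarski translation $\GT'$ from Lemma~\ref{lem:Goedel-Tarski}, and then conclude \NLog-hardness from Lemma~\ref{lem:FPL_0-NLhard}. The underlying idea is that $\FPL$ and $\PrL$ share exactly the same frame class (transitive and irreflexive), so the model can be left untouched, and only the formula needs translation.

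Concretely, I would proceed as follows. Given an instance $\langle \alpha, \Model{M}, s \rangle$ of $\fe{\FPL_0^{\iImpl}}$, the reduction outputs $\langle \GT'(\alpha), \Model{M}, s \rangle$. First, I would check that $\GT'(\alpha)$ lies in the strictly implicational variable-free fragment: since $\alpha$ uses only $\bot$ and $\iImpl$ and contains no variables, the clauses $\GT'(\bot) = \bot$ and $\GT'(\beta \iImpl \gamma) = \Box(\GT'(\beta) \rightarrow \GT'(\gamma))$ produce a formula whose only constructors are $\bot$ and $\Box(\cdot \rightarrow \cdot)$, matching exactly the grammar of $\KK^{\rightarrow}_0$. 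Second, I would observe that the intuitionistic model $\Model{M}$ can be reused as a $\PrL$ model: its frame is transitive and irreflexive by the definition of an $\FPL$ model, which coincides with the $\PrL$ frame condition, and monotonicity of the valuation is vacuous in the absence of variables. Third, Lemma~\ref{lem:Goedel-Tarski} immediately gives $\Model{M}, s \imodels \alpha \iff \Model{M}, s \mmodels \GT'(\alpha)$, so the reduction is correct. Finally, the translation $\GT'$ is a local, structure-preserving textual substitution, so it is computable in logarithmic space, and copying the unchanged model is also in logspace.

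Combining correctness with the logspace bound, the reduction witnesses $\fe{\FPL_0^{\iImpl}} \mlogred \fe{\PrL_0^{\rightarrow}}$, and the claim follows from Lemma~\ref{lem:FPL_0-NLhard} together with closure of \NLog-hardness under logspace many-one reductions.

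I do not anticipate a genuine obstacle here: once Lemmas~\ref{lem:Goedel-Tarski} and~\ref{lem:FPL_0-NLhard} are in hand, the argument is essentially bookkeeping. The only point requiring minor care is confirming that $\GT'$ respects both restrictions simultaneously (no variables and strict implicationality), but this is immediate by inspecting the defining clauses of $\GT'$ on variable-free implicational input.
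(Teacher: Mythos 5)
Your proposal is correct and matches the paper's own argument: the lemma is obtained exactly by composing the satisfaction-preserving translation $\GT'$ of Lemma~\ref{lem:Goedel-Tarski} (which keeps variable-free implicational formulas strictly implicational and reuses the same transitive irreflexive model) with the \NLog-hardness of $\fe{\FPL_0^{\iImpl}}$ from Lemma~\ref{lem:FPL_0-NLhard}. Your extra checks on the fragment restrictions and logspace computability are the same bookkeeping the paper leaves implicit.
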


Even though we do not know the exact complexity of $\fe{\FPL_0}$ and $\fe{\PrL_0}$,
it is a bit surprising that the \LOGCFL upper bound we got for $\fe{\FPL_0}$ (Theorem~\ref{thm:FPL0-in-LOGCFL})
is lower than the \ACi upper bound for $\fe{\PrL_0}$ (Theorem~\ref{thm:PrL0-in-AC1}).

%%%%%%%%%%%%%%%%%%%%%%%%%%%%%%%%%%%%%%%%%%%%%%%%%%%%%%%%%%%%%%%%%%%%%%%%%%%%%%%%%%%%%%%%%%%%%%%%%%%%%%%%%%%%
% Conclusion
%%%%%%%%%%%%%%%%%%%%%%%%%%%%%%%%%%%%%%%%%%%%%%%%%%%%%%%%%%%%%%%%%%%%%%%%%%%%%%%%%%%%%%%%%%%%%%%%%%%%%%%%%%%%

\section{Conclusion}
\label{sec:conclusion}

Now we are ready to state the \p-completeness results for the model checking problems
for intuitionistic logics and their modal companions.
Overviews are given in Figures~\ref{fig:IPC_results} and \ref{fig:ML_results}.
We start with optimal results for intuitionistic logics.

\begin{thm}\label{thm:opt-intu-results}
The model checking problem is \p-complete for 
$\FPL^{\iImpl}_1$, $\KC_2$, $\IPC_2$, and $\BPL_0$.
These results are optimal with respect to the number of variables.
\end{thm}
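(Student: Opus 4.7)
The plan is to assemble \p-completeness by combining the already-established \p-hardness results with the matching \p upper bound, and then to read off optimality from the sub-\p upper bounds proved earlier for the corresponding smaller fragments.

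First, for the upper bound, I would observe that \p membership for all four problems follows uniformly: by Lemma~\ref{lem:Goedel-Tarski}, the translation $\GT'$ is a logspace many-one reduction from $\fe{L}$ to $\fe{\KK}$ for every intuitionistic logic $L$ under consideration (and it does not introduce new variables), so Theorem~\ref{thm:FisherLadner} yields $\fe{L}\in\p$ for each restricted fragment $\FPL^{\iImpl}_1$, $\KC_2$, $\IPC_2$, and $\BPL_0$. The matching \p-hardness is then cited directly: for $\FPL^{\iImpl}_1$ from Theorem~\ref{thm:FPL1-P-hard}, for $\KC_2$ and $\IPC_2$ from Theorem~\ref{thm:KC2-P-hard}, and for $\BPL_0$ from Theorem~\ref{thm:BPL0-P-hard}. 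Combining both directions gives \p-completeness in each of the four cases.

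Second, for optimality, the task is to show that dropping one more variable (where possible) pushes the complexity strictly below \p under the usual separation assumptions. For $\FPL^{\iImpl}_1$ the only possible reduction is to zero variables; by Theorem~\ref{thm:FPL0-in-LOGCFL} the full logic $\fe{\FPL_0}$ (hence also its implicational fragment) lies in \LOGCFL, and since $\LOGCFL\subseteq\ACi\subsetneq\p$ is expected, $\FPL^{\iImpl}_0$ cannot be \p-hard unless \LOGCFL coincides with \p. For $\KC_2$ and $\IPC_2$ I would invoke the results from \cite{MW-STACS11} stated in the introduction: $\fe{\KC_1}\in\NCi$ and $\fe{\IPC_1}\in\ACi$, both strictly below \p under the standard class separations. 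Finally, for $\BPL_0$ there is no smaller fragment to consider, so optimality is immediate.

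The only mildly non-trivial step is checking that the \p upper bound really transfers to the restricted fragments via $\GT'$; but since $\GT'$ preserves both the set of variables and the structural shape of the formula, restricting the number of variables in the intuitionistic input simply restricts the number of variables in the modal image, so the bound from Theorem~\ref{thm:FisherLadner} applies verbatim to each fragment. Everything else is bookkeeping: gathering the four \p-hardness theorems, pairing each with the \p upper bound, and citing the four sub-\p upper bounds that certify optimality.
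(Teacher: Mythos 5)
Your proposal is correct and follows essentially the same route as the paper: cite the \p upper bound via Theorem~\ref{thm:FisherLadner} (transferred through the G\"odel--Tarski translation, which adds no variables), pair it with the hardness results of Theorems~\ref{thm:FPL1-P-hard}, \ref{thm:KC2-P-hard}, and \ref{thm:BPL0-P-hard}, and read off optimality from Theorem~\ref{thm:FPL0-in-LOGCFL} and the $\fe{\IPC_1}$/$\fe{\KC_1}$ bounds of \cite{MW-STACS11}. The extra check that the translation respects the variable-bounded fragments is a harmless elaboration of what the paper states after Theorem~\ref{thm:FisherLadner}.
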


\proof
The upper bound from Theorem~\ref{thm:FisherLadner} carries over to all these fragments.
The \p-hardness 
for $\FPL^{\iImpl}_1$ comes from Theorem~\ref{thm:FPL1-P-hard}, 
for $\KC_2$ and $\IPC_2$ from Theorem~\ref{thm:KC2-P-hard} and 
for $\BPL_0$ from Theorem \ref{thm:BPL0-P-hard}.
The optimality for $\fe{\FPL_1^{\iImpl}}$ follows from Theorem~\ref{thm:FPL0-in-LOGCFL} where we show that $\fe{\FPL_0}$ is in \LOGCFL.
For $\fe{\IPC_2}$ and $\fe{\KC_2}$ it follows from \cite{MW-STACS11} where \ACi-completeness for $\fe{\IPC_1}$ and \NCi-completeness for $\fe{\KC_1}$ is shown.
\qed

%A completeness result for $\fe{\FPL_0}$ is still open but we give \NLog as lower bound in Lemma~\ref{lem:FPL_0-NLhard}.
For the following results the optimality is still open.

\begin{thm}\label{thm:nopt-intu-results}
The model checking problem is \p-complete for 
$\KC^{\iImpl}$, $\IPC^{\iImpl}$, and $\BPL^{\iImpl}_1$.
\end{thm}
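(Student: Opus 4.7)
The plan is to assemble this theorem purely from results already established earlier in the excerpt, since the hard work has been done in the preceding sections. First I would state the common upper bound: by Theorem~\ref{thm:FisherLadner}, $\fe{\KK}$ lies in \p, and since every Kripke model used for an intuitionistic logic is in particular a Kripke model for \KK, the translation $\GT'$ of Lemma~\ref{lem:Goedel-Tarski} provides logspace many-one reductions from $\fe{L}$ to $\fe{\KK}$ for each of $L\in\{\KC^{\iImpl},\IPC^{\iImpl},\BPL^{\iImpl}_1\}$. Hence \p is an upper bound for all three problems simultaneously, and it suffices to match this with hardness.

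For the lower bounds I would simply cite the already-proven hardness results for each fragment separately. The \p-hardness of $\fe{\KC^{\iImpl}}$ is precisely Theorem~\ref{thm:KC-P-hard}. The \p-hardness of $\fe{\IPC^{\iImpl}}$ is obtained in Corollary~\ref{cor:implIPC-BPL-p-hard}, since the model $\Model{M}_G$ built in the proof of Theorem~\ref{thm:KC-P-hard} is already a preorder and thus serves as an \IPC model, while the formula $\psi_G$ is unchanged. Finally, \p-hardness of $\fe{\BPL^{\iImpl}_1}$ is Corollary~\ref{cor:implBPL1-p-hard}, which is in fact stronger than what the theorem claims (it restricts not only to implicational formulas but to one variable as well), and in particular implies hardness for the unrestricted implicational fragment of \BPL.

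Putting the three hardness statements together with the uniform \p upper bound yields \p-completeness in each case. No new construction is required; the only thing to observe carefully is that the upper bound really does pass through the translation $\GT'$ to the implicational fragments, which is the content of Lemma~\ref{lem:Goedel-Tarski}. The only subtle point, if any, is that the different frame classes (transitive, reflexive, directed preorder) do not affect the upper bound, since \p contains $\fe{\KK}$ and every intuitionistic frame is in particular a Kripke frame. Thus the entire proof reduces to a short bookkeeping argument, and I would expect no real obstacle beyond citing the right result for each of the three fragments.
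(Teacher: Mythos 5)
Your proposal is correct and follows essentially the same route as the paper: the \p upper bound is inherited from Theorem~\ref{thm:FisherLadner} (via the satisfaction-preserving translation), and the three lower bounds are exactly Theorem~\ref{thm:KC-P-hard}, Corollary~\ref{cor:implIPC-BPL-p-hard}, and Corollary~\ref{cor:implBPL1-p-hard}. One tiny quibble: Corollary~\ref{cor:implBPL1-p-hard} is not ``stronger than what the theorem claims''\,---\,the theorem itself asserts completeness for the one-variable fragment $\BPL^{\iImpl}_1$, so the corollary is precisely what is needed.
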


\proof
The upper bound from Theorem~\ref{thm:FisherLadner} carries over to all these fragments.
The \p-hardness 
for $\KC^{\iImpl}$ comes from Theorem~\ref{thm:KC-P-hard}, for $\IPC^{\iImpl}$ from Corollary~\ref{cor:implIPC-BPL-p-hard}, and
for $\BPL^{\iImpl}_1$ from Corollary~\ref{cor:implBPL1-p-hard}.
\qed

It is known that the validity problem for $\IPC^{\iImpl}$ even without using $\bot$ \cite{stat79, Chagrov85, svejdar03}, for $\FPL^{\iImpl}$ and $\BPL^{\iImpl}$ \cite{Chagrov85}, and for  $\IPC_2$, $\FPL_1$, and $\BPL_0$ \cite{Rybakov06} is \PSPACE-complete.
We show for all these fragments that model checking is \p-complete. 
Even more, for the implicational fragments $\FPL^{\iImpl}_1$ and $\BPL^{\iImpl}_1$ 
with only one variable we reach \p-completeness of model checking.
Notice that no \PSPACE-hardness results for the validity problem for implicational fragments with a bounded number of variables are known.

Our \p-completeness results for $\fe{\KC^{\iImpl}}$ and $\fe{\IPC^{\iImpl}}$ hold also for the purely implicational fragments, 
i.e. $\KC^{\iImpl}$ and $\IPC^{\iImpl}$ without using $\bot$ (resp. negation).
But what happens if one bounds the number of variables in the implicational fragments?
The model checking problem for $\IPC_1^{\iImpl}$ is \NCi-complete \cite{MW-STACS11} but for $\IPC^{\iImpl}_i$ with $i>1$ 
it is open whether the complexity is below \p.
%For a given instance we compute the equivalence class of the formula and the equivalence class of the model an check whether they are compatible.
The fragments $\IPC^{\iImpl}_i$ have finitely many equivalence classes of formulas and models \cite{Urq74,RHJ10}.
This equivalence class can be obtained with the resources of \NCi,
using a straightforward extension of the Boolean formula evaluation algorithm of Buss~\cite{bus87}.
This might indicate an upper bound lower than \p for the model checking problem.
But it is not clear how hard it is to obtain the equivalence class of a given model.

Another interesting open question is the complexity of $\fe{\BPL^{\iImpl}_0}$.
We expect the \p-completeness of $\fe{\BPL^{\iImpl}_1}$ to be optimal.
But in contrast to $\IPC^{\iImpl}_i$ even $\BPL^{\iImpl}_0$ has infinitely many equivalence classes of formulas,
because $\FPL^{\iImpl}_0$ already has it~\cite{visser80}.
For $\FPL_0$, every equivalence class is represented by an implicational formula
(see proof of Theorem~\ref{thm:FPL0-in-LOGCFL}).
For $\BPL^{\iImpl}_0$, it is clear that there are more equivalence classes,
but it is open whether they can easily be represented.

%Rybakov \cite{Rybakov06} showed the \PSPACE-completeness of the validity problem for $\IPC_2$, $\FPL_1$, and $\BPL_0$.
%We show that model checking is \p-complete for $\IPC_2$, $\FPL^{\rightarrow}_1$, $\BPL^{\rightarrow}_1$, and $\BPL_0$ and that this is optimal for $\IPC_2$, $\FPL^{\rightarrow}_1$, and $\BPL_0$ with respect to the number of variables.
%
%
%We need at least one variable for the \p-completeness of $\fe{\FPL^{\rightarrow}}$ and $\fe{\BPL^{\rightarrow}}$.
%The validity problem for $\IPC^{\rightarrow}_i$ is NCi-complete because the equivalence class for a given formula can be obtained with the resources of \NCi.
%The validity problem for $\IPC_0$ is \NCi-complete and for $\IPC_1$ it is in \LOGDCFL \cite{MW-STACS11}.

%It is not hard to see that model checking for the purely implicational fragments of $\FPL_2$ and $\BPL_2$ are also \p-complete.
%We expect that this is optimal because the pureness of a fragment costs us an additional variable for simulating $\bot$.
%For \p-hardness of \fe{\BPL} we pay with variables for the restriction of connectives.
%For implicational fragments one needs at least one variable.
%As in \FPL, dropping the use of $\bot$ (resp. negation) costs a second variable.

%We show \p-completeness only for $\BPL_0$.

\begin{figure}[t]  %results intu logic

\begin{center}
\renewcommand{\arraystretch}{1.5}\vspace{1.2ex}%
\begin{tabular}{|l|cccc|}
\hline
 & \multicolumn{4}{c|}{number of variables} \\
 & unbounded & 2 & 1 & 0 \\
\hline
\BPL & \multicolumn{3}{c}{\p-complete$^{\iImpl}$} & \p-complete\\
\hline
\multirow{2}{*}{\FPL} & \multicolumn{3}{c}{\multirow{2}{*}{\p-complete$^{\iImpl}$}} & in \LOGCFL \\[-2px]
& \multicolumn{3}{c}{} & \NLog-hard$^{\iImpl}$ \\
\hline
\IPC & \p-complete$^{\iImpl}$ & \p-complete & \ACi-complete\cite{MW-STACS11} & \NCi-complete\cite{MW-STACS11} \\
\hline
\KC & \p-complete$^{\iImpl}$ & \p-complete & \multicolumn{2}{c|}{\NCi-complete\cite{MW-STACS11}} \\
\hline
\end{tabular}%
\renewcommand{\arraystretch}{0}

\caption{Complexity of the model checking problem for intuitionistic logics. \protect\\
(The $^{\iImpl}$ 
indicates that the result holds for the implicational fragment.)}
\label{fig:IPC_results}
\end{center}

\end{figure}

%%%%%%%%%%%%%%%%%%%%%%%%%%%%%%%%%%%%%%%%%%%%%%%%%%%%%%%%%%%%%%%%%%%%%%%%%%%%%%%%%%%%%%%%%%%%
% ML
%%%%%%%%%%%%%%%%%%%%%%%%%%%%%%%%%%%%%%%%%%%%%%%%%%%%%%%%%%%%%%%%%%%%%%%%%%%%%%%%%%%%%%%%%%%%

For the modal companions we conclude the following and start with the optimal results.

\begin{thm}\label{thm:opt-modal-results}
The model checking problem is \p-complete for 
$\PrL^{\rightarrow}_1$, $\Sivii_1$, $\Siv_1$, and $\Kiv_0$.
These results are optimal with respect to the number of variables.
\end{thm}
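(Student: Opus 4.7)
The plan is to package together the upper bound of Theorem~\ref{thm:FisherLadner}, the hardness results proved in the modal-logic section, and the subPtime results for the variable-free fragments, so that \p-completeness and optimality drop out immediately.

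First I would note that the \p upper bound for $\fe{\KK}$ (Theorem~\ref{thm:FisherLadner}) transfers to every logic under consideration, because any formula and model for $\PrL$, $\Sivii$, $\Siv$, or $\Kiv$ (with any bounded number of variables, and in particular in the strictly implicational fragment) is still a valid input for $\fe{\KK}$: the frame condition and valuation can be checked in logarithmic space before invoking the Fisher--Ladner algorithm. This yields membership in \p for all four problems uniformly.

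Next I would collect the matching lower bounds already proved earlier in this section. For $\Kiv_0$ and $\PrL^{\rightarrow}_1$ the \p-hardness is part of Theorem~\ref{thm:lowerbounds_modalL}, obtained via the G\"odel--Tarski style translation of Lemma~\ref{lem:Goedel-Tarski} from the corresponding intuitionistic results (Theorems~\ref{thm:BPL0-P-hard} and \ref{thm:FPL1-P-hard}). For $\Sivii_1$ the \p-hardness is Theorem~\ref{thm:S4.2_1-P-hard}, and for $\Siv_1$ it is Corollary~\ref{cor:S4_1-P-hard}, which is immediate from the $\Sivii_1$ construction since any directed preorder is in particular a preorder. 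Combined with the upper bound, this closes the \p-completeness half of the theorem.

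For optimality with respect to the number of variables, I would argue case by case. The $\Kiv_0$ claim is vacuous, since zero variables is already the minimum. For $\PrL^{\rightarrow}_1$, Theorem~\ref{thm:PrL0-in-AC1} places $\fe{\PrL_0}$ in $\ACi$, so removing the one variable used in the $\PrL^{\rightarrow}_1$ hardness proof drops the complexity strictly below \p under the standard separation $\ACi \neq \p$. For $\Sivii_1$ and $\Siv_1$, Lemma~\ref{lem:S40_in_NC1} states that $\fe{\Sivii_0}$ and $\fe{\Siv_0}$ are $\NCi$-complete, which is again below \p modulo the usual assumptions. Hence all four results are optimal in the sense stated.

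There is no genuine obstacle here: every component has already been established, and the proof is a bookkeeping step that lines up each \p-hardness result with its modal logic and pairs it with the matching subPtime upper bound for one fewer variable. The only thing to be careful about is making explicit that the \p upper bound from $\KK$ does propagate to all these restricted fragments and frame classes, since that is the one statement not explicitly broken out earlier.
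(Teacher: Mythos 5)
Your proposal matches the paper's own proof: the \p upper bound comes from Theorem~\ref{thm:FisherLadner}, the hardness for $\PrL^{\rightarrow}_1$ and $\Kiv_0$ from Theorem~\ref{thm:lowerbounds_modalL}, for $\Sivii_1$ from Theorem~\ref{thm:S4.2_1-P-hard}, for $\Siv_1$ from Corollary~\ref{cor:S4_1-P-hard}, and the optimality claims from Theorem~\ref{thm:PrL0-in-AC1} and Lemma~\ref{lem:S40_in_NC1}, exactly as in the paper. Your added remarks (that the Fisher--Ladner upper bound propagates to the restricted fragments and that optimality is relative to the believed separations $\ACi,\NCi\neq\p$) are correct elaborations of what the paper leaves implicit.
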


\proof
For all these fragments the upper bound comes from Theorem~\ref{thm:FisherLadner}.
The \p-hardness 
for $\PrL^{\rightarrow}_1$, and $\Kiv_0$ comes from Theorem~\ref{thm:lowerbounds_modalL},
for $\Sivii_1$ from Theorem \ref{thm:S4.2_1-P-hard}, and
for $\Siv_1$ from Corollary~\ref{cor:S4_1-P-hard}.
The optimality for $\fe{\PrL_1^{\rightarrow}}$ follows from Theorem~\ref{thm:PrL0-in-AC1} where we show $\fe{\PrL_0} \in \ACi$.
For $\fe{\Sivii_1}$ and $\fe{\Siv_1}$ it follows from Lemma~\ref{lem:S40_in_NC1} where \NCi-completeness for $\fe{\Sivii_0}$ and $\fe{\Siv_0}$ is shown.
\qed

%As lower bound for \fe{\PrL_0} we give \NLog (Lemma~\ref{lem:PrL_0-NLhard}).
%A completeness result is still open.
Notice that $\fe{\IPC_1}$ and $\fe{\KC_1}$ are the only cases where model checking
for intuitionistic logics is easier than for its modal companions $\fe{\Siv_1}$ and $\fe{\Sivii_1}$.

For the following results the optimality is still open.

\begin{thm}\label{thm:nopt-modal-results}
The model checking problem is \p-complete for 
$\Sivii^{\rightarrow}$, $\Siv^{\rightarrow}$, and $\Kiv^{\rightarrow}_1$.
\end{thm}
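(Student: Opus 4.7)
The plan is essentially a bookkeeping step: I would assemble the matching upper and lower bounds that have already been proved in the paper, in direct parallel to the proof of Theorem~\ref{thm:opt-modal-results}. For the upper bound, I would invoke Theorem~\ref{thm:FisherLadner}, which places $\fe{\KK}$ in \p; as remarked in the discussion immediately following that theorem, this bound transfers uniformly to every modal logic considered here (\Kiv, \Siv, \Sivii, \PrL) and to their strictly implicational and variable-bounded sub-fragments, since restricting the class of admissible frames or the permitted syntactic shape of the input formula can only make the decision problem easier.

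For the matching lower bounds, I would appeal to Theorem~\ref{thm:lowerbounds_modalL}, whose enumeration of \p-hard fragments already contains exactly the three fragments $\Sivii^{\rightarrow}$, $\Siv^{\rightarrow}$, and $\Kiv^{\rightarrow}_1$ that appear in the present statement. Those hardness results were obtained by pushing Theorems~\ref{thm:KC-P-hard} and \ref{thm:FPL1-P-hard} (together with the corresponding Corollaries~\ref{cor:implIPC-BPL-p-hard} and \ref{cor:implBPL1-p-hard} for \IPC and \BPL) through the modified G\"odel–Tarski translation $\GT'$ of Lemma~\ref{lem:Goedel-Tarski}, which preserves satisfaction in Kripke models, does not introduce new propositional variables, and maps implicational intuitionistic formulas into strictly implicational modal formulas.

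Combining these two observations closes the gap in both directions, so the theorem follows as an immediate corollary. There is no genuine technical obstacle; the only thing to verify, and this has already been done when setting up $\GT'$, is that the translation stays within the strictly implicational fragment and does not inflate the variable count beyond what each target fragment permits. The resulting proof is therefore of the same one-line form as the proof of Theorem~\ref{thm:opt-intu-results}: cite Theorem~\ref{thm:FisherLadner} for the upper bound and Theorem~\ref{thm:lowerbounds_modalL} for the lower bound.
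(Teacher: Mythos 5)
Your proposal matches the paper's proof exactly: the paper also just cites Theorem~\ref{thm:FisherLadner} for the \p upper bound and Theorem~\ref{thm:lowerbounds_modalL} for the \p-hardness of $\Sivii^{\rightarrow}$, $\Siv^{\rightarrow}$, and $\Kiv^{\rightarrow}_1$. The extra remarks you add about how those lower bounds were obtained via $\GT'$ are accurate but not needed for this one-line argument.
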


\proof
For all these fragments the upper bound comes from Theorem~\ref{thm:FisherLadner} and the \p-hardness comes from Theorem~\ref{thm:lowerbounds_modalL}.
\qed

Completeness results for $\fe{\Sivii^{\rightarrow}}$ and $\fe{\Siv^{\rightarrow}}$ with a bounded number of variables 
and for $\Kiv^{\rightarrow}_0$ are still open.

\begin{figure}[t]  %results modal logic

\begin{center}
\renewcommand{\arraystretch}{1.5}\vspace{1.2ex}%
\begin{tabular}{|l|ccc|}
\hline
 & \multicolumn{3}{c|}{number of variables} \\
 & unbounded &  1 & 0 \\
\hline
\KK & \multicolumn{3}{c|}{\p-complete$^{\rightarrow}$} \\
\hline
\Kiv & \multicolumn{2}{c}{\p-complete$^{\rightarrow}$} & \p-complete\\
\hline
\multirow{2}{*}{\PrL} & \multicolumn{2}{c}{\multirow{2}{*}{\p-complete$^{\rightarrow}$}} & in \ACi \\[-2px]
& & & \NLog-hard$^{\rightarrow}$ \\
\hline
\Siv & \p-complete$^{\rightarrow}$ & \p-complete & \NCi-complete \\
\hline
\Sivii & \p-complete$^{\rightarrow}$ & \p-complete & \NCi-complete \\
\hline
\end{tabular}%
\renewcommand{\arraystretch}{0}

\caption{Complexity of the model checking problem for the modal companions. \protect\\
(The $^{\rightarrow}$ indicates that the result holds for the strictly implicational fragment.)}
\label{fig:ML_results}
\end{center}
\end{figure}

Another semantics for intuitionistic logics is the class of finite trees that are reflexive and transitive.
This is a subclass of the intuitionistic Kripke models we used and also sound and complete for \IPC.
It is open whether the model checking problem for $\IPC$ over this tree-semantics is \p-hard or below \p, 
and it also remains open for the other \p-complete model checking problems of this work.

%%%%%%%%%%%%%%%%%%%%%%%%%%%%%%%%%%%%%%%%%%%%%%%%%%%%%%%%%%%%%%%%%%%%%%%%%%%%%%%%%%%%%%%%%%%%%%%%%%%%%%%%%%%%
% Acknowledgements
%%%%%%%%%%%%%%%%%%%%%%%%%%%%%%%%%%%%%%%%%%%%%%%%%%%%%%%%%%%%%%%%%%%%%%%%%%%%%%%%%%%%%%%%%%%%%%%%%%%%%%%%%%%%

\vspace{3ex}
\textbf{Acknowledgements.}
The authors thank Vitezslav Svejdar for helpful discussions about intuitionistic logic,
and Thomas Schneider for his support.
The authors specially thank an anonymous referee of the Workshop on Reachability Problems 2010
for her/his idea that led to Theorem~\ref{thm:S4.2_1-P-hard}.
The authors like to thank also the anonymous referees of LMCS for their helpful comments.

%%%%%%%%%%%%%%%%%%%%%%%%%%%%%%%%%%%%%%%%%%%%%%%%%%%%%%%%%%%%%%%%%%%%%%%%%%%%%%%%%%%%%%%%%%%%%%%%%%%%%%%%%%%%
% Bibliographie
%%%%%%%%%%%%%%%%%%%%%%%%%%%%%%%%%%%%%%%%%%%%%%%%%%%%%%%%%%%%%%%%%%%%%%%%%%%%%%%%%%%%%%%%%%%%%%%%%%%%%%%%%%%%

\bibliographystyle{abbrv}

%%%%%%%%%%%%%%%%%%%%%%%%%%%%%%%%%%%%%%%%%%%%%%%%%%%%%%%%%%%%%%%%%%%%%%%%%%%%%%%%%%%%%%%%%%%%%%%%%%%%%%%%%%%%
% Anhang
%%%%%%%%%%%%%%%%%%%%%%%%%%%%%%%%%%%%%%%%%%%%%%%%%%%%%%%%%%%%%%%%%%%%%%%%%%%%%%%%%%%%%%%%%%%%%%%%%%%%%%%%%%%%

\appendix
\section{}
\label{appendixA}

\noindent
\textbf{Theorem~\ref{thm:KC2-P-hard}.}~
\textit{The model checking problem for $\KC_2$ and for $\IPC_2$ is \p-hard.}

\noindent
%\begin{thm}\label{thm:KC2-P-hard}
%The model checking problem for $\KC_2$ and for $\IPC_2$ is \p-hard.
%\end{thm}

\proof
We show $\fe{\IPC^{\iImpl}}\mlogred\fe{\KC_2}$.
Then \p-hardness for $\fe{\KC_2}$ and $\fe{\IPC_2}$ follows from Corollary~\ref{cor:implIPC-BPL-p-hard}.
The construction is similar to the one given by Rybakov \cite[Theorem 4]{Rybakov06} for the \PSPACE-completeness of the validity problem for $\IPC_2$.
First of all we construct formulas with two variables which can be used for replacing the variables in arbitrary \Klasse{IL} formulas.
We call them \textit{replacement formulas}.
Then we give \textit{generic models} which have for every replacement formula a unique maximal refuting state\footnote{
	In $\Model{M} = (W,R,\xi)$ the state $w$ \textit{refutes} $\varphi$ if $\Model{M},w \not \imodels \varphi$.
	A state $w \in W$ is a \textit{maximal refuting state} of $\varphi$ if for all $v \in W \setminus \{w\}$ with $(w,v) \in R$ it holds that $\Model{M},v \imodels \varphi$.}.
For a given instance of $\fe{\IPC}^{\iImpl}$ we transform the formula by replacing the variables with the replacement formulas and as model we take the union of the given model and a suitable generic model.
%With Claim \ref{claim:IPC2redu} we show that this is a correct reduction. 
This union eventually is a $\KC_2$ model.

The construction---especially the base of the inductive definition of the replacement formulas---is very technical.
Let $p$ and $q$ be the variables used in $\KC_2$. 
Figure~\ref{fig:IPC2_P-haerte_Modell_1} shows the top of the generic model.
There, one can see in which states the variables $p$ and $q$ are satisfied
(due to the valuation function of the model),
and which are the maximal refuting states of the formulas to be defined in the sequel.
The essential idea is that every replacement formula has exactly one state that is its maximal refuting state.
%The essential idea is that every state (except the top most state $c$)
%is the maximal refuting state of exactly one of the replacement formulas.
%
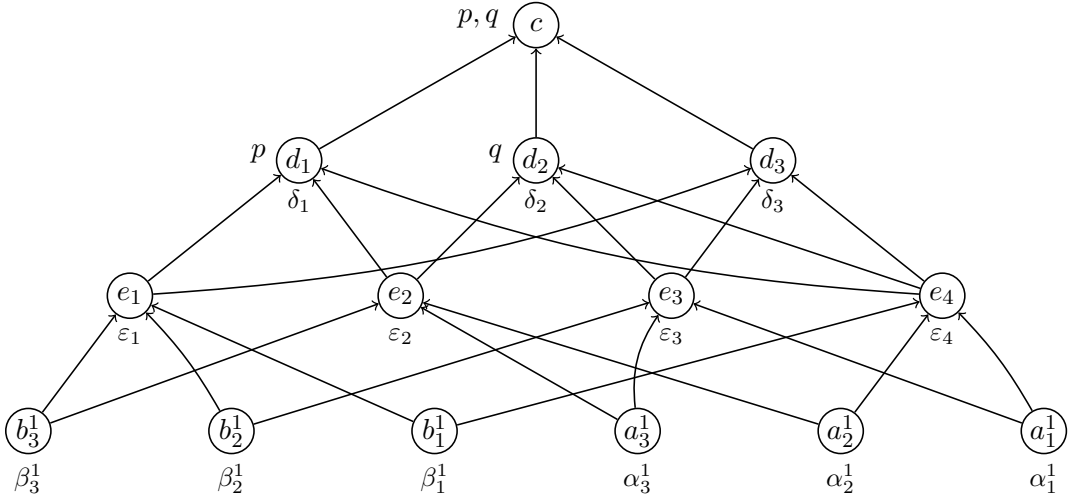
\begin{figure}[ht]
\hrulefill
\vspace{1ex}

\begin{tikzpicture}[
 scale=0.9,
 -,
 auto,
 node distance=1.2cm,
 semithick,
 state/.style={style=circle, draw=black, minimum size=6mm, inner sep=0mm},
 empty/.style={style=circle, draw=white, minimum size=0mm, inner sep=0mm},
 txt/.style={style=rectangle},]

 \node[state] (c) at (5.5,6) {$c$};
 \node[txt] at (4.65,6.1) {$p,q$};
 
 \node[state] (d1) at (2,4) {$d_1$};
 \node[txt] at (1.4,4.1) {$p$};
 \node[state] (d2) at (5.5,4) {$d_2$};
 \node[txt] at (4.9,4.1) {$q$};
 \node[state] (d3) at (9,4) {$d_3$};
 
 \node[state] (a10) at (-0.5,2) {$e_1$};
 \node[state] (a20) at (3.5,2) {$e_2$};
 \node[state] (b10) at (7.5,2) {$e_3$};
 \node[state] (b20) at (11.5,2) {$e_4$};

 \node[state] (b31) at (-2,0) {$b_3^1$};
 \node[state] (b21) at (1,0) {$b_2^1$};
 \node[state] (b11) at (4,0) {$b_1^1$};
 \node[state] (a31) at (7,0) {$a_3^1$};
 \node[state] (a21) at (10,0) {$a_2^1$};
 \node[state] (a11) at (13,0) {$a_1^1$};

\begin{small}
 \node[txt] at (2,3.4) {$\delta_1$};
 \node[txt] at (5.5,3.4) {$\delta_2$}; 
 \node[txt] at (9,3.4) {$\delta_3$};

 \node[txt] at (-0.5,1.4) {$\eps_1$};
 \node[txt] at (3.5,1.4) {$\eps_2$};
 \node[txt] at (7.5,1.4) {$\eps_3$};
 \node[txt] at (11.5,1.4) {$\eps_4$};

 \node[txt] at (-2,-0.7) {$\beta_3^1$};
 \node[txt] at (1,-0.7) {$\beta_2^1$};
 \node[txt] at (4,-0.7) {$\beta_1^1$};
 \node[txt] at (7,-0.7) {$\alpha_3^1$};
 \node[txt] at (10,-0.7) {$\alpha_2^1$};
 \node[txt] at (13,-0.7) {$\alpha_1^1$};

\end{small}

 \path
 (d1) edge[->] (c)
 (d2) edge[->] (c)
 (d3) edge[->] (c)
 
 (a10) edge[->] (d1)
 (a20) edge[->] (d1)
 (a20) edge[->] (d2)
 (b10) edge[->] (d2)
 (a10) edge[->,bend right=8] (d3)
 (b10) edge[->] (d3)
 (b20) edge[->,bend left=8] (d1)
 (b20) edge[->] (d2)
 (b20) edge[->] (d3)
 
 (b31) edge[->] (a10)
 (b31) edge[->] (a20)
 (b21) edge[->,bend right=7] (a10)
 (b21) edge[->] (b10)
 (b11) edge[->] (a10)
 (b11) edge[->] (b20)
 (a31) edge[->] (a20)
 (a31) edge[->,bend left=20] (b10)
 (a21) edge[->] (a20)
 (a21) edge[->] (b20)
 (a11) edge[->] (b10)
 (a11) edge[->,bend right=7] (b20)

 ;

\end{tikzpicture}
\caption{This is the top of the generic model with states from $W_0 \cup W_1$.
Every state is labelled below with the formula that it maximally refutes. 
(Transitive and reflexive edges are not depicted.)
}
\label{fig:IPC2_P-haerte_Modell_1}
\hrulefill
\end{figure}

\textit{Construction of the replacement formulas.}
The following formulas are the base for the inductive definition of the replacement formulas. 

\vspace{1.7ex}
\begin{tabular}{lcl@{\hspace{10mm}}lcl}
$\delta_1$ & $:=$ & $p \iImpl q$ & $\delta_2$ & $:=$ & $q \iImpl p$ \hspace{12ex} $\delta_3 \hspace{1.7ex}:=\hspace{1.7ex} p \vee q$ \\[10px]
%$\delta_3$ & $:=$ & $(\delta_1 \wedge \delta_2) \rightarrow (p \wedge q)$ \\[10px]
% $\delta_2$ & $:=$ & $q \rightarrow p$ & & & \\[10px]
$\eps_1$ & $:=$ & $\delta_2 \iImpl (\delta_1 \vee \delta_3)$ & 
		$\eps_3$ & $:=$ & $\delta_1 \iImpl (\delta_2 \vee \delta_3)$ \\[3px]
$\eps_2$ & $:=$ & $\delta_3 \iImpl (\delta_1 \vee \delta_2)$ & 
		$\eps_4$ & $:=$ & $(\eps_1 \wedge \eps_2 \wedge \eps_3) \iImpl (\delta_1 \vee \delta_2 \vee \delta_3)$
\end{tabular} 
\vspace{1.7ex}\newline
Using these formulas, the first replacement formulas can be defined as follows.

\vspace{1.7ex}
\begin{tabular}{lcl@{\hspace{15mm}}lcl}
$\alpha_1^1$ & $:=$ & $(\eps_1 \wedge \eps_2) \iImpl (\eps_3 \vee \eps_4)$ & 
		$\beta_1^1$ & $:=$ & $(\eps_2 \wedge \eps_3) \iImpl (\eps_1 \vee \eps_4)$ \\[3px]
$\alpha_2^1$ & $:=$ & $(\eps_1 \wedge \eps_3) \iImpl (\eps_2 \vee \eps_4)$ & 
		$\beta_2^1$ & $:=$ & $(\eps_2 \wedge \eps_4) \iImpl (\eps_1 \vee \eps_3)$ \\[3px]
$\alpha_3^1$ & $:=$ & $(\eps_1 \wedge \eps_4) \iImpl (\eps_2 \vee \eps_3)$ & 
		$\beta_3^1$ & $:=$ & $(\eps_3 \wedge \eps_4) \iImpl (\eps_1 \vee \eps_2)$ \\
\end{tabular} 
\vspace{1.7ex}\newline
We call the upper index the \textit{level}.
The formulas on the next levels will be defined inductively.
First we define $n_1:=3$ and $n_{k+1}:=|P_k|$ where $P_k:=\{(x,y)\mid 2 \leq x,y \leq n_k\}$. 
With induction on $k$ one can show that  $|P_k|=(n_k-1)^2$.
On level $k$ we define $\alpha_i^k$ and $\beta_i^k$ for $i=1,2,\dots,n_k$.
%For a fixed $k \geq 1$ suppose that the formulas $\alpha_i^k$ and $\beta_i^k$ with $1\leq i\leq n_k$ of level $k$ are already defined. 
For the step from level $k$ to level $k+1$ we need an encoding $\langle\cdotp,\cdot\rangle_k$ from $P_k$ to $\{1,2,\ldots,(n_k-1)^2\}$ that is easy to compute and easy to decode.
%Note that $|P_k|=(n_k-1)^2$ hence it is easy to give such a bijective mapping $\langle\cdotp,\cdot\rangle_k$.
For example one can use the following:
$\langle\cdotp,\cdot\rangle_k$ maps $(i,j)$ to $(j-1)+(n_k-1)\cdot(i-2)$ for $2 \leq i,j \leq n_k$.
%\footnote{
%For example one can define a strict linear order $\lhd$ on pairs such that $(i,j) \lhd (l,m)$ iff $i+j < l+m$ or $i+j = l+m$ and $i<l$.
%This order implies an enumeration of those pairs that can be used as $g_k$, i.e. $g_2(1) = (2,2)$, $g_2(2)=(2,3)$, $g_2(3)=(3,2)$ and so on.
%}
For $k\geq 1$ the inductive definition is as follows.
Let $i,j \in \{2,3,\dots,n_k\}$. 

\begin{mathe}
	\alpha_{\langle i,j\rangle_k}^{k+1} & := & \alpha_1^k \iImpl \klauf \beta_1^k \vee \alpha_i^k \vee \beta_j^k \klzu \\
	\beta_{\langle i,j\rangle_k}^{k+1}  & := & \beta_1^k \iImpl  \klauf \alpha_1^k \vee \alpha_i^k \vee \beta_j^k \klzu 	
\end{mathe}

\textit{Construction of the generic models.}
For $t\geq1$ we define the generic models $\Model{M}^S_t=(W^S_t,R^S_t,\xi^S)$.
% that has for every formula $\alpha_i^k$ and $\beta_i^k$ a unique maximal state that refutes the formula. 

\begin{mathe}
	W_0 & := & \gklauf c,d_1,d_2,d_3,e_1,e_2,e_3,e_4\gklzu \\
	W_k & := & \gklauf a_i^k,b_i^k \mid 1 \leq i \leq n_k \gklzu ~~\text{ for } 1 \leq k \leq t \\
	W^S_t & := & {\displaystyle \bigcup_{l=0}^t} \hspace{3px} W_l
\end{mathe}
In the following we give $R^S_t$.
%Let $a_i^k$ and $b_i^k$ be states in level $k$.
The accessibility relation $R_{\Top}$ of the first layers is shown in Figure \ref{fig:IPC2_P-haerte_Modell_1}.
(Certainly we use the transitive and reflexive closure of the depicted edges.)
For states from level $\geq2$ the accessibility relation will be defined as follows.
Let $1 \leq k \leq t-1$. 

%
%\begin{mathe}
%	R_0 & := & \{(d_1,c),(d_2,c),(d_3,c),(a_1^0,d_1),(a_2^0,d_1),(b_2^0,d_1), \\[-5px]
%				&			 & \hspace{1.2ex} (a_2^0,d_2),(b_1^0,d_2),(b_2^0,d_2),(a_1^0,d_3),(b_1^0,d_3),(b_2^0,d_3)\} \\
%	R_1^a & := & \{(a_1^1,b_1^0),(a_1^1,b_2^0),(a_2^1,a_2^0),(a_2^1,b_2^0),(a_3^1,a_2^0),(a_3^1,b_1^0)\} \\
%	R_1^a & := & \{(b_1^1,a_1^0),(b_1^1,b_2^0),(b_2^1,a_1^0),(b_2^1,b_1^0),(b_3^1,a_1^0),(b_3^1,a_2^0)\} 
%\end{mathe}

%The structure constructed up to now is shown in Figure \ref{fig:IPC2_P-haerte_Modell_1}.

%\input{Bilder-IPC2_P-haerte_Modell_1}
%
\begin{mathe}
	R_{k+1}^a & := & \gklauf(a_{\langle i,j \rangle_k}^{k+1},b_1^k),(a_{\langle i,j \rangle_k}^{k+1},a_i^k),(a_{\langle i,j \rangle_k}^{k+1},b_j^k) \mid 2 \leq i,j \leq n_k \gklzu \\
	R_{k+1}^b & := & \gklauf(b_{\langle i,j \rangle_k}^{k+1},a_1^k),(b_{\langle i,j \rangle_k}^{k+1},a_i^k),(b_{\langle i,j \rangle_k}^{k+1},b_j^k)  \mid 2 \leq i,j \leq n_k \gklzu \\
	R' 				& := & R_{\Top} \hspace{1px} \cup \hspace{3px} {\displaystyle \bigcup_{l=2}^{t}} \hspace{5px} (R_l^a \cup R_l^b).
\end{mathe}
In order to make the accessibility relation transitive, we add pseudo-transitive edges.
Every state in a level is connected to every state at least two levels below. 

\begin{mathe}
	T_k	& := & W_k \times \Klauf {\displaystyle \bigcup_{l=0}^{k-2}} \hspace{5px} W_l \Klzu ~~\text{ for } k \geq 2
\end{mathe}
$T$ is the union of all pseudo-transitive edges.

\begin{mathe}
	T & := & {\displaystyle \bigcup_{l=2}^{t}} \hspace{5px} T_l
\end{mathe}
We define the accessibility relation $R^S_t$ as follows.

\begin{mathe}
	R^S_t \text{ is the reflexive closure of } T \cup R'. & &
\end{mathe}
Figure \ref{fig:IPC2_P-haerte_Modell_2} shows a cutout of $\Model{M}^S_t$.
\begin{figure}[ht]
\hrulefill
\vspace{1ex}

\begin{tikzpicture}[
 scale=0.9,
 -,
 auto,
 node distance=1.2cm,
 semithick,
 state/.style={style=circle, draw=black, minimum size=8.5mm, inner sep=0mm},
 empty/.style={style=circle, draw=white, minimum size=0mm, inner sep=0mm},
 txt/.style={style=rectangle},]

\begin{small}
 \node[state] (a-k+1-s) at (3,0) {$a_s^{k}$};
 \node[state] (b-k+1-s) at (9,0) {$b_s^{k}$};
 
 \node[state] (b-k-1) at (2,2) {$b_1^{k-1}$};
 \node[state] (b-k-j) at (4,2) {$b_j^{k-1}$};
 \node[state] (a-k-i) at (8,2) {$a_i^{k-1}$};
 \node[state] (a-k-1) at (10,2) {$a_1^{k-1}$};
  
 \node[state] (a-k-1-l) at (5,5) {$a_{\ell}^{k-2}$};
 \node[state] (b-k-1-m) at (8,5) {$b_h^{k-2}$};
 \node[state] (b-k-1-1) at (11,5) {$b_1^{k-2}$};
\end{small} 
 
 \node[empty] (1-1) at (1.6,3.2) {};
 \node[empty] (1-2) at (2,3.2) {};
 \node[empty] (1-3) at (2.4,3.2) {};
 \node[empty] (1-4) at (3.6,3.2) {};
 \node[empty] (1-5) at (4,3.2) {};
 \node[empty] (1-6) at (4.4,3.2) {};
 \node[empty] (1-7) at (9.6,3.2) {};
 \node[empty] (1-8) at (10,3.2) {};
 \node[empty] (1-9) at (10.4,3.2) {};

 \node[txt] at (2,3.3) {$\dots$};
 \node[txt] at (4,3.3) {$\dots$};
 \node[txt] at (10,3.3) {$\dots$};
 
 \path
 (a-k+1-s) edge[->] (b-k-1)
 (a-k+1-s) edge[->] (a-k-i)
 (a-k+1-s) edge[->] (b-k-j)
 (b-k+1-s) edge[->] (a-k-1)
 (b-k+1-s) edge[->] (b-k-j)
 (b-k+1-s) edge[->] (a-k-i)

 (a-k-i) edge[->] (b-k-1-1)
 (a-k-i) edge[->] (a-k-1-l)
 (a-k-i) edge[->] (b-k-1-m)

 (a-k+1-s) edge[->, dashed, draw = black!40] (b-k-1-1)
 (a-k+1-s) edge[->, dashed, draw = black!40, bend left=22] (a-k-1-l)
 (a-k+1-s) edge[->, dashed, draw = black!40] (b-k-1-m)
 (b-k+1-s) edge[->, dashed, draw = black!40, bend right=14] (b-k-1-m)
 (b-k+1-s) edge[->, dashed, draw = black!40, bend left=10] (a-k-1-l)
 (b-k+1-s) edge[->, dashed, draw = black!40, bend right=35] (b-k-1-1)

 (b-k-1) edge[-] (1-1)
 (b-k-1) edge[-] (1-2)
 (b-k-1) edge[-] (1-3)
 (b-k-j) edge[-] (1-4)
 (b-k-j) edge[-] (1-5)
 (b-k-j) edge[-] (1-6)
 (a-k-1) edge[-] (1-7)
 (a-k-1) edge[-] (1-8)
 (a-k-1) edge[-] (1-9)
;
 
\end{tikzpicture}
\caption{This is a cutout of the levels $k-2$, $k-1$ and $k$ of a generic model $\Model{M}^S_t$ where $s=\langle i,j \rangle_{k-1}$, $i=\langle \ell,h \rangle_{k-2}$, and $k \leq t$.
The dashed grey edges are the pseudo-transitive edges. 
As we show in Claim \ref{claim:KC2standardModel_prop} for example $a_i^{k-1}$ is the maximal refuting state for $\alpha_i^{k-1}$.
(Reflexive edges are not depicted.)}
\label{fig:IPC2_P-haerte_Modell_2}
\hrulefill

\end{figure}
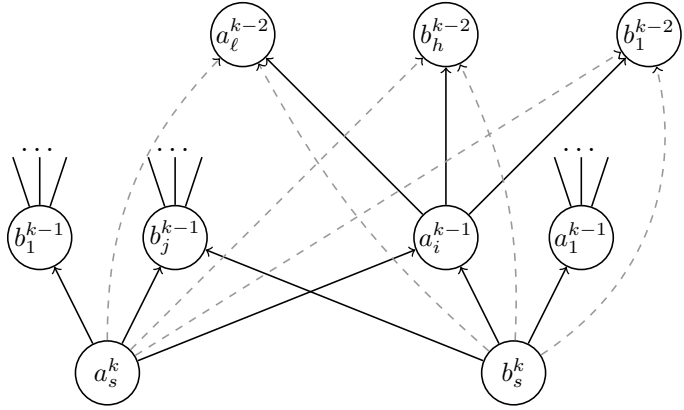 

The valuation function $\xi^S$ is defined as follows (see Figure~\ref{fig:IPC2_P-haerte_Modell_1}).

\begin{mathe}
	\xi^S(p) & := & \{c,d_1\} \\
	\xi^S(q) & := & \{c,d_2\}
\end{mathe}										
%											
%Now we can speak about the connection between the replacement formulas and the generic model an claim the following (see~\cite[Lemma 5]{Rybakov06}).
The goal of the construction is that $\alpha_i^k$ (resp. $\beta_i^k$) is not satisfied exactly in the states that see $a_i^k$ (resp. $b_i^k$).
\begin{claim} \label{claim:KC2standardModel_prop}
Let $w$ be a state of $\Model{M}^S_t$. 
Then for all $1 \leq k \leq t$ and $i\leq n_k$ it holds that 	$\Model{M}^S_t,w \not\imodels \alpha_i^k  \Leftrightarrow  (w,a_i^k) \in R^S_t$ and $\Model{M}^S_t,w \not\imodels \beta_i^k \Leftrightarrow (w,b_i^k) \in R^S_t$. %~\qedclaim
\end{claim}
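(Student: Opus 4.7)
The plan is to argue by induction on the level $k$, after first establishing two auxiliary refuting-state lemmas for the building-block formulas: (a) $\Model{M}^S_t, w \not\imodels \delta_i$ iff $(w,d_i)\in R^S_t$ for $i=1,2,3$, and (b) $\Model{M}^S_t, w \not\imodels \eps_i$ iff $(w,e_i)\in R^S_t$ for $i=1,\ldots,4$. Both reduce, via the semantics of $\iImpl$ and $\vee$ and the valuation $\xi^S$ (with $p$ true only at $c,d_1$ and $q$ true only at $c,d_2$), to identifying which successor can serve as a witness for the implication. For instance, $\delta_1=p\iImpl q$ is refuted at $w$ iff $w$ sees a successor in $\xi^S(p)\setminus\xi^S(q)$, and $d_1$ is the unique such state; similarly $e_i$ is pinned down as the unique maximal refuter of $\eps_i$ by reading off, from Figure~\ref{fig:IPC2_P-haerte_Modell_1}, which $e$-node has exactly the right pair of $d$-successors.

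With these lemmas in hand the base case $k=1$ is the same mechanism applied one layer up. Each $\alpha_i^1$ and $\beta_i^1$ is an implication whose premise is the conjunction of two $\eps$'s and whose conclusion is the disjunction of the other two. A state refutes it iff it has a successor that sees the two ``conclusion'' $e$-nodes while avoiding the two ``premise'' $e$-nodes. Since the six level-$1$ states $a_1^1,a_2^1,a_3^1,b_1^1,b_2^1,b_3^1$ each carry a distinct pair of $e$-successors (and the six formulas have been designed to match these six patterns), one reads off directly that $a_i^1$ (resp.~$b_i^1$) is the unique maximal refuter of $\alpha_i^1$ (resp.~$\beta_i^1$); transitivity extends this to every ancestor in $R^S_t$.

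For the inductive step, assume the claim at level $k$ and consider
$\alpha_{\langle i,j\rangle_k}^{k+1}=\alpha_1^k\iImpl(\beta_1^k\vee\alpha_i^k\vee\beta_j^k)$
(the $\beta$-case is symmetric). The easy direction is that $a_{\langle i,j\rangle_k}^{k+1}$ is itself a refuter: by $R_{k+1}^a$ its only successors at level $k$ are $b_1^k,a_i^k,b_j^k$, and its pseudo-transitive edges reach only levels $\le k-1$, so it does not see $a_1^k$; by the induction hypothesis it thus satisfies $\alpha_1^k$ while refuting each of the three disjuncts. Transitivity propagates this to every state in the upward $R^S_t$-closure of $a_{\langle i,j\rangle_k}^{k+1}$.

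The converse---that no state outside this closure refutes the formula---is the main obstacle. One must rule out every other potential witness $v$, and this is where the stratification of $\Model{M}^S_t$ is essential. Any refuting witness $v$ would need to simultaneously (i) avoid $a_1^k$ and (ii) see all three of $b_1^k,a_i^k,b_j^k$. The pseudo-transitive edges $T_\ell=W_\ell\times\bigcup_{l\le\ell-2}W_l$ force every state at level $\ge k+2$ to see all of $W_k$, so (i) fails there; states at level $\le k$ see no other same-level node and so fail (ii); at level $k+1$ inspection of $R_{k+1}^a,R_{k+1}^b$ (and the constraint $i,j\ge 2$) shows that $b_{\langle s,t\rangle_k}^{k+1}$ always sees $a_1^k$ (violating (i)) and $a_{\langle s,t\rangle_k}^{k+1}$ with $(s,t)\ne(i,j)$ misses either $a_i^k$ or $b_j^k$ (violating (ii)). Hence $a_{\langle i,j\rangle_k}^{k+1}$ is the unique witness, and the contrapositive yields the claim. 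The delicate point throughout is the interplay between the pseudo-transitive closure (which would otherwise allow spurious refuters at higher levels) and the sparse direct edges that isolate one refuter per formula.
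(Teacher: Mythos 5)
Your overall strategy---induction on the level $k$, with the inductive step isolating $a_{\langle i,j\rangle_k}^{k+1}$ as the only possible witness by playing the sparse edges $R_{k+1}^a,R_{k+1}^b$ against the pseudo-transitive edges $T$---is exactly the intended argument (the paper itself gives no details and only points to Rybakov's Lemma~5 and ``induction on $k$''), and that part of your sketch is sound once the induction hypothesis is available in its ``iff'' form.

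The genuine gap is the base, which you dispose of by ``reading off'' Figure~\ref{fig:IPC2_P-haerte_Modell_1}. Your auxiliary lemma (a) is already false for $\delta_3$: unlike $\delta_1,\delta_2$ it is not an implication, and $\delta_3=p\vee q$ is refuted exactly at the states outside $\xi^S(p)\cup\xi^S(q)$---for instance at $e_2$, which does not see $d_3$---so its refutation set is not characterized by seeing $d_3$. Worse, lemma (b) cannot be established for the formulas as written: $\eps_2=\delta_3\iImpl(\delta_1\vee\delta_2)$ is an intuitionistic theorem (a state forcing $p\vee q$ forces $p$ or $q$, and by monotonicity then forces $q\iImpl p$ resp.\ $p\iImpl q$), so no state of $\Model{M}^S_t$ refutes it even though $(e_2,e_2)\in R^S_t$; in turn $e_2$ becomes an additional witness for refuting $\eps_4$, so refuting $\eps_4$ is not equivalent to seeing $e_4$ (e.g.\ $b_3^1$ sees $e_2$ but not $e_4$). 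Your level-$1$ base case then collapses: $\alpha_2^1=(\eps_1\wedge\eps_3)\iImpl(\eps_2\vee\eps_4)$ (and likewise $\alpha_3^1,\beta_3^1$) is valid, hence never refuted, while $(a_2^1,a_2^1)\in R^S_t$. So the step you wave through is precisely where the proof must do real work: the maximal-refuter property has to be verified formula by formula at the bottom two layers, working with Rybakov's actual base formulas (the printed $\delta_3,\eps_2,\eps_4$ do not match the behaviour the figure ascribes to $d_3,e_2,e_4$), before the induction you describe can be launched.
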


The proof can be proceeded by an induction on $k$ (similar as~\cite[Lemma 5]{Rybakov06}).
Hence for every formula $\alpha_i^k$ and $\beta_i^k$ exists a unique maximal state in $\Model{M}^S_t$ that refutes this formula.

\textit{Reduction from $\IPC^{\iImpl}$ model checking problem.}
For a given instance $\langle \varphi,\Model{M},w \rangle$ of $\fe{\IPC^{\iImpl}}$ we show how to translate $\Model{M}$ and $\varphi$ into $\Model{M}^2$---a model over two variables---and $\varphi^2$---a formula with two variables. 
Let $\varphi$ be a formula with variables $v_1,v_2,\dots,v_m$ and $\Model{M}=(W,R,\xi)$ a model.
%We assume that $\xi(x)=\emptyset$ if $x \notin \{v_1,v_2,\dots,v_m\}$ and $W^S \cap W = \emptyset$.
We choose the smallest $k>1$ such that $n_k > m$.
%For the translation $\varphi^2$ of $\varphi$ we choose the smallest $k$ with $n_k > m$.
%(For technical reasons $k$ has to be greater than 5.)
To define $\varphi^2$ we replace every occurrence of $v_i$ in $\varphi$ by $\alpha_i^k \vee \beta_i^k$.

\begin{mathe}
	\varphi^2 & := & \varphi[v_1/\alpha_1^k \vee \beta_1^k][v_2/\alpha_2^k \vee \beta_2^k]\ldots[v_m/\alpha_m^k \vee \beta_m^k]
\end{mathe}
%
%With \cite[Lemma 6]{Rybakov06} it follows that $|\varphi^2| \geq c \cdot |\varphi|^2$ (for a constant $c$ independent of $\varphi$) and $\varphi^2$ can be constructed with a logspace algorithm.
Since $k \leq 1+\log(m)$ one can construct $\varphi^2$ in logspace.
%We define now $\Model{M}^S_k = (W^S_k,R^S_k,\xi^S)$.
%This is the suitable part of the generic model $\Model{M}^S$ that consists of the first $k+1$ levels of $\Model{M}^S$.
%
%\begin{mathe}
%	W^S_k & := & {\displaystyle \bigcup_{l=0}^{k}} \hspace{5px} W_l \\
%	R^S_k & := & R^S \text{ restricted to } W^S_k \\
%\end{mathe}
%
We build the translation $\Model{M}^2=(W^2,R^2,\xi^2)$ as a union of $\Model{M}$ and $\Model{M}^S_k$. 

\begin{mathe}
	W^2 & := & W \cup W^S_k
\end{mathe}
The accessibility relation $R^2$ is constructed such that if $w \notin \xi(v_i)$, then $(w,a_i^k) \in R^2$ and $(w,b_i^k) \in R^2$.
Hence $w$ refutes $\alpha_i^k \vee \beta_i^k$---the translation of $v_i$.

\begin{mathe}
	R_{\xi} & := & \gklauf(w,a_i^k),(w,b_i^k) \mid w \in W \setminus \xi(v_i)\gklzu \cup \gklauf(w,a_{m+1}^k),(w,b_{m+1}^k) \mid w \in W\gklzu
\end{mathe}
In order to make $R^2$ transitive and give a logspace computable construction we connect every state of $\Model{M}$ with every state in $\Model{M}^S$ on level $k-1$ and below. 

\begin{mathe}
	R_{\emph{trans}} & := & W \times {\displaystyle \bigcup_{l=0}^{k-1}} \hspace{5px} W_l
\end{mathe}	
We define the accessibility relation $R^2$ as follows.

\begin{mathe}
	R^2 \text{ is the reflexive closure of } R^S_k \cup R \cup R_{\xi} \cup R_{\emph{trans}}. & &
\end{mathe}
%The accessibility relation $R^2$ is the reflexive closure of $R^S \cup R \cup R_{\xi} \cup R_{\emph{trans}}$.
As valuation function we use 

\begin{mathe}
	\xi^2 & := & \xi^S.
\end{mathe}
The valuation function $\xi$ of $\Model{M}$ is simulated by the edges between $\Model{M}$ and $\Model{M}^S$ from $R_{\xi}$.
The model $\Model{M}^2$ is a $\KC_2$ model because for every state $u\in W^2$ it holds that $(u,c) \in R^2$. 
%Now we can show that this is a correct reduction.
(The proof of the following claim bases on the proof of \cite[Lemma 7]{Rybakov06}.)
\begin{claim}\label{claim:IPC2redu}
For all $w \in W$ it holds that $\Model{M},w \imodels \varphi$ if and only if $\Model{M}^2,w \imodels \varphi^2$.
\end{claim}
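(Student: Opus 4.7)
The plan is to prove the claim by structural induction on $\varphi$, after first lifting Claim~\ref{claim:KC2standardModel_prop} from $\Model{M}^S_k$ to the combined model $\Model{M}^2$. Concretely, I would show that for every $u \in W^2$ and every level-$k$ index $i$, $\Model{M}^2, u \not\imodels \alpha_i^k$ iff $(u, a_i^k) \in R^2$, and analogously for $\beta_i^k$. For $u \in W^S$ this transfers immediately from Claim~\ref{claim:KC2standardModel_prop}, since the construction of $\Model{M}^2$ adds no edges leaving a state of $W^S$ and $\xi^2$ agrees with $\xi^S$ on $W^S$; hence the successor set of $u$ and the evaluation of every subformula of $\alpha_i^k$ and $\beta_i^k$ coincide in $\Model{M}^2$ and $\Model{M}^S_k$. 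For $u \in W$, one direction is persistence: if $(u, a_i^k) \in R^2$, then the refuting successor $a_i^k$ forces $u$ to refute $\alpha_i^k$ as well. The reverse direction follows the pattern of \cite[Lemma~5]{Rybakov06}, using that $a_i^k$ is the unique maximal state refuting $\alpha_i^k$ and that the pseudo-transitive edges in $R_{\emph{trans}}$ preserve access to $a_i^k$ whenever such access is required.

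With the extended claim in hand, the base cases are short. For $\varphi = \bot$ both sides are vacuously false. For $\varphi = v_i$, so $\varphi^2 = \alpha_i^k \vee \beta_i^k$, the definition of $R_\xi$ gives $(w, a_i^k) \in R^2 \iff (w, b_i^k) \in R^2 \iff w \notin \xi(v_i)$, which combined with the extended claim yields $\Model{M}^2, w \imodels \alpha_i^k \vee \beta_i^k$ iff $w \in \xi(v_i)$ iff $\Model{M}, w \imodels v_i$. For the induction step $\varphi = \varphi_1 \iImpl \varphi_2$, I would split every $R^2$-successor $v$ of $w$ into cases. If $v \in W$, then $(w, v) \in R^2$ iff $(w, v) \in R$, because $R_\xi$, $R_{\emph{trans}}$, and $R^S_k$ all contribute only edges whose targets lie in $W^S$; the inductive hypothesis then transports the implication between $\Model{M}$ and $\Model{M}^2$ directly.

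The remaining case, successors $v \in W^S$, is the main obstacle. Since no edges leave $W^S$ in $\Model{M}^2$, the truth value of any $\psi^2$ at such a $v$ is governed entirely by $\Model{M}^S_k$. The plan is to associate with each $v \in W^S$ the canonical Boolean-like valuation of the variables obtained from the extended claim, declaring $v_i$ satisfied at $v$ exactly when $(v, a_i^k), (v, b_i^k) \notin R^2$, and then to prove by a secondary induction that $\Model{M}^2, v \imodels \psi^2$ holds precisely when $\psi$ is intuitionistically true at the corresponding state of a suitable subframe of the generic model. The universality built into the inductive construction of the replacement formulas then guarantees that any implication witnessed anywhere inside this induced structure is already forced in $\Model{M}^S_k$, which is enough to close the forward direction of the induction step at successors in $W^S$; the reverse direction reduces, as noted above, to the $v \in W$ case.

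The main technical hurdle I expect is verifying this universality property carefully in the presence of the pseudo-transitive edges of $R_{\emph{trans}}$ and the pseudo-transitive closure inside $R^S_k$, both of which replace the full transitive closure used in Rybakov's polynomial-time version. I would have to check that these shortcut edges neither create spurious successors undermining the ``$a_i^k$ is the unique maximal refuter'' property of the generic model nor collapse distinct levels in a way that invalidates the inductive construction of the replacement formulas. This verification parallels the analogous checks already carried out for pseudo-transitive closures in the proof of Theorem~\ref{thm:KC-P-hard}, so I expect the same case-analytic approach to suffice; once it is in place, the induction closes and the claim follows.
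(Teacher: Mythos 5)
Your overall skeleton (structural induction on $\varphi$, a lifted version of Claim~\ref{claim:KC2standardModel_prop}, the base case via $R_{\xi}$, and the observation that $R^2$ restricted to $W\times W$ is just $R$ up to reflexivity) matches the paper. The genuine gap is exactly at the point you yourself flag as the main obstacle: successors $v\in W^S_k$ in the step $\varphi=\gamma\iImpl\delta$. The fact you need there is simple and you never state it: \emph{every} state of $W^S_k$ satisfies \emph{every} replacement formula $\alpha_i^k\vee\beta_i^k$, because by Claim~\ref{claim:KC2standardModel_prop} a level-$k$ state $a_j^k$ (resp.\ $b_j^k$) refutes only the single formula $\alpha_j^k$ (resp.\ $\beta_j^k$) among all level-$k$ formulas, and states strictly below level $k$ refute none of them; since $W^S_k$ is closed under $R^2$-successors, it follows that $\delta^2$ is satisfied at every state of $W^S_k$, so a state refuting $\delta^2$ (the counter-witness to $\gamma^2\iImpl\delta^2$) must lie in $W$, where the induction hypothesis and $R^2\cap(W\times W)=R$ finish the argument. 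Your substitute for this---an induced ``Boolean-like valuation'' on $W^S$ plus a secondary induction and an unexplained ``universality'' property---is not only vaguer but set up incorrectly: you declare $v_i$ true at $v$ exactly when neither $a_i^k$ nor $b_i^k$ is $R^2$-accessible from $v$, which corresponds to $\alpha_i^k\wedge\beta_i^k$, not to the actual substitution $\alpha_i^k\vee\beta_i^k$. At $v=a_i^k$ the substituted formula is true (via $\beta_i^k$) while your induced valuation makes $v_i$ false, so the claimed correspondence ``$\Model{M}^2,v\imodels\psi^2$ iff $\psi$ holds at the corresponding state'' already fails at atoms, and nothing in the sketch repairs it. (You also attach this analysis to the ``forward'' direction, but the direction that needs it is from $\Model{M}^2,w\not\imodels\varphi^2$ to $\Model{M},w\not\imodels\varphi$.)

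A second, smaller soft spot: for the base case you rely on the hard direction of your lifted claim for $u\in W$ only by appeal to ``the pattern of Rybakov's Lemma~5.'' The paper instead argues it concretely by unfolding $\alpha_l^k$ and $\beta_l^k$ one level and using that every $u\in W$ sees $a_{m+1}^k$ and $b_{m+1}^k$, hence refutes both $\alpha_1^{k-1}$ and $\beta_1^{k-1}$, which forces the refuting witnesses to be the level-$k$ states $a_l^k$ and $b_l^k$ and thus forces $(w,a_l^k),(w,b_l^k)\in R_{\xi}$, i.e.\ $w\notin\xi(v_l)$. Your deferral is acceptable as a plan, but combined with the flawed treatment of $W^S$-successors it leaves the core of the proof unestablished; your worry about pseudo-transitive edges, by contrast, is a non-issue once the global truth of the $\alpha_i^k\vee\beta_i^k$ on $W^S_k$ is in place.
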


\noindent
\emph{Proof of Claim.}
We prove this by induction on the construction of $\varphi$.
For the initial step let $\varphi = v_l$ be a variable with $1\leq l\leq m$, hence $\varphi^2 = \alpha_l^k \vee \beta_l^k$.
If $\Model{M},w \not\imodels v_l$, then $w$ is via $R^2$ (resp. $R_{\xi}$) connected to $a_l^k$ and $b_l^k$.
With Claim~\ref{claim:KC2standardModel_prop} it follows that $\Model{M}^2,w \not\imodels \alpha_l^k \vee \beta_l^k$. 
Now assume that $\Model{M}^2,w \not\imodels \alpha_l^k \vee \beta_l^k$ with $w \in W$.
Since 

\begin{mathe}
	\alpha_l^k & = & \alpha_1^{k-1} \iImpl \klauf \beta_1^{k-1} \vee \alpha_i^{k-1} \vee \beta_j^{k-1} \klzu \text{\hspace{1.8ex} and} \\
	\beta_l^k & = & \beta_1^{k-1} \iImpl \klauf \alpha_1^{k-1} \vee \alpha_i^{k-1} \vee \beta_j^{k-1} \klzu
\end{mathe}
for $\langle i,j \rangle_{k-1}=l$ it holds that there are some states $w',w'' \in W^2$ with $(w,w') \in R^2$ and $(w,w'') \in R^2$ and 

\begin{mathe}
	\Model{M}^2,w' \hspace{0.47ex} \imodels \alpha_1^{k-1} & \text{and} & 
					\Model{M}^2,w' \hspace{0.47ex} \not\imodels \beta_1^{k-1} \vee \alpha_i^{k-1} \vee \beta_j^{k-1} \text{\hspace{1.8ex} and} \\
	\Model{M}^2,w'' \imodels \beta_1^{k-1} & \text{and} &
					\Model{M}^2,w'' \not\imodels \alpha_1^{k-1} \vee \alpha_i^{k-1} \vee \beta_j^{k-1}.
\end{mathe}
From Claim~\ref{claim:KC2standardModel_prop} follows that $\Model{M}^2,a_{m+1}^k \not\imodels \beta_1^{k-1}$ and $\Model{M}^2,b_{m+1}^k \not\imodels \alpha_1^{k-1}$.
Hence it follows for every $u \in W$ that $\Model{M}^2,u \not\imodels \alpha_1^{k-1}$ and $\Model{M}^2,u \not\imodels \beta_1^{k-1}$ because $(u,a_{m+1}^k) \in R^2$ and $(u,b_{m+1}^k) \in R^2$.
Therefore $w',w'' \in W^S_k$.
Furthermore note that $w'$ and $w''$ are in level $k$ of $\Model{M}^S_k$ because $w'$ refutes $\alpha_l^k$ and $w''$ refutes $\beta_l^k$
and with Claim~\ref{claim:KC2standardModel_prop} it follows that $w'=a_l^k$ and $w''=b_l^k$.
%Exactly one state in level $k$ refutes $\alpha_l^k$ and one $\beta_l^k$, hence $w'=a_l^k$ and $w''=b_l^k$.
From $(w,a_l^k) \in R^2$, $(w,b_l^k) \in R^2$, and the construction of $R^2$ it follows that $w \notin \xi(v_l)$.
%From $wR^2a_l^k$ and the construction of $\Model{M}^2$ it follows that there is a state $u \in W$ with $wRu$, $uR^*a_l^k$ and $u \notin \xi(v_l)$.
%Because of the monotonicity of intuitionistic models it holds that $w \notin \xi(v_l)$.
Hence $\Model{M},w \not\imodels v_l$.

For the induction step let $\varphi = \gamma \star \delta$ with $\star \in \{\wedge, \vee, \iImpl\}$.
We show that $\Model{M}^2,w \imodels (\gamma \star \delta)^2$ if and only if $\Model{M},w \imodels \gamma \star \delta$.
(Note that $(\gamma \star \delta)^2 = \gamma^2 \star \delta^2$.)
For the cases that $\star = \wedge$ and $\star = \vee$ this follows directly from the definition of the satisfaction relation $\imodels$.
Now consider $\varphi = \gamma \iImpl \delta$ and $\Model{M},w \not\imodels \varphi$.
Then there is some state $w'\in W$ with $\Model{M},w' \imodels \gamma$ and $\Model{M},w' \not\imodels \delta$. 
By induction hypothesis it follows that $\Model{M}^2,w' \imodels \gamma^2$ and $\Model{M}^2,w' \not\imodels \delta^2$.
Hence $\Model{M}^2,w \not\imodels \varphi^2$. 
For the other proof direction let $w \in W$ with $\Model{M}^2,w \not\imodels \varphi^2$. 
Then there is a $w'\in W^2$ with $(w,w')\in R^2$ and $\Model{M}^2,w' \imodels \gamma^2$ and $\Model{M}^2,w' \not\imodels \delta^2$.
The formulas $\alpha_1^k \vee \beta_1^k, \alpha_2^k \vee \beta_2^k, \dots, \alpha_m^k \vee \beta_m^k$ are satisfied in every state of level $k$ and below, because every state in level $k$ refutes exactly one $\alpha_i^k$ respectively $\beta_i^k$ formula. 
(The states below level $k$ satisfy all formulas on level $k$.)
In $\delta^2$ every variable $v_i$ from $\delta$ is replaced by the disjunction of an $\alpha_i^k \vee \beta_i^k$.
Hence $\delta^2$ is satisfied in every state in $W^S_k$.
In order that $w'$ refutes $\delta^2$, it holds that $w'\in W$.
By induction hypothesis we obtain that $\Model{M},w' \imodels \gamma$ and $\Model{M},w' \not\imodels \delta$.
From $w,w' \in W$ and $(w,w') \in R^2$ it follows that $(w,w') \in R$.
Hence $\Model{M},w \not\imodels \varphi$.  ~\qedclaim

The reduction function is the mapping

\begin{mathe}
	\langle \varphi,\Model{M},w \rangle & \longmapsto & \langle \varphi^2,\Model{M}^2,w \rangle
\end{mathe}
where $\langle \varphi,\Model{M},w \rangle$ is an instance of $\fe{\IPC^{\iImpl}}$.
Claim \ref{claim:IPC2redu} shows that $\Model{M},w \imodels \varphi$ if and only if $\Model{M}^2,w \imodels \varphi^2$.
It follows directly from the construction that this is a logspace reduction.

\qed

\end{document}